\title{The Price of Connectivity Augmentation\\ on Planar Graphs\thanks{The work in this paper originated at the 11th Annual Workshop on Geometry and Graphs held at the Bellairs Research Institute, 8--15 March, 2024.}}
\author{Hugo A. Akitaya\thanks{University of Massachusetts Lowell, USA. \texttt{hugo\_akitaya@uml.edu, frederick\_stock@student.uml.edu}. Research supported in part by NSF grant CCF-2348067.}
\and
Justin Dallant\thanks{Aarhus University, Denmark.\texttt{justindallant@cs.au.dk}}
\and
Erik D. Demaine\thanks{CSAIL, Massachusetts Institute of Technology, Cambridge, MA, USA. \texttt{edemaine@mit.edu}}
\and
Michael Kaufmann\thanks{Wilhelm-Schickard Institut für Informatik, University of Tübingen, Germany. \texttt{michael.kaufmann@uni-tuebingen.de}}
\and
Linda Kleist\thanks{Department of Informatics, Universit\"at Hamburg, Germany. \texttt{linda.kleist@uni-hamburg.de}}
\and
Frederick Stock\footnotemark[2]
\and
Csaba D. T\'oth\thanks{Department of Mathematics, California State University Northridge, Los Angeles, CA; and Department of Computer Science, Tufts University, Medford, MA, USA. \texttt{csaba.toth@csun.edu}.{Research supported in part by the NSF award DMS-2154347.}}
\and
Torsten Ueckerdt\thanks{Karlsruhe Institute of Technology, Germany. \texttt{torsten.ueckerdt@kit.edu}. {Research supported by the German Research Foundation DFG-520723789.}}
}
\let\realbfseries=\bfseries
\def\bfseries{\realbfseries\boldmath}
\newcommand{\opt}{\textsf{OPT}}
\newcommand{\lcr}{\mathrm{lcr}}
\newcommand{\planarsat}{planar~3-SAT}
\newtheorem{theorem}{Theorem}
\newtheorem{lemma}{Lemma}
\newtheorem{observation}{Observation}
\Crefname{observation}{Observation}{Observations}
\newtheorem{corollary}{Corollary}
\newtheorem{question}{Question}
\newtheorem{proposition}{Proposition}
\newtheorem{claim}{Claim}
\DeclarePairedDelimiter\ceil{\lceil}{\rceil}
\DeclarePairedDelimiter\floor{\lfloor}{\rfloor}
\begin{document}

\maketitle

\begin{abstract}
Given two classes of graphs, $\mathcal{G}_1\subseteq \mathcal{G}_2$, and a $c$-connected graph $G\in \mathcal{G}_1$, we wish to augment $G$ with a smallest cardinality set of new edges $F$ to obtain a $k$-connected graph $G'=(V,E\cup F) \in \mathcal{G}_2$. 
In general, this is the $c\to k$ connectivity augmentation problem. Previous research considered variants where $\mathcal{G}_1=\mathcal{G}_2$ is the class of planar graphs, plane graphs, or planar straight-line graphs. In all three settings, we 
prove that the $c\to k$ augmentation problem is NP-complete when $2\leq c<k\leq 5$. 

However, the connectivity of the augmented graph $G'$ is at most $5$ if $\mathcal{G}_2$ is limited to planar graphs. We initiate the study of the $c\to k$ connectivity augmentation problem for arbitrary $k\in \mathbb{N}$, where $\mathcal{G}_1$ is the class of planar graphs, plane graphs, or planar straight-line graphs, and $\mathcal{G}_2$ is a beyond-planar class of graphs: 
$\ell$-planar, $\ell$-plane topological, or $\ell$-plane geometric graphs. We obtain tight bounds on the tradeoffs between the desired connectivity $k$ and the local crossing number $\ell$ of the augmented graph $G'$. We also show that our hardness results apply to this setting.

The connectivity augmentation problem for triangulations is intimately related to edge flips; and the minimum augmentation problem to the flip distance between triangulations. We prove that it is NP-complete to find the minimum flip distance between a given triangulation and a 4-connected triangulation, settling an open problem posed in 2014, and present an EPTAS for this problem. 
\end{abstract}

\section{Introduction}
\label{sec:intro}

Connectivity augmentation is a classical problem in combinatorial optimization~\cite{FJ15,Jordan95}: Given a $c$-connected graph $G=(V,E)$ and a positive integer $k$, find a minimum set $F$ of new edges such that the graph $G'=(V,E\cup F)$ is $k$-connected. For undirected and unweighted graphs, Jackson and Jord\'an~\cite{JacksonJ05a} gave a polynomial-time algorithm for any $c$ and $k$ (previous results addressed $c=2,3,4$~\cite{EswaranT76,Hsu00,WatanabeN93}; see also~\cite{Vegh11}). For the weighted version, finding a minimum-cost augmentation is NP-complete~\cite{FredericksonJ81}, APX-hard, and the best approximation ratio is 1.5~\cite{TraubZ22}. In the remainder of this paper, we consider unweighted undirected graphs.

We investigate connectivity augmentation over planar graphs in three different settings, \Cref{fig:Intro} depicts an illustrative example:
In the \emph{abstract graph setting}, $G$ is planar, and $G'$ must also be planar. For this variant, NP-completeness~\cite{KantB91} and a $\frac53$-approximation are known for $k=2$~\cite{FialkoM98,GutwengerMZ09b}. Moreover, there exists an $O(n^3)$-time $\frac54$-approximation algorithm for $c=2$ and $k=3$~\cite{KantB91}. In the \emph{topological graph setting}, $G$ and $G'$ are plane graphs (with a fixed embedding) and $G'$ extends the given embedding: the problem remains NP-complete for $c=0$ and $k=2$, but there is a linear-time algorithm for $c=1$ and $k=2$~\cite{GutwengerMZ09b,GutwengerMZ09a}. 
Finally, in the \emph{geometric graph setting}, both $G$ and $G'$ are planar straight-line graphs (PSLG, for short), and $G'$ extends the given embedding~\cite{AbellanasOHTU08,AkitayaINSTW19, AloupisBCDFM15,HurtadoT13,RutterW12}. The three settings reveal deep structural properties of embedded graphs as the edges of the input graph $G$ form topological or geometric obstructions to possible new edges.  

    \begin{figure}[ht]
    \centering
    \includegraphics[page=1]{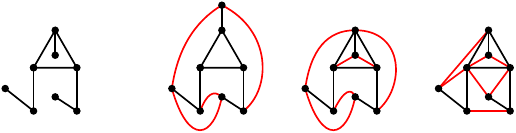}
    \caption{A planar straight-line graph, and its 3-connected minimum augmentation in the abstract, topological, and geometric settings with 4, 6, and 7 additional edges, respectively.
    }
    \label{fig:Intro}
    \end{figure}

However, planarity severely limits the feasible values of $k$: Every planar graph has a vertex of degree at most 5, and hence augmentation is infeasible for any $k>5$. Furthermore, for $n\geq 3$ points in convex position, every PSLG has a vertex of degree at most 2. The connectivity augmentation problem for PSLGs is already challenging for $k=1,2$; and requires additional assumptions about the point configurations for $k=3$; see~\cite{GarciaHHTV09}.
 
In this paper, we also consider beyond-planar variants of the problem: The augmented graph $G'$ need not be planar but ``close'' to planarity. In particular, we study the trade-offs between connectivity and local crossing number, defined as follows. An abstract graph $G$ is \emph{$\ell$-planar} if it admits a drawing in the plane such that each edge has at most $\ell$ crossings. The \emph{local crossing number} of $G$, denoted $\lcr(G)$, is the minimum integer $\ell$ such that $G$ is $\ell$-planar. 

We also consider connectivity augmentation restricted to \emph{topological graphs} and \emph{geometric graphs}. In both cases, $V$ is a set of $n$ points in the plane in general position (i.e., no three points are collinear), the edges are Jordan arcs in topological graphs and straight-line segments in geometric graphs. In both cases, the drawing of $G$ is fixed, and the \emph{local crossing number} is defined as the maximum number of crossings per edge \emph{in that drawing}. 

\subparagraph{Asymptotic Bounds and Tradeoffs.}
We start with an asymptotically tight tradeoff between connectivity and local crossing number (regardless of the number of new edges): Every $k$-connected graph is $O(k^2)$-planar (\Cref{pp:general}). For geometric graphs on $n$ points in convex position, we determine the precise tradeoff (\Cref{pp:circulant}). The asymptotically tight bound can be attained even if we augment a given plane triangulation in the topological setting (\Cref{thm:topological}). The geometric setting turns out to be much more constrained: We show that any PSLG can be augmented to a 3-connected 5-planar geometric graph, and this bound is the best possible (\Cref{thm:1}). However, augmenting a PSLG to a 4-connected geometric graph may already increase the local crossing number to $\Omega(n)$  (\Cref{thm:fan}).

\subparagraph{Algorithmic Results.}
\cref{tab:complexity} presents an overview of the computational complexity of finding minimum solutions; each field has an entry for the abstract, topological, and geometric setting (in this order); a single entry means that the same result holds in all three settings.

\begin{table}[ht]
    \centering
    \begin{tabular}{c|c|c|c|c|c}
         From/To & 1 & 2 & 3 & 4 & 5\\
         \hline
        Tree & --- & P~\cite{Kant96},\cite{GutwengerMZ09b},\cite{AkitayaINSTW19}  & P~\cite{Kant96}, \textbf{P}, ? & ? & ?\\
         \hline
        0 & P & NPC~\cite{KantB91},\cite{GutwengerMZ09a},\cite{Rappaport89} & NPC~\cite{RutterW12} & NPC~\cite{RutterW12} & NPC~\cite{RutterW12}\\
         \hline
        1 & --- & NPC\cite{KantB91}, P\cite{GutwengerMZ09b}, P\cite{AkitayaINSTW19}  & NPC\cite{KantB91},\cite{RutterW12},\cite{AkitayaINSTW19} &  NPC\cite{KantB91},\cite{RutterW12},\cite{AkitayaINSTW19}& NPC\cite{KantB91},\cite{RutterW12},\cite{AkitayaINSTW19}\\
         \hline
        2 & --- & --- & \textbf{NPC} & \textbf{NPC}$(\downarrow)$ & \textbf{NPC}$(\downarrow)$\\
         \hline
        3 & --- & --- & --- & \textbf{NPC}  & \textbf{NPC}$(\downarrow)$\\

         \hline
        4 & --- & --- & --- & --- & \textbf{NPC}
    \end{tabular}
    \caption{Summary of complexity results. 
    Bold results are proved in this paper. Arrows indicate that a result is implied by the one pointed to.}
    \label{tab:complexity}
\end{table}
\subparagraph{Trees.}
Kant~\cite{Kant96} gave an $O(n)$-time algorithm for the 2- and 3-connectivity augmentation of outerplanar graphs with $n$ vertices in the abstract setting, and Gutwenger et al.~\cite{GutwengerMZ09b} gave an $O(n(1+\alpha(n))$-time algorithm for the $1\to 2$ augmentation for connected plane graphs. For a PSLG tree on $n$ vertices, there is an $O(n^4)$-time algorithm for 2-connectivity augmentation~\cite{AkitayaINSTW19}. (Biconnectivity augmentation can also be solved optimally over outerplanar graphs, where both $G$ and $G'$ must be outerplanar~\cite{OlaverriHNT10}.) 
We give an $O(n)$-time algorithm for the minimum augmentation of a plane tree to a 3-connected plane graph in the topological settings (\Cref{thm:tree}).
In the geometric setting, 3-connectivity augmentation of trees remains open.  

\subparagraph{Geometric Triangulations.}  For a straight-line triangulation on $n$ points in convex position, we give a combinatorial characterization of 3- and 4-connected augmentations.
In this setting, the dual graph is a tree.
We use the dual tree for a dynamic programming algorithm to compute a minimum augmentation to a $ k$-connected $\ell$-planar geometric graph for $k\in \{3,4\}$ and constant $\ell\in \mathbb{N}$ in $O(n)$ time (\Cref{thm:dp}).  

\subparagraph{Hardness.} In most settings, however, the $c\to k$ connectivity augmentation problem is NP-complete. We prove the following theorem using several reductions from \planarsat. 
\begin{theorem}
    \label{thm:hardness} Given a $c$-connected planar graph $G$ and an integer $\tau$, deciding whether there is an edge set of cardinality at most $\tau$ that augments $G$ into a planar $k$-connected $G'$ is NP-complete when $c\in\{2,3,4\}$, $k\in\{3,4,5\}$, and $c<k$.
    The problem remains NP-complete if $G$ is a topological or geometric graph and $G'$ is required to be a 1-plane graph extending the embedding of $G$.
\end{theorem}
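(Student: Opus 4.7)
The plan is to reduce from \planarsat\ for three ``base'' pairs $(c,k) \in \{(2,3), (3,4), (4,5)\}$ in the abstract planar setting, and to derive hardness for the remaining pairs by monotonicity. Membership in NP is immediate in each setting: guess the augmenting edge set $F$, together with concrete arcs or segments in the topological or geometric cases, and verify planarity or 1-planarity and $k$-connectivity in polynomial time.

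For each base reduction, I would build three families of small gadgets -- a variable, a clause, and a wire gadget -- each a $c$-connected planar graph with a fixed plane embedding. The variable gadget is arranged so that the cheapest ways to destroy all of its $c$-vertex separators fall into exactly two ``patterns'' using disjoint sets of output edges, which I identify with the truth values true and false. The wire gadget rigidly propagates the chosen pattern to the adjacent clause gadgets. A clause gadget is designed to save one augmenting edge whenever at least one of its three incoming wires delivers a satisfying-literal edge. Choosing the budget $\tau$ to match the total gadget repair cost under the hypothesis that every clause is satisfied then yields a reduction that is feasible iff the formula is satisfiable. The main design challenge is the $(4,5)$ variable gadget, since controlling precisely which 4-separators of a 4-connected planar graph are destroyed by a new edge is delicate; a natural starting point is a plane triangulation with carefully inserted vertices and face subdivisions producing two prescribed families of 4-separators.

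The remaining pairs $(2,4), (2,5), (3,5)$ follow from the base reductions by monotonicity in $c$: since every $c$-connected graph is also $c'$-connected for $c' < c$, any hard instance of $(c,k)$-augmentation is simultaneously an instance of $(c',k)$-augmentation with the same budget, so $(c,k)$-hardness implies $(c',k)$-hardness. Thus $(3,4)$-hardness yields $(2,4)$, and $(4,5)$-hardness yields both $(3,5)$ and $(2,5)$.

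To carry the three base reductions into the topological and geometric 1-plane settings, I would fix the planar embedding of each gadget (realizing the whole reduction on a polynomial-size grid by F\'ary's theorem for the geometric case). The key lemma to establish is: in the given embedding, every augmenting edge drawn with at most one crossing destroys the same set of $c$-vertex separators as some planar augmenting edge, so no ``1-plane shortcut'' can save budget. Proving this is the main obstacle, since one must rule out crossing edges that cut across a wire or through a gadget's interior to break two distant separators simultaneously. Because the required analysis is local, a finite case analysis per gadget type should suffice, and since the geometric setting is more constrained than the topological one, the argument using the straight-line realization handles both.
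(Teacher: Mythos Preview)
Your overall plan---reduce from \planarsat\ via variable/literal(wire)/clause gadgets for the three diagonal pairs $(2,3)$, $(3,4)$, $(4,5)$ and invoke monotonicity in $c$ for the rest---is exactly what the paper does, and your NP-membership and monotonicity arguments are fine. Two smaller points where the paper differs from your expectations: the $(4,5)$ case is \emph{not} the bottleneck you anticipate---the paper obtains it as a near-mechanical substitution of $W_5$-based building blocks for the $W_4$-based ones used in $(3,4)$, with essentially the same lemmas; and in the $(2,3)$ case you must deal with the fact that a $2$-connected planar graph need not have a unique embedding, which the paper handles by attaching small $K_4$ ``rigidifiers'' along gadget boundaries. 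Your proposal does not flag this issue.

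The substantive divergence is your treatment of the 1-plane variants. You propose to keep $G$ unchanged and prove a ``no 1-plane shortcut'' lemma: every 1-plane augmenting edge kills only separators that some planar edge already kills. That lemma is delicate and may well be false for generic gadgets (a single crossing can let a new edge jump between faces that share no boundary). The paper sidesteps this entirely: it \emph{modifies} $G$ by adding extra ``blocking'' edges (the dashed edges in the figures) so that the intended augmenting edges each cross exactly one existing edge---hence any feasible augmentation is genuinely 1-plane rather than plane. Crucially, the lower bounds in the paper's gadget analysis are degree-based (each designated vertex has degree exactly $c$ and must receive a new edge), so they hold verbatim regardless of whether crossings are permitted. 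This makes the 1-plane hardness a one-line corollary rather than a fresh case analysis; your route could in principle work but is strictly harder to push through.
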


Our reductions presented in \Cref{sec:hardness} (from \planarsat) work similar to \cite{AkitayaINSTW19} in the sense that the truth assignment of a variable is encoded in one of two possible perfect matchings on a subset of vertices whose degree must be augmented. 
The proof in \cite{AkitayaINSTW19} is specific to the geometric setting, and it is not trivial to generalize it to the other two settings. 
Moreover, $c$-connected gadgets for $c\in\{2,3,4\}$ required delicate new designs to ensure that the desired behavior in the reduction~\cite{AkitayaINSTW19} go through, especially when $G$ is a triangulation.

\subparagraph{Minimum-Degree Augmentation.}
In a $k$-connected graph $G$ on $n>k$ vertices, the minimum vertex degree $\delta(G)$ is at least $k$. All new hardness reductions in \Cref{tab:complexity} hinge on increasing the degree of a set of special vertices from $c$ to $k$, using the minimum number of new edges, while maintaining planarity. 
As a consequence, we also prove that the following \emph{min-degree augmentation} problems are NP-complete: 

\begin{corollary}    
Given a planar graph $G$ with $\delta(G)=c$ and an integer $k$, find a minimum set $F$ of new edges such that $G'=(V,E\cup F)$ is planar and satisfies $\delta(G')\geq k$. 
\end{corollary}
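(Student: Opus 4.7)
My plan is as follows. Membership in NP is immediate: given a candidate edge set $F$, one checks in polynomial time that $G'=(V,E\cup F)$ is planar and that every vertex has degree at least $k$.

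For NP-hardness, I would reuse the reductions from \planarsat{} constructed for \Cref{thm:hardness} essentially unchanged. The point is that those reductions already produce instances whose bottleneck is precisely a min-degree condition: the graph $G$ contains a set $S$ of ``special'' vertices of degree exactly $c$, while every other vertex already has degree at least $k$, and a degree-sum calculation gives the lower bound $|F|\geq \lceil |S|(k-c)/2\rceil$ for any augmentation raising the minimum degree to $k$. The hardness gadgets are designed so that this lower bound is attained by a planar edge set precisely when the \planarsat{} instance is satisfiable: cheap augmentations pick, at each variable gadget, one of two perfect matchings encoding the truth value, and the clause gadgets force global consistency with this assignment. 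Setting the threshold $\tau$ equal to $\lceil |S|(k-c)/2\rceil$ then gives a YES-instance of min-degree augmentation iff the \planarsat{} instance is satisfiable.

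The step I expect to be the main obstacle is verifying that dropping the $k$-connectivity requirement does not open up cheaper augmentations that would break the equivalence with satisfying assignments. A priori, a min-degree augmentation could use ``non-local'' edges that the connectivity version rules out, or could route edges between distant gadgets without paying the consistency cost the clause gadgets impose. To close this gap, I would argue that any planar edge set meeting the tight degree bound $\lceil |S|(k-c)/2\rceil$ must be supported on the variable gadgets, since every endpoint of every edge in $F$ must lie in $S$ (otherwise it is incident to a vertex already of degree $\geq k$ and wastes part of the tight degree budget), and planarity inside each gadget leaves only the two matchings identified in the proof of \Cref{thm:hardness}. Once this is established, the gadget analysis of \Cref{sec:hardness} transfers verbatim, yielding NP-completeness of min-degree augmentation for the same parameter range as in \Cref{thm:hardness}.
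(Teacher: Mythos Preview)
Your overall plan---reuse the \planarsat{} reductions from \Cref{thm:hardness} and observe that they are really about raising degrees of special vertices---is exactly the paper's approach. But the assertion that the constructions carry over ``essentially unchanged'' has a concrete gap, and the paper itself flags it.

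Look at the $2\to 3$ reduction. The variable gadget is a cycle of diamonds; the vertices that must be augmented for $3$-connectivity are the \emph{central} vertices of each diamond, but these already have degree~$3$ in the diamond. The boundary vertices, once identified with neighbouring diamonds, also have degree at least~$3$. Thus the entire variable gadget already has minimum degree~$3$, and in the min-degree problem it requires zero new edges. Your tight-bound argument (``every endpoint of an edge in $F$ must lie in $S$'') then forces $F$ to avoid the variable gadgets altogether, so the two perfect matchings that were supposed to encode a truth assignment simply disappear. Each literal can now independently choose its ``true'' or ``false'' pattern, every clause is trivially satisfied at the tight budget, and the reduction collapses. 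The paper states this explicitly in the discussion after \Cref{cor:2-3-deg}: ``the reduction as it stands will not work, as the variable gadgets are already min-degree-$3$.''

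The paper's fix is small but necessary: contract the shared edge of every diamond, so that each diamond becomes a single degree-$2$ vertex sitting between its two boundary vertices. Now the variable gadget genuinely contains degree-$2$ vertices, the two matchings reappear, and your counting argument goes through. (The paper also notes why this contracted version could \emph{not} have been used for the connectivity reduction: it would introduce an unwanted $2$-cut at a false literal.) For the $3\to 4$ and $4\to 5$ reductions you are on firmer ground, since the central vertices of the modified $W_4$'s and $W_5$'s already have degree exactly~$c$; but even there you should verify, rather than assert, that the clause-gadget accounting (three edges iff some literal is true, four otherwise) survives once only degrees---not cuts---are being enforced.
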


\subparagraph{Flip Distance to a 4-Connected Triangulation.} A \emph{flip} (a.k.a. \emph{edge exchange}) in a combinatorial triangulation (i.e., edge-maximal planar graph) is an elementary operation that removes one edge and inserts a new edge, producing another triangulation. The \emph{flip diameter} on $n$ vertices is the minimum number of flips that can transform any combinatorial triangulation into any other. Current bounds on the flip diameter use a canonical form, which is a triangulation with a Hamiltonian cycle. Since 4-connected triangulations are Hamiltonian~\cite{Tutte56} (see also~\cite{KawarabayashiO15,OzekiZ18,ThomasY94} for extensions to beyond-planar graphs),
the flip distance to a Hamiltonian or a 4-connected triangulation was studied~\cite{BoseV11,CardinalHKTW18}. Bose et al.~\cite{bose2014making} posed the following problems: Given a combinatorial triangulation $T$, find the minimum number of flips that can take it to a 4-connected (resp., Hamiltonian) triangulation. Our hardness reduction for the $3\to 4$ connectivity augmentation problem can be modified to show that finding the flip distance to 4-connectedness is also NP-complete (\Cref{cor:flip-hardness}).

A \emph{simultaneous flip} in a combinatorial triangulation is an operation that performs one or more edge flips simultaneously (each of which is a valid flip, and they can be performed independently)~\cite{GaltierHNPU03}. Bose et al.~\cite{BoseCGMW07} showed that every triangulation on $n\geq 6$ vertices can be transformed into a 4-connected triangulation with a single simultaneous flip; the number of flipped edges is at most $\floor{(2n-7)/3}$~\cite{CardinalHKTW18}. We show that it is NP-complete to find a minimum simultaneous flip to 4-connectivity (\Cref{cor:flip-hardness}). 
By inserting all simultaneously flipped edges (without removing any edges), we obtain a 4-connected 1-planar graph. In fact, finding a minimum augmentation to a 4-connected 1-planar graph in a triangulation is equivalent to finding a minimum simultaneous flip to 4-connectivity~\cite[Section~3]{BoseCGMW07}. 

\begin{corollary}
    \label{cor:flip-hardness}
    Given a combinatorial triangulation $T$ and an integer $\tau$, it is NP-complete to decide whether there exists a sequence of at most $\tau$ flips that transforms $T$ into a 4-connected triangulation. It is also NP-complete to decide whether there is a single simultaneous flip of cardinality at most $\tau$ that transforms $T$ into a 4-connected triangulation.
\end{corollary}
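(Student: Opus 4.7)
The plan is to derive both parts of \Cref{cor:flip-hardness} from the $c=3$, $k=4$ case of \Cref{thm:hardness}, using the variant where $G$ is a combinatorial triangulation and $G'$ is a $1$-plane graph extending $G$. In the reduction behind that case, every feasible augmentation edge crosses exactly one edge of $G$, and the candidate augmentation edges live in pairwise-disjoint quadrilaterals, each centered on a designated low-degree vertex that must be saturated to reach 4-connectivity. Membership in NP is immediate for both problems: a flip sequence, or a set of simultaneously flipped edges, is a polynomial-size certificate, and 4-connectedness of the resulting triangulation is checkable in polynomial time.

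For the simultaneous-flip statement, I would invoke the equivalence of Bose et al.~\cite[Section~3]{BoseCGMW07} quoted in the excerpt: a set $S$ of edges of a triangulation $T$ constitutes a valid simultaneous flip to a 4-connected triangulation iff drawing the ``other diagonal'' of every edge of $S$ on top of $T$ yields a 4-connected 1-planar augmentation of $T$. The right-hand side is precisely the $3\to 4$ augmentation problem of \Cref{thm:hardness} in the 1-plane topological setting, so the left-hand side is NP-complete.

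For the sequential-flip statement, I would reuse the same construction and show that the minimum number of flips $\phi^{\star}$ equals the minimum number of augmentation edges $\tau^{\star}$. The upper bound $\phi^{\star}\le\tau^{\star}$ is easy: because the augmentation edges sit in pairwise-disjoint quadrilaterals, performing the corresponding flips in any order is valid and results in a 4-connected triangulation. For the matching lower bound $\phi^{\star}\ge\tau^{\star}$, introduce the potential $\Phi=\sum_{v\in D}\max(0,\,4-\deg(v))$, where $D$ is the set of designated degree-$3$ vertices in $G$. Initially $\Phi=|D|$, and $\Phi$ must reach $0$. A single flip changes four vertex degrees by $\pm 1$, so $\Phi$ drops by at most $2$ per flip; a local gadget analysis should tighten this to $\Phi$ dropping by at most $1$ per useful flip and $0$ per any other flip, giving $\phi^{\star}\ge\tau^{\star}$.

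The main obstacle is precisely this gadget-locality argument. A sequential flip sequence can, in principle, pass through exotic intermediate triangulations (flip-and-unflip pairs, or rerouting through edges that never appear in any candidate augmentation), and one must verify that the gadgets inherited from the $3\to 4$ reduction are robust enough that such detours never save flips: the removed edges in useful flips lie strictly inside gadgets, they neither create new degree-$3$ vertices nor break 3-connectedness, and no single flip can usefully touch two gadgets at once. Once this locality is in hand, NP-hardness transfers directly from \Cref{thm:hardness}, completing both parts of the corollary.
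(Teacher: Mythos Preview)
Your treatment of the simultaneous-flip half matches the paper's: both invoke the Bose et al.\ equivalence between minimum simultaneous flips to 4-connectivity and minimum 1-planar augmentation of a triangulation, which is exactly the $3\to 4$ case of \Cref{thm:hardness} on a triangulated input.

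For the sequential half, however, your degree potential does not close the argument. In the construction, the \emph{intended} optimal flips (the dotted edges between adjacent $W_4$'s) connect two designated degree-$3$ central vertices simultaneously, so $\Phi$ drops by $2$ there; your proposed tightening ``$\Phi$ drops by at most $1$ per useful flip'' is therefore false, and the resulting bound $\phi^\star\ge |D|$ would overshoot $\tau^\star$. Falling back to ``$\Phi$ drops by at most $2$'' gives only $\phi^\star\ge |D|/2$, which is not tight once the clause and literal gadgets are included (there, some optimal augmenting edges touch only one designated vertex, so $\tau^\star > |D|/2$). On top of this, $\Phi$ is not monotone under flips---a later flip can drop a designated vertex from degree $4$ back to $3$---so the ``exotic intermediate triangulations'' obstacle you flag is genuine for this potential and would need a separate amortization argument you do not supply.

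The paper avoids all of this. After the dotted edges are added, $G$ is a triangulation whose separating triangles are precisely the neighbor-triangles of the degree-$3$ central vertices, and ``flip a dotted edge'' is literally ``insert the augmenting edge and delete the old diagonal''. The local lower-bound arguments of Lemmas~\ref{lemma:3-4Variable} and~\ref{lemma:3-4Clause} then transfer verbatim with ``add an edge incident to the central vertex'' replaced by ``flip an edge of its surrounding separating triangle''. Equivalently---and this is the viewpoint made explicit in \Cref{lemma:ptas-reduction}---any flip sequence reaching 4-connectivity must flip at least one edge of every separating triangle of the \emph{original} $G$ (otherwise that triangle survives and still separates), so its length is at least the minimum edge-hitting-set for those triangles; on these instances that minimum equals $\tau^\star$. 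This gives $\phi^\star\ge\tau^\star$ without tracking intermediate degrees, and $\phi^\star\le\tau^\star$ follows because the optimal augmentation's flips sit in pairwise disjoint quadrilaterals and can be performed sequentially.
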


In \Cref{sec:ptas}, we give an EPTAS for the first of the two problems  (\Cref{thm:PTAS}). We also give some intuition for why our techniques do not directly apply to the second problem.

\subparagraph{Further Related Work.}
While we aim for increasing the \emph{vertex-connectivity} of a given graph, analogous problems can be considered for \emph{edge-connectivity}~\cite{Al-JubehIRSTV11,CenLP22,JohansenRT25,NagamochiE98,Toth12,WatanabeN87}. Some of our techniques might extend to edge-connectivity, however, such adaptations are not immediate and are beyond the scope of this paper.  

We have measured the distance to planarity by the local crossing number. However, there are many other notions of beyond-planarity. For example, Garc\'{\i}a et al.~\cite{GarciaHHTV09} studied the connectivity augmentation problem for \emph{geometric biplane graphs}, which admit a straight-line drawing in the plane and a 2-coloring of the edges such that no two edges of the same color cross  (a.k.a.\ graphs with geometric thickness 2). They showed that every sufficiently large point set in the plane in general position admits a 5-connected geometric biplane graph, and the connectivity bound 5 cannot be improved. Furthermore, every PSLG (other than a wheel or a fan) can be augmented into a 4-connected geometric biplane graph, and the connectivity bound 4 cannot be improved. 
\section{Tradeoff between Connectivity and Local Crossing Number}
\label{sec:Observations}

As a warmup exercise, we start with the following basic question, which corresponds to augmenting the empty graph to a $k$-connected graph with local crossing number at most $\ell$. 
\begin{question}\label{Q1}
Given $\ell\in \mathbb{N}$, what is the maximum connectivity of an $\ell$-planar graph?
\end{question}
\begin{proposition}\label{pp:general}
    For every $\ell\in \mathbb{N}$, the connectivity of an $\ell$-planar graph is $O(\sqrt{\ell})$. This bound is the best possible and can be attained by a geometric graph on any point set in general position. Every set of $n\geq k+1$ points in the plane (in general position) admits a $k$-connected $O(k^2)$-planar geometric graph; and this bound is the best possible. 
\end{proposition}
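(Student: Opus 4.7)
The plan is to establish the three claims in turn. For the upper bound on connectivity of $\ell$-planar graphs, I would invoke the classical density bound of Pach and Tóth: every $n$-vertex $\ell$-planar graph has at most $O(\sqrt{\ell}\cdot n)$ edges. Consequently the average degree, and hence the minimum degree, is $O(\sqrt{\ell})$, so the vertex-connectivity, which never exceeds the minimum degree, is $O(\sqrt{\ell})$ as well.

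For the matching lower bound and for the third claim, I would use a single geometric construction on any point set in general position. Given $n\geq k+1$ points, order them by $x$-coordinate as $p_1,\ldots,p_n$ and let $G$ be the $k$-th power of the Hamiltonian path $p_1p_2\cdots p_n$; that is, $p_ip_j$ is an edge of $G$ iff $1\leq |i-j|\leq k$. To verify $k$-connectivity, observe that after deleting any set $S$ with $|S|\leq k-1$, two consecutive surviving vertices in the $x$-order are separated by fewer than $k$ deleted vertices, so their indices differ by at most $k$; thus they are adjacent in $G-S$, and by chaining such hops $G-S$ remains connected. For the local crossing number, the sorted order makes every edge $x$-monotone, so two edges cross at most once and can cross only if their four endpoints interleave in $x$-order. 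For a fixed edge $p_ip_{i+d}$ with $1\leq d\leq k$, the edges $p_ap_b$ with $1\leq b-a\leq k$ that interleave with it (i.e.\ $i<a<i+d<b$ or symmetrically) number $O(k^2)$ by a routine interval-overlap count, regardless of the precise $y$-coordinates. This gives a $k$-connected $O(k^2)$-planar geometric graph.

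Tightness of the $O(k^2)$ bound in the third claim is immediate from the first: any $k$-connected graph must be $\Omega(k^2)$-planar, else its connectivity would be $o(\sqrt{k^2})=o(k)$. Setting $k=\Theta(\sqrt{\ell})$ in the construction simultaneously shows that $\Omega(\sqrt{\ell})$-connectivity is attainable by an $\ell$-planar geometric graph on any point set in general position, matching the first claim. The only nontrivial step is the $O(k^2)$ crossing count, but once the points are sorted by $x$-coordinate and all edges are $x$-monotone, this is an elementary counting argument; the sharper part is ensuring the argument uses no hypothesis beyond general position of the points.
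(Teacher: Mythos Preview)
Your proposal is correct, and the lower-bound construction (the $k$-th power of the $x$-sorted Hamiltonian path) is exactly the paper's. Two differences are worth flagging.

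For the upper bound you cite the Pach--T\'oth density bound $|E|=O(\sqrt{\ell}\,n)$ and conclude $\kappa(G)\leq\delta(G)\leq O(\sqrt{\ell})$. The paper instead applies the Crossing Lemma directly: from $|E|\geq kn/2$ it obtains $\mathrm{cr}(G)=\Omega(k^3 n)$ and pigeonholes to find an edge with $\Omega(k^2)$ crossings, hence $\ell=\Omega(k^2)$. Your route is shorter since the Pach--T\'oth bound already packages that argument; both are valid. Your $k$-connectivity verification (consecutive survivors after deleting $k-1$ vertices differ in index by at most $k$) is also a bit more direct than the paper's induction on prefixes.

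One small slip in your crossing argument: two straight-line segments on $x$-sorted points can cross even when their index intervals are \emph{nested} rather than interleaved (if the inner endpoints lie on opposite sides of the outer segment). The correct necessary condition is merely that the $x$-projections overlap, which is what the paper uses. This does not affect your $O(k^2)$ count, since for a fixed edge of index-span at most $k$ the edges of span at most $k$ with overlapping $x$-projection still number $O(k^2)$; just replace ``interleave'' by ``overlap'' and the argument goes through unchanged.
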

\begin{proof}
    Let $G=(V,E)$ be a $k$-connected $\ell$-planar graph on $n=|V|$ vertices. Then $\deg(v)\geq k$ for every $v\in V$, and so $|E|\ge kn/2$. 
    Assume w.l.o.g.\ that $k\geq 8$, hence $|E|\geq 4n$. By the Crossing Lemma~\cite{ACNS82,Bungener024a,L83}, the total number of crossings in any drawing of $G$ is $\mathrm{cr}(G)=\Omega(|E|^3/n^2) = \Omega(k^3n)$. 
    By the pigeonhole principle,  there is an edge with $\Omega(\mathrm{cr}(G)/|E|)=\Omega(|E|^2/n^2)=\Omega(k^2)$ crossings; hence $\ell=\Omega(k^2)$ and $k=O(\sqrt{\ell})$.

    For a matching lower bound, let $V$ be the set of $n\geq k+1$ points in the plane in general position. Assume w.l.o.g.\ that they have distinct $x$-coordinates. Sort the points by $x$-coordinate and connect every point to its $k$ neighbors on the left and right. The first $k+1$ points induce a complete graph, which is $k$-connected. By induction, the first $i$ points also induce a $k$-connected graph for $k<i\leq n$. Any edge can only cross other edges with overlapping $x$-projections, so every edge has $O(k^2)$ crossings.  
\end{proof}
Determining the precise dependency between $k$ and $\ell$ remains largely open. One exception is the case $k=1$: The maximum connectivity of a 1-planar graph is 7. Every 1-planar graph on $n\geq 3$ vertices has at most $4n-8$ edges~\cite{PachT97}, so the minimum degree is at most $7$, hence its connectivity is at most 7. This degree bound is tight: Biedl~\cite{Biedl21} constructed 1-planar graphs with minimum degree 7, which happen to be 7-connected. However, the connectivity of \emph{optimal} 1-planar graphs (which have precisely $4n-8$ edges) is either 4 or 6~\cite{FujisawaSS18,Suzuki10}; there are edge-maximal 1-planar graphs with $\frac{8}{3}n-O(1)$ edges, where the connectivity is only 2~\cite{Hudak12}. 

\subparagraph{Geometric graphs.} 
We can determine the exact tradeoff between connectivity and local crossing number for geometric graphs when the vertices are in convex position.
For integers $k,n\in \mathbb{N}$, $k<n/2$, the \emph{$k$-circulant graph}
$G=(V,E)$ is defined on the vertex set $V=\{v_0,\ldots , v_{n-1}\}$, where $v_iv_j\in E$ if and only if the cyclic distance between $i$ and $j$ is at most $k$, that is, $\min\{|i-j|,n-|i-j|\}\leq k$; see \Cref{fig:circulant}.

\begin{figure}[htbp]
  \centering
    \includegraphics{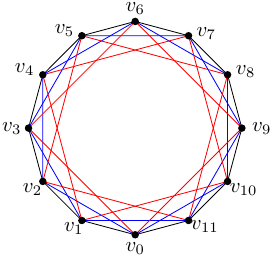}
    \caption{The $3$-circulant graph on $n=12$ vertices.}
    \label{fig:circulant}
\end{figure}

\begin{restatable}{proposition}{ppcirculant}\label{pp:circulant} 
        For every $k\in \mathbb{N}$ and for $n\geq 2k$ points in convex position in the plane, the $k$-circulant graph is $2k$-connected and is $(k^2-k)$-planar. This is the best possible: If a geometric graph on $n$ points in convex position has minimum degree at least $2k$, then its local crossing number is at least $k^2-k$.
\end{restatable}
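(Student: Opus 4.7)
The plan is to verify the three assertions of the proposition in turn. For the \emph{$2k$-connectivity}, note that $C^k_n$ (for $n > 2k$) is $2k$-regular with connection set $\{\pm 1, \ldots, \pm k\}$, which caps the connectivity at $2k$; the matching lower bound follows from the classical result that connected circulant graphs are maximally connected, or directly by exhibiting $2k$ internally vertex-disjoint paths between any two vertices, one for each jump size.

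For the \emph{upper bound on the local crossing number}, consider the straight-line drawing in convex position. An edge $v_0v_d$ of cyclic length $d \in \{1, \ldots, k\}$ partitions the other vertices into a short-arc interior $S = \{v_1, \ldots, v_{d-1}\}$ and long-arc interior $L = \{v_{d+1}, \ldots, v_{n-1}\}$; its crossings are exactly the edges between $S$ and $L$. For each $v_i \in S$, a direct enumeration of the $2k$ circulant neighbors of $v_i$ shows that exactly $d$ of them lie in $S \cup \{v_0, v_d\}$ and the remaining $2k-d$ lie in $L$. Summing over $i$ yields $(d-1)(2k-d)$ crossings for $v_0v_d$, which is maximized at $d = k$ with value $k(k-1) = k^2-k$.

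For the matching \emph{lower bound}, let $G$ be any convex geometric graph on $n$ vertices with $\delta(G) \geq 2k$. Since a vertex has at most $2(k-1) < 2k$ potential neighbors at cyclic distance $\leq k-1$, there must exist an edge of cyclic length $\geq k$; let $d^*$ be the smallest such length and fix an edge $e = v_0 v_{d^*}$. For each short-arc interior vertex $u = v_i$ (so $1 \leq i \leq d^*-1$), the minimality of $d^*$ forbids neighbors of $u$ at any cyclic distance in $\{k, k+1, \ldots, d^*-1\}$; consequently the only short-arc-closure neighbors of $u$ lie at cyclic distance $\leq k-1$, giving at most $\min(i, k-1) + \min(d^*-i, k-1)$ candidate positions. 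Thus $u$ contributes at least $2k - \min(i, k-1) - \min(d^*-i, k-1) \geq 2$ neighbors in the long-arc interior, each creating a crossing of $e$. Summing over $i$ collapses to
\[
\cro(e) \;\geq\; \sum_{i=1}^{d^*-1}\bigl(2k - \min(i, k-1) - \min(d^*-i, k-1)\bigr) \;=\; k^2 - 3k + 2d^*,
\]
which is at least $k^2 - k$ since $d^* \geq k$.

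The main technical subtlety lies in the lower bound: a naive per-edge counting that ignores the minimality of $d^*$ gives only $(d^*-1)(2k-d^*)$ crossings, which is vacuous when $d^* \geq 2k$, so one cannot simply pick an arbitrary long edge. The role of choosing $d^*$ minimally is precisely to forbid cyclic lengths in $[k, d^*-1]$ as edge lengths of $G$, which uniformly caps the short-arc-closure contribution at $2(k-1)$ per vertex and yields a per-edge crossing bound that grows linearly in $d^*$, hence always attains $k^2 - k$.
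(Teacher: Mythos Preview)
Your proof is correct and follows essentially the same approach as the paper, particularly for the crucial lower-bound step: both you and the paper select a shortest edge $e$ of cyclic length $d^*\geq k$, use minimality to forbid edge lengths in $[k,d^*-1]$ and thereby cap the short-arc-closure contributions at $\min(i,k-1)+\min(d^*-i,k-1)$ per interior vertex, and sum to obtain $k^2-3k+2d^*\geq k^2-k$ crossings. Your upper-bound count $(d-1)(2k-d)$ for an edge of length $d$ spells out what the paper leaves as ``elementary counting,'' and your closing paragraph on why minimality of $d^*$ is essential is a helpful addition; the only real difference is that for $2k$-connectivity you invoke the Boesch--Tindell result on maximally connected circulants, whereas the paper gives a short direct argument (remove any $2k-1$ vertices and route along the arc containing $\leq k-1$ of them).
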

\begin{proof}
Let $k\in \mathbb{N}$, and let $G=(V,E)$ be the $k$-circulant graph on $n$ vertices, that is, the vertices are on a circle, and each vertex is adjacent to $k$ neighbors in both cw and ccw direction along the circle. Elementary counting shows that every edge crosses at most $k^2-k$ other edges. We show that $G$ is $2k$-connected. We need to show that $G-D$ is connected for any $D\subset V$ with $|D|=2k-1$. Let $s,t\in V\setminus D$. Then $s$ and $t$ decompose the circle into two arcs. One of them contains at most $k-1$ vertices in $D$. We construct an $st$-path $\pi_{st}$ by following this arc, and skip all vertices in $D$. Since we skip at most $k-1$ consecutive vertices, all edges of $\pi_{st}$ are present in $G-D$, and so $G-D$ is connected.

Now let $G=(V,E)$ be a geometric graph on $n\geq 2k$ points in convex position, with minimum degree at least $2k$. We define the \emph{length} of an edge $uv$ as the minimum number of edges in a path on the boundary of the convex hull. Then $|E|\geq kn$, and elementary counting shows that $G$ has at least one edge of length at least $k$. Let $e$ be a shortest edge among all edges of length at least $k$. We claim that if $e$ has length $\ell\geq k$, then it has at least $k^2-3k+2\ell\geq 2k^2-k$ crossings. Assume w.l.o.g.\ that $V=\{v_0,\ldots v_{n-1})$ in cyclic order along the convex hull of $V$, and $e=(0,\ell)$. Every edge that crosses $e$ has on endpoint in $\{v_1,\ldots , v_{\ell-1}\}$ and one in $\{v_{\ell+1},\ldots , v_{n-1}\}$. Then
$\sum_{i=1}^{\ell-1} \deg(v_i)\geq 2k(\ell-1)$. We subtract the number of vertex-edge pairs $(i,e')$ such that $1\leq i\leq \ell-1\}$ and $e'$ does not cross $e$. Every such edge $e'$ has length less than $k$. By symmetry, it is enough to count pairs $(i,e')$ where $e'$ is a ``left'' edge, that is, $e'=\{i,j\}$ and $j<i$. The number of such edges for $i=1,2,\ldots , \ell-1$ is $1,2,3,\ldots, k-1,k-1,\ldots , k-1$, which sums to $\binom{k}{2}+(\ell-k)(k-1)$. So edge $e$ crosses at least 
$    2k(\ell-1) - 2\left(\binom{k}{2}+(\ell-k)(k-1)\right) = k^2-3k+2\ell\geq k^2-k$ edges. This is a lower bound for the local crossing number of $G$.
\end{proof}

\section{Augmentation for Topological Graphs}
\label{sec:Top}

Every edge-maximal plane graph is 3-connected, but it may have separating triangles. For augmentation to 4-connectivity, we need to go beyond planarity. 

\begin{proposition}
   Every plane graph $G$ on $n\geq 6$ vertices can be augmented to a 4-connected 1-planar topological graph. If $G$ is a triangulation, then $\floor{(2n-7)/3}$ new edges suffice.
\end{proposition}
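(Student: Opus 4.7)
The plan is to reduce both claims to the already-cited result of Bose et al.~\cite{BoseCGMW07} combined with the flip-count bound of~\cite{CardinalHKTW18}, using the ``add instead of exchange'' trick already flagged in the introduction.

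First I handle the triangulation case. Let $T = G$ be a triangulation with $n \geq 6$. By \cite{BoseCGMW07}, there is a single simultaneous flip $\Phi$ transforming $T$ into a 4-connected triangulation $T^{\ast}$, and by \cite{CardinalHKTW18} we may take $|\Phi| \leq \lfloor (2n-7)/3 \rfloor$. Now, instead of performing the flips, I add the new diagonals to $T$: let $G' = T \cup \{e' : (e \to e') \in \Phi\}$. The key observation is that, by the very definition of a simultaneous flip, the quadrilateral faces $Q_e$ associated with the flipped edges $e \in \Phi$ are interior-disjoint (more precisely, no two share an edge). Each new diagonal $e'$ is drawn inside its own quadrilateral $Q_e$, so inside $Q_e$ it crosses exactly one edge of $T$, namely $e$, and it meets no other edge of $G'$. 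Hence $G'$ is a 1-planar topological graph extending the embedding of $T$. Since $G'$ contains all the unflipped edges of $T$ together with every new diagonal $e'$, it contains $T^{\ast}$ as a spanning subgraph and is therefore 4-connected. The number of added edges is $|\Phi| \leq \lfloor (2n-7)/3 \rfloor$, as required.

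For the general case, I first augment the given plane graph $G$ to a plane triangulation $T$ on the same vertex set by inserting chords into each non-triangular face (no crossings introduced). Because $n \geq 6$, the previous paragraph applies to $T$, producing a 4-connected 1-planar topological graph $G'$ that extends the embedding of $T$, and hence of $G$. Since $G \subseteq T \subseteq G'$, this $G'$ is a valid augmentation of $G$, proving the first statement.

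The only subtle point is the 1-planarity of $G'$ in the triangulation step: one must check that adding (rather than substituting) the new diagonals does not introduce any crossings besides the single crossing inside each quadrilateral $Q_e$. This is precisely what the ``simultaneous'' condition on $\Phi$ guarantees, so no further work is needed beyond a clean restatement of that definition. I do not expect any real obstacle: the proposition is essentially a packaging of known flip results, with the only novelty being the observation that a simultaneous flip converts directly into a 1-planar augmentation of the same cardinality.
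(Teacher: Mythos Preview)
Your proposal is correct and follows essentially the same argument as the paper: reduce to a triangulation, invoke the simultaneous-flip result (the paper cites Cardinal et al.\ directly for both existence and the $\lfloor(2n-7)/3\rfloor$ bound), and observe that adding the flipped-in diagonals yields a 4-connected 1-planar supergraph. Your write-up is slightly more explicit about why 1-planarity holds (interior-disjointness of the flip quadrilaterals), but there is no substantive difference.
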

\begin{proof}
Let $G=(V,E)$ be a plane graph on $n\geq 6$ vertices. We may assume that $G$ is an edge-maximal plane graph (i.e., a triangulation). Cardinal et al.~\cite[Theorem~3]{CardinalHKTW18} proved that $G$ can be transformed into a 4-connected edge-maximal planar graph $G'=(V,E')$ using a \emph{simultaneous flip} of at most $\floor{(2n-7)/3}$ edges. Now $G=(V,E\cup E')$ is 4-connected (since it contains $G'$) and 1-planar (an edge has a crossing iff it participates in a flip). 
\end{proof}

For plane graph augmentation, we prove a tight tradeoff (matching~\Cref{pp:general}).

\begin{theorem}\label{thm:topological}
    Every plane graph can be augmented to a $k$-connected $O(k^2)$-planar topological graph, and this bound is the best possible.
\end{theorem}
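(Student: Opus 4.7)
The plan is to derive the lower bound immediately from \Cref{pp:general} and to obtain the upper bound by overlaying a circulant-like family of new topological edges along a closed Jordan curve traced near a spanning tree of $G$.

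For the lower bound, any $k$-connected graph on $n$ vertices has minimum degree at least $k$ and hence at least $kn/2$ edges, so the Crossing Lemma (as used in the proof of \Cref{pp:general}) forces its local crossing number to be $\Omega(k^2)$, regardless of how $G$ is augmented.

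For the upper bound, let $T\subseteq G$ be a spanning tree. In the inherited plane embedding, the contour walk of $T$ is a closed walk of length $2(n-1)$ visiting every vertex $v$ exactly $\deg_T(v)$ times. I would draw a simple closed Jordan curve $\gamma$ inside a thin tubular neighborhood of $T$ that (i) passes through each vertex of $V$ exactly once, at a chosen corner, (ii) avoids all edges of $T$, and (iii) by taking the tube sufficiently thin, crosses each non-tree edge $e\in E\setminus T$ at exactly two points, one near each endpoint of $e$. The curve $\gamma$ induces a cyclic order $v_0,v_1,\dots,v_{n-1}$ of $V$. For each pair $v_i,v_j$ with cyclic index distance at most $k$, I add a new topological edge drawn as a Jordan arc inside a thin strip around the arc of $\gamma$ from $v_i$ to $v_j$, skipping any pair already in $E$. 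The augmented graph $G'$ then contains the $k$-circulant on the cyclic order along $\gamma$ and is $2k$-connected by \Cref{pp:circulant}, hence $k$-connected.

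For the local-crossing count, the strip drawing of the new edges is homeomorphic to the convex-position drawing of the $k$-circulant, so each new edge crosses at most $k^2-k$ other new edges. Tree edges gain no new crossings, since $\gamma$ and each new-edge strip are drawn disjointly from $T$. For a non-tree edge $e$, each of its two crossings with $\gamma$ lies in the strip of only $O(k^2)$ new edges, so $e$ accrues $O(k^2)$ crossings.

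The main obstacle I anticipate is bounding, for a single new edge $\alpha=\{v_i,v_j\}$, the total number of crossings it acquires with old edges: in principle the arc $\gamma[v_i,v_j]$ could cover many of the $2|E\setminus T|$ crossing points of $\gamma$ with non-tree edges. I would resolve this by selecting the chosen corner of each vertex so that any arc of $\gamma$ spanning $k$ consecutive chosen corners contains only $O(k)$ contour corners in total, and by handling vertices whose rotation is imbalanced---where many non-tree edges emerge from a single corner---via a preparatory plane-edge augmentation that splits such corners so that each corner is incident to only $O(1)$ non-tree edges. This keeps the old-edge crossings on $\alpha$ at $O(k)$, well within the $O(k^2)$ budget.
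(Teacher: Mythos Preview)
Your route differs substantially from the paper's. The paper first triangulates $G$, then partitions a spanning tree of the \emph{dual} graph into connected pieces of $\Theta(k)$ triangles each; every piece spans a region with $\Theta(k)$ vertices, which is augmented to a clique, and adjacent regions are linked by a matching of size $k$ minus the number of shared vertices. Connectivity follows by an induction over the cluster tree via Menger's theorem, and the $O(k^2)$ crossing bound comes from the fact that every new edge is confined to at most two adjacent regions of size $O(k)$. Your circulant-along-a-contour idea is appealing and reuses \Cref{pp:circulant} cleanly for the new--new crossings, but it forces each new edge to interact with the global structure of $G$, which the clustering entirely avoids.

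The gap is precisely where you flag it, and the proposed patches are not yet arguments. A new edge $\alpha$ following the arc $\gamma[v_i,v_j]$ crosses every old non-tree edge whose $\gamma$-crossing lies on that arc. Your two remedies both need proof. For (a), choosing one contour corner per vertex so that any $k$ consecutive chosen corners cover only $O(k)$ contour positions is a nontrivial balanced-selection claim about Euler tours; the naive ``first occurrence'' rule already fails on a path, producing one gap of length $n-1$. For (b), ``splitting corners by a preparatory plane-edge augmentation'' amounts to producing a spanning tree $T$ with $\deg_T(v)=\Omega(\deg_G(v))$ at every vertex simultaneously, which you have not established for arbitrary plane $G$ (and adding plane edges does not obviously help, since they become additional non-tree edges unless absorbed into $T$, while $\sum_v\deg_T(v)=2(n-1)$ is fixed). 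Until (a) and (b) are proved, a single high-degree vertex can concentrate $\Theta(n)$ non-tree-edge crossings on one short arc of $\gamma$, and any circulant edge whose strip covers that arc then acquires $\Theta(n)$ crossings with old edges, destroying the $O(k^2)$ bound. The paper's cluster decomposition never sees this difficulty because each new edge is drawn inside the union of two adjacent clusters and therefore meets only $O(k^2)$ other edges regardless of the degree sequence of $G$.
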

\begin{proof} 
Without loss of generality assume we are given a plane triangulation $G=(V,E)$ and a $k\in \mathbb{N}$. The dual graph $\mathcal{D}$ of $G$ is a planar graph of maximum degree 3. We can partition  $\mathcal{D}$ into a collection $\mathcal{C}=\{C_1,\ldots , C_t\}$ of connected subgraphs, each containing at least $2k-1$ and at most $6k-2$ nodes as follows. Consider an arbitrary spanning tree $\mathcal{T}$ of  $\mathcal{D}$, and recursively partition $\mathcal{T}$ into subtrees by deleting an edge incident to a centroid node, until any further partition would produce a subtree with less of than $2k-1$ nodes (see \Cref{fig:clustering}).

\begin{figure}[htbp]
\hfill
     \begin{subfigure}[b]{0.32\textwidth}
          \centering
         \includegraphics[width=45mm]{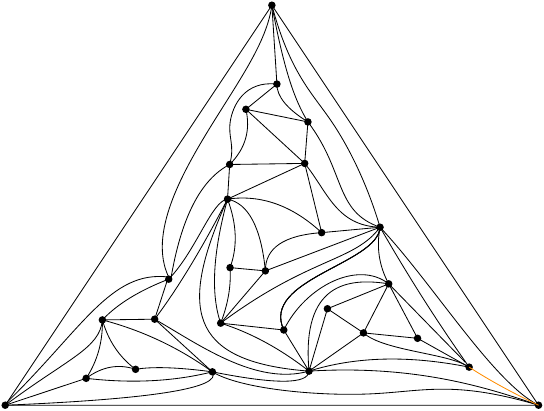}
    \subcaption{}
    \label{fig:traingulation}
     \end{subfigure}
  \centering
     \begin{subfigure}[b]{0.32\textwidth}
          \centering
   \includegraphics[width=45mm]{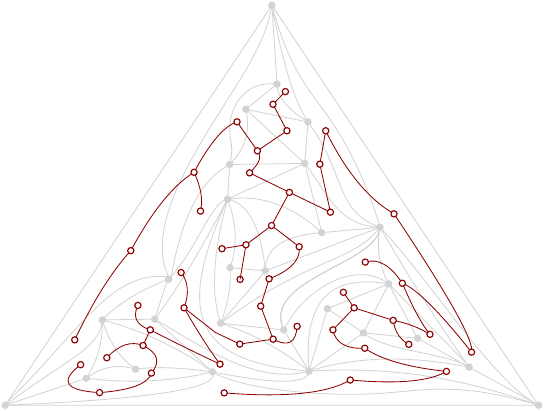}
    \subcaption{}
    \label{fig:dual}
     \end{subfigure}
     \hfill
     \begin{subfigure}[b]{0.32\textwidth}
          \centering
   \includegraphics[width=45mm]{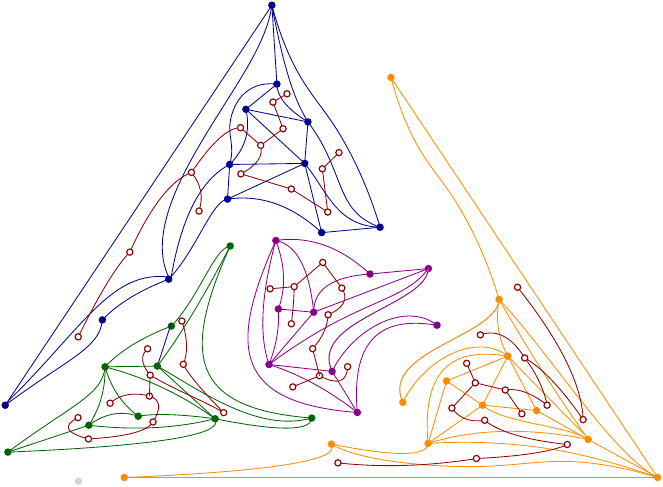}
      \subcaption{}
    \label{fig:clusters}
     \end{subfigure}
       \caption{A triangulation on $n=27$ vertices (a); a spanning tree $\mathcal{T}$ of the dual graph $\mathcal{D}$ (b); four clusters obtained for $k=5$ (c). }
       \label{fig:clustering} 
       \end{figure}

For every $i=1\ldots , t$, let $R_i$ denote the region formed by the union of triangles in $C_i$; and let $V_i\subset V$ denote the set of all vertices of the triangles in $C_i$. By Euler's formula, applied for the triangles in $C_i$, we have $k\leq |V(C_i)|\leq 6k$.
We call the vertex sets $V_1,\ldots , V_t$ \emph{clusters},
We also define a \emph{cluster graph} $\mathcal{G}$, where the nodes correspond to the $t$ clusters, and nodes $V_1$ and $V_j$ are adjacent in $\mathcal{G}$ if $C_i$ and $C_j$ contain two triangles that share an edge. 
Note that a vertex $v\in V$ may be part of arbitrarily many clusters (e.g., if $v$ is a high-degree vertex), yet each cluster is adjacent to $O(k)$ other clusters in $\mathcal{G}$.

We can augment $G$ as follows. (1) Augment each cluster $V_i$ to a clique, drawing all new edges along shortest paths in the region $R_i$. (2) For each pair of adjacent clusters, $V_i$ and $V_j$, add a matching of size $k-|V_i\cap V_j|$ between $V_i\setminus V_j$ and $V_j\setminus V_i$, drawing the new edges along shortest paths in the region $R_i\cup R_j$.  Let $G'$ denote the augmented graph.

First, we show that $G'$ is $k$-connected. Specifically, we show by induction on $i\geq 1$ that if $i$ clusters induce a connected subgraph of the cluster graph, then the union of the clusters induce a $k$-connected graph in $G'$. The claim is clear in the base case $i=1$. For the induction step, we use Menger's Theorem. Assume that the claim holds for $i-1$ clusters. The union of $i$ clusters is composed of two parts, $A$ and $B$, where $A$ is the union of $i-1$ clusters, including cluster $V_j$, and $B=V_i$ is the $i$-th clusters. Both $G'[A]$ and $G'[B]$ are $k$-connected by the induction hypothesis, they share $|V_i\cap V_j|\geq 2$ vertices, and are connected by a matching of size $k-|V_i\cap V_j|$ (which is disjoint from the shared vertices). Let $a_1=b_1, \ldots ,a_s=b_s$ be the shared vertices, and let $a_t b_t$, $t=s+1,\ldots k$, be the matching with $a_i\in A\setminus B$ and $b_i\in B\setminus A$. Let $a\in A$ and $b\in B$ be arbitrary vertices. By Menger's Theorem, there are $k$ internally disjoint paths from $a$ to $a_1,\ldots , a_k$ in $G'[A]$, hance $k$ internally disjoint path from $a$ to $b_1,\ldots ,b_k$ in $G'[A\cup B]$. Since $G'[B]$ is a clique, we obtain $k$ internally disjoint paths between $a$ and $b$ in $G'[A\cup B]$. By Menger's Theorem, $G'[A\cup B]$ is $k$-connected. 

Second, we claim that $G'$ is $O(k^2)$-planar. Consider an edge $e$ in a cluster $C$. In the complete graph of $C$, edge $e$ is crossed by $O(k^2)$ other edges. Cluster $C$ is adjacent to $O(k)$ other clusters, each of which is connected to $C$ by a matching of size $O(k)$. Edge $e$ crosses $O(k^2)$ edges in these matchings. Therefore, any edge in a cluster crosses $O(k^2)$ other edges in $G'$. 
Now consider an edge $f$ of a matching between two adjacent clusters $A$ and $B$. Then $f$ can cross any of the $2\cdot O(k^2)$ edges in the two cliques induced by clusters $A$ and $B$. It can also cross the $2\cdot O(k^2)$ edges in the $O(k)$ matchings between $A$ (resp., $B$) and their adjacent clusters. 
Overall, any edge in $G'$ has $O(k^2)$ crossings.
\end{proof}

\section{\texorpdfstring{$1\to 3$}{1 to 3} Connectivity Augmentation for Plane Trees}
\label{sec:trees}

In this section we present a linear-time algorithm for the following problem:
Given an $n$-vertex plane tree $T$ (i.e., with a fixed embedding), compute a $3$-connected plane graph $G$ with the minimum number of edges such that $T$ is a spanning subgraph of $G$.
That is, we solve the $3$-connectivity augmentation of plane trees in the topological setting in linear time.
We remark that Dhanalakshmi, Sadagopan, and Manogna~\cite{DhanalakshmiSM16} solve the problem for a tree $T$, where $G$ is not necessarily planar.
However, we could not fully verify the correctness of their algorithm. In their last step, the algorithm seems to introduce a matching on a long path of degree-$2$ vertices in $T$, which does not result in a $3$-connected graph.

Let us start with a simple lower bound (which also appears in \cite{DhanalakshmiSM16}) on the number of inserted edges. 
For a graph $G = (V,E)$ and integer $i \geq 0$, let $n_i(G) = |\{v \in V \colon \deg(v) = i\}|$ denote the number of vertices in $G$ of degree~$i$.

\begin{observation}
    \label{obs:trees-LB-500}
    Let $T$ be a spanning tree in a $3$-connected graph $G$.
    Then $G$ has minimum degree $\delta(G) \geq 3$.
    In particular, $|E(G)| - |E(T)| \geq \ceil*{\frac{2n_1(T) + n_2(T)}{2}}$.
\end{observation}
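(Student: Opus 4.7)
The plan is to derive both statements from the basic degree lower bound forced by $3$-connectivity. For the first part, I would recall that in any $k$-connected graph on at least $k+1$ vertices every vertex must have degree at least $k$: if some vertex $v$ had degree less than $k$, then its neighborhood would be a vertex cut (or the graph would have fewer than $k+1$ vertices). Since $T$ is a spanning tree of $G$, we have $|V(G)| = |V(T)| \geq 2$, and a $3$-connected graph necessarily has at least $4$ vertices, so the bound $\delta(G) \geq 3$ follows immediately.

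For the edge-count inequality, the key observation is that $T \subseteq G$ implies $\deg_G(v) \geq \deg_T(v)$ for every vertex $v$, and the first part gives $\deg_G(v) \geq 3$. Hence at each vertex $v$ the number of edges of $E(G) \setminus E(T)$ incident to $v$ is at least $\max\{0, 3 - \deg_T(v)\}$, which equals $2$ when $\deg_T(v)=1$, equals $1$ when $\deg_T(v)=2$, and equals $0$ otherwise. Summing this lower bound over all vertices and using the handshake identity (each new edge is counted exactly twice) yields
\[
2\,(|E(G)| - |E(T)|) \;\geq\; \sum_{v \in V} \max\{0, 3 - \deg_T(v)\} \;=\; 2\,n_1(T) + n_2(T).
\]
Dividing by $2$ and using the fact that the left-hand side is an integer to replace the right-hand side with its ceiling gives the claimed bound.

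The proof is essentially a one-line degree-counting argument, so I do not anticipate any real obstacle; the only subtlety worth flagging is the justification of the ceiling, which comes from integrality of the edge count rather than from any structural property of $T$ or $G$. It is also worth noting that the estimate implicitly uses only $\delta(G) \geq 3$ and $T \subseteq G$, so the same lower bound would hold for any connected spanning subgraph, not just a tree; the tree hypothesis is relevant only in that it makes $n_1(T)$ and $n_2(T)$ easy to read off from the input.
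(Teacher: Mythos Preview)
Your argument is correct; the paper states this as an observation without proof, and the degree-counting via the handshake identity that you give is exactly the intended justification. Your remark about the ceiling coming from integrality of $|E(G)|-|E(T)|$ is the right way to handle it.
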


We present an algorithm that meets the lower bound of \Cref{obs:trees-LB-500}.

\begin{restatable}{theorem}{tree}\label{thm:tree}
    Let $T$ be an $n$-vertex plane tree with $n\geq 4$ and $n_i(T)$ vertices of degree $i$.
    Then there exists a plane $n$-vertex graph $G$ with $T \subseteq G$ and $|E(G)| - |E(T)| = \ceil*{\frac{2n_1(T) +n_2(T)}{2}}$.

    Moreover, such a $G$ can be computed in $O(n)$ time.
\end{restatable}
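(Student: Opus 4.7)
The lower bound $|E(G)|-|E(T)|\geq\lceil(2n_1(T)+n_2(T))/2\rceil$ is immediate from \Cref{obs:trees-LB-500}, so the task is to match it with an $O(n)$-time construction. My plan is to place all new edges inside the single face of $T$. First, compute the outer-face walk $C=c_1,c_2,\ldots,c_{2n-2}$ of $T$ via an Euler tour in $O(n)$ time; each corner $c_i$ is labelled with its tree vertex, leaves contribute exactly one corner, and degree-2 vertices contribute exactly two. Viewing $C$ as the boundary of a $(2n-2)$-gon $P$, any set of new edges drawn in the outer face corresponds to a non-crossing chord diagram in $P$, where each chord's endpoints lie at distinct tree vertices (and do not duplicate a tree edge).

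Assign to each corner a \emph{demand}: $2$ for every leaf corner, $1$ at one of the two corners of each degree-2 vertex (the other getting $0$), and $0$ otherwise. The total demand equals $2n_1(T)+n_2(T)$, so a chord set meeting every demand needs at least $\lceil(2n_1+n_2)/2\rceil$ chords. To meet this, I propose a stack-based pairing analogous to balanced-parenthesis matching: scan the corners cyclically while maintaining a stack $S$ of corners with unmet demand. On visiting $c_i$ with positive demand, repeatedly pop $c_j\in S$ at a tree vertex different from $c_i$'s, add the chord $c_ic_j$, decrement both demands, and push $c_j$ back if its demand is still positive; if $c_i$ still has positive demand afterwards, push it onto $S$. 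After one pass, any remaining elements of $S$ are paired off by wrapping around cyclically. The stack discipline guarantees non-crossing chords, so the augmented graph $G$ is plane; every leaf (resp.\ every degree-2 vertex) gains at least $2$ (resp.\ at least $1$) new incident edges, matching $\delta(G)\geq 3$ and the edge-count lower bound.

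The hard part is showing that no $2$-cut exists. I plan a proof by contradiction: if $\{u,v\}$ were a $2$-cut of $G$, then $T\setminus\{u,v\}$ splits into components, and the corners of $u,v$ partition $C$ into arcs each belonging to one component; no paired chord can have endpoints in distinct components without giving a bridging path in $G\setminus\{u,v\}$. Tracing the stack algorithm, I would show that for every such $\{u,v\}$ at least one paired chord must bridge between arcs, a contradiction. The delicate subcase, explicitly noted as the flaw in~\cite{DhanalakshmiSM16}, is a long path of degree-2 vertices in $T$: pairing consecutive corners along the path yields a matching that leaves $2$-cuts intact. I would address this by modifying the stack rule so that along such a path the chords alternate between corners on the two sides of the path in $C$, forming a ``ladder'' whose rungs destroy every potential $2$-cut. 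Small base cases ($n\in\{4,5\}$) are verified directly.

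The running time is $O(n)$: the Euler tour is linear, each corner is pushed and popped $O(1)$ times on the stack, and the final cyclic wrap-around is one additional linear pass. The main obstacle throughout is the $3$-connectivity argument --- designing the stack rule so that $2$-cuts are provably ruled out in every configuration, including long degree-2 paths and nested leaf-clusters at high-degree vertices.
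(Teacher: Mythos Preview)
Your approach differs fundamentally from the paper's. You attempt a single global pass: walk the Euler tour, assign demands to corners, and match them greedily with a stack to obtain a non-crossing chord system. The paper instead argues by induction on~$n$: it locates a small configuration near a deepest vertex that is neither a leaf nor a leg, deletes $O(1)$ vertices, recurses on the smaller tree, and then reinstates the deleted vertices via one to three Barnette--Gr\"unbaum operations (edge subdivisions plus one new edge). Because BG-operations preserve $3$-connectivity, that property is inherited automatically from the base case $K_4$; all the work goes into a four-case local analysis and the edge-count bookkeeping, together with a one-step contraction fix when $n_2(T)$ is odd.

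Your proposal has a genuine gap exactly where the paper's design avoids one: $3$-connectivity is not proved. You call it ``the hard part'', say only that you ``would show'' a bridging chord exists for every candidate $2$-cut, and concede that the stack rule must be modified for long degree-$2$ paths by an unspecified ``ladder''. But this is the entire content of the theorem --- reaching $\delta(G)\ge 3$ with $\lceil(2n_1+n_2)/2\rceil$ plane edges is not the difficulty; ruling out all $2$-cuts is, and it is precisely where~\cite{DhanalakshmiSM16} fails. Worse, the stack rule as written is unsound even before $3$-connectivity enters: on a star with leaves $u_1,\dots,u_k$, processing $u_2$ pops $u_1$, pairs, pushes $u_1$ back with residual demand~$1$, and the ``repeatedly'' clause then pairs $u_1u_2$ a second time; on a path, the first pairing is between consecutive vertices and duplicates a tree edge. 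Under a charitable reading that forbids repeating a partner, the star with six leaves produces the two disjoint triangles $u_1u_2u_3$ and $u_4u_5u_6$, so the centre is a cut vertex and the output is not even $2$-connected. A global matching scheme along your lines may well exist, but specifying one and proving that it $3$-connects $T$ in \emph{every} configuration (degree-$2$ paths, legs at high-degree vertices, nested leaf clusters) is the real task, and none of it is carried out here.
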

\begin{proof}
    We present our construction of $G$ as an inductive proof on $n$, which can be easily turned into a linear-time recursive algorithm. For simplicity, we call the edges of $E(G)\setminus E(T)$ the \emph{new} edges and illustrate them in red.
    For the induction base of $n=4$, we set $G=K_4$.
    So let us henceforth assume that $n \geq 5$.
    We use a stronger induction hypothesis, where we require that $n_2(T)$ is even; we later show how to handle the odd case.
    This assumption ensures that each leaf of $T$ is incident to exactly two new edges, each vertex of degree $2$ to exactly one new edge, and all other vertices to no 
    new edges.
    We distinguish four cases. 

      \textbf{Case 1:}
            A vertex $v$ has two consecutive neighbors $a,b$ of $\deg(a) = \deg(b) = 1$.
            
            Say $a$ comes clockwise after $b$ at $v$.
            We remove $a$ and $b$ from $T$.
            If $v$ is not a leaf after this removal, then we add another leaf $w$ at $v$; otherwise $v$ takes the role of $w$.
            We denote the new tree by $T'$ and note that $n_2(T') = n_2(T)$, i.e., the number of degree-$2$ vertices remains the same.
            In order to apply the induction, we must also ensure that $T'$ has at least $4$ vertices if no new leaf was added, i.e., $T$ has at least $6$ vertices.
            This is the case as otherwise $T$ corresponds the tree $T_5^*$ depicted in \Cref{fig:treeBase}, for which $n_2(T_5^*)$ is odd. 
            In the obtained solution $G'$ with $T' \subseteq G'$, vertex $w$ is incident to two edges in $F' = E(G') - E(T')$, say $wx$ comes clockwise after $wy$ at $w$.
            Then we remove $wy$ and insert $ab$ and $by$.

    \begin{figure}[ht]
    \centering
    \includegraphics[page=2]{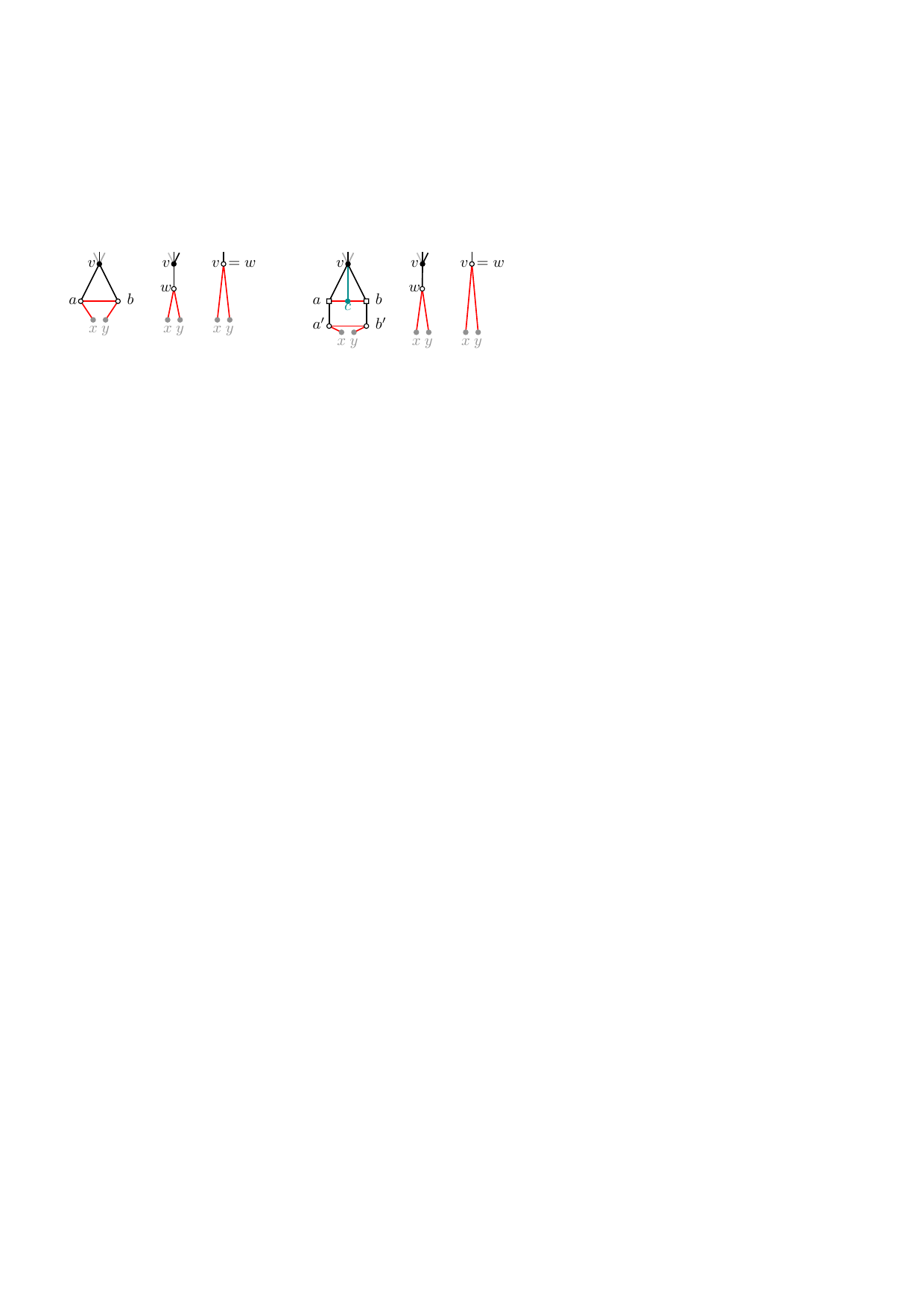}
    \caption{Illustration of Case 1 and 2 in the proof of \Cref{thm:tree}.}
    \label{fig:tree}
    \end{figure}
  
    We define a \emph{leg} as a vertex $v$ in $T$ of $\deg(v)=2$ with neighbor $a$ of $\deg(a) = 1$.
    
  \textbf{Case 2:}
            A vertex $v$ has two neighbors $a,b$ that are legs which are either consecutive or have a leaf $c$ between them. 
            
            We may assume that $a$ comes clockwise after $b$, and if it exists, $c$ comes after $b$ and before $a$ at $v$.
            Then we remove $a,b$, their adjacent leaves $a',b'$ and $c$.
            If $v$ is not a leaf after this removal, then we add another leaf $w$ at $v$; this ensures that $v$ does not have degree $2$.
            As before, we aim to recurse on the resulting tree $T'$.
            Note that $n_2(T') = n_2(T) - 2$ because $a$ and $b$ are removed and $v$ has degree $1$ or at least $3$ in $T'$.
            We also ensure that $T'$ has at least $4$ vertices.
            In the case that we introduce a new leaf $w$, this is due to the fact that $v$ has at least two more neighbors.
            In the other case, we must exclude the case that $T'$ is a path on three vertices.
            However, this would imply that $T$ is either the tree $T_7^*$ or $T^*_8$, illustrated in \Cref{fig:treeBase}, both of which have an odd number of degree-$2$ vertices.
            Thus, we may apply induction on $T'$.
            
             \begin{figure}[ht]
                \centering
                \includegraphics[page=5]{}
                \caption{Illustration for the (non-existing) base cases in the proof of \Cref{thm:tree}.}
                \label{fig:treeBase}
             \end{figure} 
             
            In the obtained solution $G'$, vertex $w$ has two incident edges in $F' = E(G') - E(T')$, say $wx$ comes clockwise after $wy$ at $v$.
            Then we remove $wx,wy$ and insert $a'x$, $a'b'$, and $b'y$.
            If $c$ exists in $T$, we insert $ac$ and $cb$, otherwise we insert only $ab$.

         \textbf{Case 3:}
            A vertex $v$ with $\deg(v)=2$ has a neighbor $b$ that is a leg adjacent to leaf $a$.
            
            In this case we delete $a$ and $b$ and recurse on $T' = T-\{a,b\}$; note that $n_2(T') = n_2(T) - 2$.
            Moreover, if $T'$ had only $3$ vertices, then $T$ would correspond to a path on $5$ vertices for which $n_2(T)$ would be odd.
            Hence, we may apply the induction. 
            In the obtained solution $G'$, vertex $v$ has two incident edges in $F' = E(G') - E(T')$, say $vx$ and $vy$.
            Then we remove $vx,vy$ and insert $va$, $ax$, and $by$.
    
    \begin{figure}[ht]
    \centering
    \includegraphics[page=4]{}
    \caption{Illustration of Case 3 and 4 in the proof of \Cref{thm:tree}.}
    \label{fig:tree2}
    \end{figure} 
    
        \textbf{Case 4:}
        A vertex $v$ (with $\deg(v)=3$ or $4$) has a neighbor $p$ with $\deg(p)\geq 3$, one neighbor $a$ which is a leg and otherwise only leaves.

        By Case 1, we may assume that no two leaves are consecutive.
        Hence, either $v$ has exactly one leaf $b$ or two leaves $b,c$ which are separated by $p$ and $a$, see also \Cref{fig:tree2}.
        In this case we delete $a,b,c$ and the leaf $a'$ of $a$, and insert a new leaf $w$ at $v$.
        We recurse on the resulting tree $T'$. 
        Note that $n_2(T') = n_2(T)$ as we deleted $a$ and $v$ has degree $2$ in $T'$.
        Moreover, due to the fact that $p$ has $\deg(p)=3$, $T'$ has at least $4$ vertices.
        In the solution $G'$ obtained by induction, vertex $v$ is incident to one edge, say $vz$, and vertex $w$ to two edges, say $wx$ and $wy$, in $F' = E(G') - E(T')$.
        Then we remove $wx,wy,wz$ and insert $a'y$, $ab$, and $bz$.
        If $c$ does not exist, we insert $a'x$ and otherwise $a'c,cx$.

    This completes the construction of $G$ with $T \subseteq G$.
    To show that the case distinction is exhaustive, root the tree $T$ at an arbitrary vertex and consider a lowest vertex $v$ that is neither a leaf nor a leg.
    All but one neighbor $p$ of $v$, are either leaves or legs.
    If $v$ has $\deg(v)= 2$, then Case 3 applies.
    Otherwise, apart from $p$, $v$ has two consecutive leaves (Case 1), two legs which are consecutive or have a leaf between them (Case 2), or one leg and otherwise only leaves (Case 4).

    It remains to argue that the resulting graph $G$ is $3$-connected.
    This is clear in the base case $G = K_4$.
    In each remaining case, it suffices to observe that $G$ is obtained from a $3$-connected graph $G'$ by a local modification that is a sequence of so-called \emph{BG-operations}; these operations were introduced by Barnette and Gr\"unbaum~\cite{BG69} and preserve $3$-connectivity.
    In a \emph{$(1,2)$-operation} we subdivide an edge $xy$ by a new vertex $w$ and insert an edge $vw$ with $v \neq x,y$.
    In a \emph{$(2,3)$-operation} we subdivide edge $xy$ by vertex $a$ and edge $vw$ by vertex $b$ and insert the edge $ab$. \Cref{fig:tree,fig:tree2} illustrate how up to three of these operations suffice.

    Lastly, we show how to handle the case that $n_2(T)$ is odd.
    We choose an arbitrary vertex $v$ of degree $2$ and contract one of its incident edges to obtain a tree $T'$.
    Using our above machinery, we obtain a $3$-connected graph $G'$ with $|E(G')| - |E(T')| = \ceil*{\frac{2n_1(T) +n_2(T)-1}{2}}=\ceil*{\frac{2n_1(T) +n_2(T)}{2}}-1$.
    Now we reintroduce $v$ by a $(1,2)$-operation on its edge, connecting $v$ to an arbitrary non-neighbor in an incident face.
    As this preserves $3$-connectivity, this completes the proof.
\end{proof}

\section{Hardness Results}
\label{sec:hardness}

In this section we prove our main hardness results (\cref{thm:hardness} and \cref{cor:flip-hardness}). 
We reduce from \planarsat. 
A 3-SAT instance is given by a boolean 3-CNF formula $\varphi$, with $n$ variables $x_1, x_2, \dots, x_n$ and $m$ clauses $C_1, C_2, \dots, C_m$.
In \planarsat, $\varphi$ can be represented as a planar bipartite graph with vertices for every variable and clause, and an edge $(x_i,C_j)$ if $x_i$ or $\neg x_i$ is a literal in clause $C_j$. 
This is a well studied SAT variant proven NP-complete by Lichtenstein~\cite{Lichtenstein1982PlanarFA}.
We have three reductions that respectively cover the $2\rightarrow 3$, $3\rightarrow 4$ and $4\rightarrow 5$ augmentations.
Note that if the $c\rightarrow k$ augmentation problem is NP-complete, so is the $c'\rightarrow k$ problem for $c'<c$ because every $c$-connected graph is also $c'$-connected. 
All three reductions are similar: constructed from three gadgets (variable, clause, and literal) that are put together in the standard way.
\subsection{\texorpdfstring{$2\to 3$}{2 to 3} Connectivity Augmentation is NP-Complete}

We now show that minimally augmenting a planar graph from 2 to 3 connectivity is NP-hard. We first present our variable gadget in \Cref{fig:2-3Variable}. 
Our reduction works for all settings (abstract, topological and geometric) since the only choice we have in embedding the resulting graph occurs locally in each gadget, and changing the embedding (compared to the ones shown in the figures) would result in an equal or worse solution.
This gadget is built by a connected sequence of ``diamonds'' (shown in \Cref{fig:2-3BasicM}) which are two triangles with a shared edge. 
These diamonds have four vertices two of which are degree three and two are degree two. 
We form a variable gadget by connecting diamonds by their degree two vertices (\Cref{fig:2-3Variable}). 
We call these the \emph{boundary vertices} of each diamond.
Without any additional edges each pair of boundary vertices form a $2$-cut where they would disconnect the other two vertices of the diamond. 
We call these two vertices the \emph{central vertices} of each diamond.

\begin{restatable}{lemma}{23Variable} \label{lemma:2-3Variable}
    The variable gadget in \Cref{fig:2-3Variable} composed of $2v$ diamonds can be minimally augmented from 2 to 3 connectivity with $v$ edges.
\end{restatable}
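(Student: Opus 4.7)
The plan is to establish matching lower and upper bounds of $v$ edges on the minimum augmentation.

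For the lower bound, I would count obstructions. Each of the $2v$ diamonds owns its own $2$-vertex-cut, namely its pair of boundary vertices, which separates its two central vertices from the remainder of the gadget. To destroy the cut in a diamond $D_i$, any added edge must have at least one endpoint among the two central vertices of $D_i$. Since the central pairs of distinct diamonds are pairwise disjoint and every new edge contributes only two endpoints, each new edge can destroy at most two of the $2v$ cuts. Hence at least $v$ new edges are needed.

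For the upper bound, I would pair the diamonds as $(D_{2i-1}, D_{2i})$ for $i = 1, \ldots, v$ and insert a single edge between a central vertex of $D_{2i-1}$ and a central vertex of $D_{2i}$. Using the natural embedding of the chain of diamonds, each such edge can be routed through a face incident to the shared boundary vertex of the two consecutive diamonds, so the augmentation is planar. The same construction goes through in the topological and geometric settings, because the embedding inside each diamond is essentially forced up to symmetry and the new edge can be placed on either side of the chain.

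The delicate step, and the main obstacle, is verifying that the resulting graph is in fact $3$-connected, not merely free of the original $2$-cuts. I would argue by contradiction: a hypothetical $2$-cut $\{u,w\}$ of the augmented graph would have to separate either (i) the two central vertices of some diamond $D_i$ from the rest, which is now impossible because $D_i$ has a new central-to-central edge leading outside, or (ii) two nontrivial arcs of the chain, which is also impossible because the sequence of boundary vertices together with the new inter-diamond bridges still supplies three internally disjoint paths between any two vertices. A clean way to package this is by induction on the number of augmented pairs, invoking Menger's theorem at each step: after augmenting the first $i$ pairs, the subgraph induced by $D_1, \ldots, D_{2i}$ together with their chain connections is $3$-connected, and attaching the next pair via the existing boundary-sharing plus the new central edge preserves this property.
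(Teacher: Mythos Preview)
Your approach matches the paper's almost exactly: the lower bound via the $2v$ disjoint central pairs each requiring an incident new edge, and the upper bound via a perfect matching on central vertices of consecutive diamonds (the paper's red/blue matchings). The paper does not spell out the 3-connectivity verification at all, so your proposed case analysis and Menger-based induction are more careful than what appears there; just note that the gadget is a \emph{cycle} of diamonds rather than a chain, which slightly simplifies case~(ii).
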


\begin{proof}
    Each pair of central vertices can be disconnected by removing their boundary vertices, then one vertex in every pair of central vertices must be connected to some vertex other than their boundary vertices. 
    There are $2v$ pairs of central vertices each of which needs one additional edge.
    Therefore, a perfect matching between pairs of central vertices would produce an augmenting set with only $v$ edges.
    By construction there are two such sets, shown in red and blue in \Cref{fig:2-3Variable}. 
    We call one matching the positive matching (the blue edges) and the other matching negative (red edges).
    Note $v$ edges are necessary to augment a variable gadget.
    Each augmenting edge can only augment at most two pairs of central vertices. So any augmenting set of less than $v$ edges will leave at least one set of central vertices unmodified and hence its boundary vertices will still form a $2$-cut.
    Therefore $v$ edges are necessary and sufficient to augment a variable gadget with $2v$ diamonds.
\end{proof}

\begin{figure}[ht]
    \centering

    \begin{subfigure}{0.2\textwidth}
        \centering
        \includegraphics[scale=3]{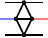}
         \subcaption{}
         \label{fig:2-3BasicM}
    \end{subfigure}
    \hfil
    \begin{subfigure}{0.6\textwidth}
        \centering
        \includegraphics[scale=3]{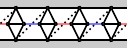}
        \subcaption{}
        \label{fig:2-3Variable}
    \end{subfigure}
    \hfill
    \caption{(a) A diamond and (b) a variable gadget.}
    \label{fig:2-3variableFig}
\end{figure}

We now discuss the literal gadget
which serves as a connector between clause and variable gadgets. 
Intuitively, it can be thought of as a gadget we ``insert'' into a variable gadget to expose an edge that can be made adjacent to a clause gadget.
The literal gadget is shown in \Cref{fig:2-3LiteralM}. 
It is essentially just an extension of the basic diamond from \Cref{fig:2-3BasicM}.
It is a diamond with four additional vertices, such that each edge between a central vertex and a boundary vertex is also the hypotenuse of a triangle formed with an added vertex.
We call these four vertices \emph{auxiliary vertices}.
In our final construction this gadget will be inserted with a pair of diamonds on either side. 
One pair will consist of two diamonds that are a part of a variable gadget (we call these \emph{variable diamonds}) and the pair on the opposing side will consist of two diamonds that are a part of a clause gadget (we call these \emph{clause diamonds}).

\begin{figure}[ht]
    \centering

    \begin{subfigure}{0.3\textwidth}
        \centering
        \includegraphics[width=\textwidth]{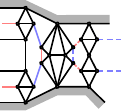}
         \subcaption{}
         \label{fig:2-3LiteralM}
    \end{subfigure}
    \hfill
    \begin{subfigure}{0.6\textwidth}
        \centering
        \includegraphics[width=\textwidth]{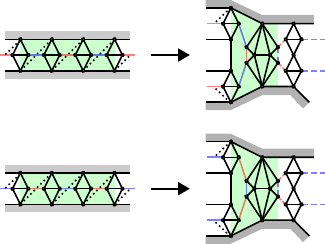}
        \subcaption{}
        \label{fig:2-3LiteralFull}
    \end{subfigure}
    
    \caption{(a) A literal gadget for $2\to 3$ connectivity augmentation. Diamonds on the left are variable diamonds and those on the right are clause diamonds. Solid lines represent potential edges used to augment a variable gadget, dashed lines are possible augmenting edges from a clause gadget. Blue lines represent a true assignment and red represent a false assignment. (b) Literal gadget are inserted into a variable gadget. A positive literal is shown on top, a negated literal is on the bottom.}
    \label{fig:2-3LiteralGadgets}
\end{figure}

\begin{restatable}{lemma}{23Literal}
    The literal gadget can be minimally augmented from 2 to 3 connectivity with exactly 3 edges.
\end{restatable}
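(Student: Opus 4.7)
The plan is to prove matching lower and upper bounds of exactly 3 new edges, understanding the statement in the context of the surrounding construction, where the literal gadget is inserted between its adjacent variable and clause diamonds. Two structural features drive the argument: (a) the four auxiliary vertices $a_{11},a_{12},a_{21},a_{22}$ each have degree exactly 2 in the gadget---each is incident only to one central and one boundary vertex of the literal's diamond---and (b) the two boundary vertices $\{b_1,b_2\}$ form a separator that isolates the remaining six literal vertices (the two centrals and four auxiliaries) from the rest of the graph.

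For the \textbf{lower bound} I will combine two necessary conditions for 3-connectivity of the augmented graph. Condition~(i): each of the four degree-2 auxiliaries must acquire at least one new incident edge, so that its degree reaches the required minimum of 3. Condition~(ii): at least one new edge must have exactly one endpoint in the literal's six-vertex interior, as otherwise $\{b_1,b_2\}$ remains a 2-cut. Condition~(i) alone forces at least 2 new edges, since each new edge covers at most two of the four auxiliaries. Moreover, the only way to satisfy (i) with exactly 2 new edges is to use a perfect matching on the four auxiliaries: two edges contribute at most four endpoints, and to cover all four auxiliaries these endpoints must be precisely the four auxiliaries, so both edges lie entirely in the interior and (ii) fails. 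Hence at least 3 new edges are needed.

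For the \textbf{upper bound} I will exhibit a concrete 3-edge augmentation matching one of the configurations drawn in \Cref{fig:2-3LiteralM}: a matching edge between two auxiliaries on one side of the gadget together with two edges leaving the interior to reach vertices of an adjacent diamond. The \textbf{main obstacle} is then verifying 3-connectivity of the augmented gadget in context. My plan is to enumerate the candidate separating pairs of the unaugmented gadget---namely each pair $\{c_i,b_j\}$ (which would otherwise isolate the shared auxiliary $a_{ij}$) and the pair $\{b_1,b_2\}$ (which separates the interior)---and confirm that each is destroyed by at least one of the three new edges. Exploiting the gadget's symmetry keeps this case analysis short, and together with the fact that every vertex of the augmented gadget has degree at least 3, the 3-connectivity follows.
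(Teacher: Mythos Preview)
Your proposal is correct and follows essentially the same approach as the paper: both arguments use the degree-2 auxiliaries (equivalently, the $\{c_i,b_j\}$ cuts) to force at least two new edges, then observe that the only way two edges can cover all four auxiliaries is a perfect matching on them, which leaves the $\{b_1,b_2\}$ cut intact---hence three edges are needed; and both exhibit an explicit 3-edge augmentation for sufficiency. The only cosmetic difference is the specific witness: the paper's proof matches the four auxiliaries with two internal edges and sends one edge outside, whereas you (matching \Cref{fig:2-3LiteralM}) use one internal matching edge and two outgoing edges---either works, and your verification plan for the upper bound is in fact slightly more explicit than the paper's.
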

\begin{proof}
    
    This gadget has two types of $2$-cuts. One is the same as the diamond, the two boundary vertices form a cut set. 
    As each of the auxiliary vertices are adjacent to the central vertices, as long as at least one is adjacent to a vertex outside of the gadget, the first $2$-cut will be resolved.
    The other type of $2$-cut is formed by each pair of adjacent boundary and central vertices.
    This cut will disconnect the added vertex adjacent to both. 
    Therefore, the four auxiliary vertices must have at least one edge connecting them to some other vertex.

    We now observe that two edges is not sufficient to augment this gadget. As each auxiliary vertex requires an additional edge, the only option would be to connect pairs of auxiliary vertices with these two edges. However these edges only connect vertices of the literal gadget to each other so the boundary vertices of the literal gadget still form a $2$-cut. 
    Hence, at least one augmenting edge cannot have auxiliary nodes for both its end points and so we need at least three augmenting edges. 
    By that argument then three edges is sufficient to augment this gadget. One could just match auxiliary vertices with two edges and then add one more edge connecting an auxiliary vertex to a vertex of either a variable or clause diamond. 
\end{proof}

We will show later that due to features of the larger construction. There will be only two possible minimal augmentations of a literal gadget. 
These two augmentations are shown in \Cref{fig:2-3LiteralM} by blue and red edges. If the gadget is augmented by the blue set of edges, we say that the literal is \emph{true}, if augmented by the red edges we say it is \emph{false}.

Our final gadget is a clause gadget (\Cref{fig:2-3ClauseM}). This gadget consists of three pairs of diamonds (clause diamonds) with their boundary vertices connected such that they border an enclosed region. Inside of which there are two triangles each with a vertex on the boundary of this region, and who have a shared vertex. We call these the \emph{value triangles} of the clause. 
There is a literal gadget adjacent to each pair of clause diamonds. 
We may call these adjacent literals.
With these gadgets we now prove our main theorem.

\begin{restatable}{lemma}{23Clause}\label{lemma:2-3Clause}
    The clause gadget in \Cref{fig:2-3ClauseM} can be minimally augmented with 7 edges if and only if at least one adjacent literal is true. Otherwise 8 are necessary.
\end{restatable}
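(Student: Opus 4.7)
The plan is to establish the tight bound of 7 (resp., 8) augmenting edges for the clause gadget in three stages: a universal lower bound of 7 that holds for any augmentation, a matching construction achieving 7 when at least one adjacent literal is true, and an improved lower bound of 8 when all three adjacent literals are false.

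First, I would derive the universal lower bound of 7 by enumerating the 2-cuts inside the clause gadget. Each of the 6 clause diamonds, as in the variable gadget analysis (\Cref{lemma:2-3Variable}), has a pair of central vertices separated from the rest by its two boundary vertices; hence every such pair must receive at least one incident augmenting edge, and since a single edge can cover at most two such pairs, at least $3$ edges are needed just to address the clause diamonds. The two value triangles contribute further 2-cuts: their non-boundary vertices have degree $2$ in the gadget, forcing each to be an endpoint of some augmenting edge, and the shared vertex of the two value triangles additionally participates in 2-cuts with each boundary vertex. A handshake-style bookkeeping on the required incidences (each augmenting edge providing only $2$ incidences) yields the lower bound of $7$.

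Second, I would exhibit an explicit $7$-edge augmentation under the assumption that at least one adjacent literal is true. In the true literal's gadget, the blue matching leaves certain auxiliary vertices free to serve as endpoints of edges reaching into the clause (these are the dashed ``true'' edges in \Cref{fig:2-3LiteralM}). I would route a single augmenting edge from such an auxiliary vertex to a value-triangle vertex, so that this one edge simultaneously resolves a literal-side 2-cut and a value-triangle 2-cut. Together with three matching edges inside the three pairs of clause diamonds and three further edges that resolve the remaining value-triangle cuts, this totals $7$ augmenting edges. I would then verify that all 2-cuts inside the clause are indeed resolved in this configuration.

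Third, for the necessity of $8$ edges when all literals are false, I would observe that the red matching in each literal gadget commits the auxiliary vertices that would otherwise be available to host a dashed augmenting edge from the clause. Consequently, no single augmenting edge can simultaneously save an incidence on both the literal side and the clause side. Formalizing this via an incidence-counting argument on the deficit of the $12$ clause-diamond central vertices, the interior vertices of the two value triangles, and the shared vertex, I would show that the $14$ incidence slots provided by $7$ edges are insufficient to cover the combined deficit, forcing an $8$th edge. The main obstacle will be the careful case analysis relating the state of each literal (blue vs.\ red matching) to which auxiliary vertices remain available for hosting edges into the clause; in particular, one has to argue that in the all-false case, even the most parsimonious $7$-edge augmentation necessarily leaves an unresolved 2-cut incident to the value triangles.
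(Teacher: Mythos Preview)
Your Stage~2 construction is the main problem. You propose routing an augmenting edge from a true literal's auxiliary vertex directly to a value-triangle vertex, but in the gadget's planar layout the two clause diamonds of that literal lie strictly between the literal's auxiliary vertices (outside) and the value triangles (inside the enclosed region); such an edge cannot be drawn without crossing existing edges. More seriously, your $7$-edge budget of ``three matchings inside the three diamond pairs, three value-triangle edges, one aux--value edge'' never pays for the clause-side auxiliary vertices of the \emph{false} literals: each false literal has two degree-$2$ auxiliary vertices on the clause side that must receive an augmenting edge, and in the planar layout the only vertices they can reach are the adjacent clause diamonds. The paper's mechanism is the one you are missing: for a false literal, two edges join its auxiliary vertices to the two adjacent clause diamonds (covering both simultaneously); for a true literal, a \emph{single} edge matches its two clause-side auxiliary vertices to each other, which frees the two adjacent clause diamonds to be connected inward to the two value vertices. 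This yields $1 + 2\cdot 2 + 2 = 7$ when one literal is true.

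Your Stage~3 deficit-counting idea is in the right spirit, but the actual obstruction is simpler and more structural than a global handshake bound: when all three literals are false, every one of the six clause-side auxiliary vertices must be joined to its adjacent clause diamond (six edges, and these are forced by planarity, not just by counting). After those six edges, all six diamonds are already augmented, so the two edges still required for the value-triangle vertices cannot do double duty with anything---hence eight. Your incidence count would have to reproduce this forcing argument, which is really a planarity/visibility constraint rather than a pure degree deficit; as stated, ``$14$ slots are insufficient'' does not obviously go through, since seven edges do provide $14$ incidences and the raw deficit is at most $14$.
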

\begin{proof}
    Each diamond can either be connected to an auxiliary vertex of a literal gadget or a vertex of a value triangle.
    Clearly the best edges to add would be between auxiliary vertices and the clause diamond as this augments both the clause diamonds and the auxiliary vertices of the literal. 
    In fact if a literal is false, these edges must be added or else the auxiliary vertices will be only degree 2.
    However, if a literal is true, these edges are not necessary, this can be observed by the literal at the bottom of \Cref{fig:2-3ClauseM}.
    Further, note, the value vertices of the clause both need to be connected to a diamond of a literal gadget.
    Therefore, if every literal is false, a total of eight edges are required to augment a clause gadget. Two edges for each literal and two edges for the value vertices of the clause gadget.
    However, if at least one literal is true, the two auxiliary vertices can be connected to each other. Therefore at most seven edges are necessary, one for the true literal, two for each false literal and two for the value vertices of the clause gadget. 
    Note that at least seven are always necessary.
    If more than one literal is true, then it is actually more efficient to still connect each auxiliary vertex to a diamond instead of each other. This only uses two edges to augment both diamonds and the auxiliary vertices. Connecting the two auxiliary vertices would result in three edges.
\end{proof}

\begin{figure}[ht]
    \centering

    \begin{subfigure}{0.4\textwidth}
        \centering
        \includegraphics[width=\textwidth]{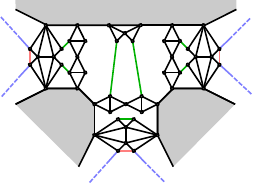}
         \subcaption{}
         \label{fig:2-3ClauseM}
    \end{subfigure}
    \hfill
    \begin{subfigure}{0.5\textwidth}
        \centering
        \includegraphics[width=\textwidth]{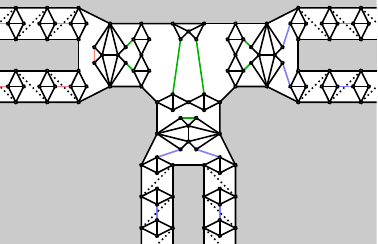}
        \subcaption{}
        \label{fig:2-3ClauseEx}
    \end{subfigure}
    \hfill
    \caption{(a) The clause gadget and (b) the connection between the clause gadget and literal and variable gadgets. Black dotted lines will be explained later.}
    \label{fig:2-3ClauseGadgets}
\end{figure}

\begin{restatable}{lemma}{23NPC} \label{lem:2-3NPC}
    Given a 2-connected planar graph $G=(V,E)$ and an integer $\tau$, it is NP-hard to decide whether there is a set $F$ of edges with $|F|\le \tau$ such that $G'=(V,E\cup F)$ is a 3-connected planar graph.
\end{restatable}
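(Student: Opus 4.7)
The plan is to reduce from \planarsat. Given a planar 3-CNF formula $\varphi$ with $n$ variables and $m$ clauses, and with $v_i$ occurrences of each variable $x_i$, I will build a 2-connected planar graph $G_\varphi$ by placing one variable gadget on $2v_i$ diamonds per variable (\Cref{fig:2-3Variable}), one clause gadget per clause (\Cref{fig:2-3ClauseM}), and, for each occurrence of $x_i$ in some clause $C_j$, one literal gadget (\Cref{fig:2-3LiteralM}) inserted into the variable gadget of $x_i$ on the ``top'' or ``bottom'' side according to its polarity (as in \Cref{fig:2-3LiteralFull}) with its two clause-side diamonds identified with a pair of clause-diamonds in the gadget of $C_j$. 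The construction is planar because the variable-clause incidence graph of $\varphi$ is planar, is 2-connected by design, and has size polynomial in $n+m$. I set the threshold $\tau := 4\sum_{i=1}^n v_i + 7m$, aggregating the lower bound of $v_i$ edges per variable gadget, $3$ edges per literal gadget, and $7$ edges per clause gadget under the hypothesis that each clause contains a true literal.

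The forward direction follows by assembling the gadget lemmas. Given a satisfying assignment $\alpha$, I augment each variable gadget with the blue matching of \Cref{fig:2-3Variable} when $\alpha(x_i)=\text{true}$ and the red matching otherwise (\Cref{lemma:2-3Variable}), consistently orienting each incident literal gadget into its ``true'' or ``false'' configuration, each of which admits a $3$-edge augmentation. Since $\alpha$ satisfies every clause, each clause gadget then admits a $7$-edge augmentation by \Cref{lemma:2-3Clause}, and the total is exactly $\tau$.

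The converse is the heart of the reduction. Given $F$ with $|F|\le\tau$ producing a $3$-connected planar $G'$, I must argue that $F$ decomposes additively across gadgets, each piece hitting its lower bound. The crucial rigidity claim is that the only matchings on a variable gadget achieving $v_i$ edges are the two matchings of \Cref{fig:2-3Variable}: any attempt to alternate between them along the cyclic chain of $2v_i$ diamonds leaves some pair of central vertices separated by a boundary $2$-cut and forces an extra edge. From this, one reads off a truth assignment $\alpha$, and then the configuration forced on each incident literal gadget, together with the clause gadget's $7$-vs-$8$ dichotomy (\Cref{lemma:2-3Clause}), implies that $\alpha$ satisfies every clause.

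The main obstacle I anticipate is ruling out ``long-range'' augmenting edges that span multiple gadgets in an attempt to resolve several $2$-cuts simultaneously and undercut the summed gadget bounds. To control these, I will use that each diamond's boundary-vertex pair is a local $2$-cut separating the diamond's central vertices from the rest of $G_\varphi$, so any resolution must place at least one endpoint of a new edge inside the diamond; combined with the planarity of $G'$, this forces $F$ into the per-gadget block structure needed for the additive decomposition. The extension to the topological and geometric settings claimed in \Cref{thm:hardness} should then follow because each gadget's embedding is essentially unique up to reflection, and any alternative embedding only makes augmentation more expensive.
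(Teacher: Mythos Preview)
Your approach is the same as the paper's—a reduction from \planarsat\ using the variable, literal, and clause gadgets of \Cref{fig:2-3variableFig,fig:2-3LiteralGadgets,fig:2-3ClauseGadgets}—but there is a genuine gap in the converse direction, and a bookkeeping error in~$\tau$.

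The gap concerns embedding rigidity. The lemma is stated in the \emph{abstract} planar setting, so $G'$ need only be planar, not compatible with any fixed drawing; your argument that ``combined with the planarity of $G'$, this forces $F$ into the per-gadget block structure'' implicitly assumes the intended embedding. But $G_\varphi$ as you describe it is only $2$-connected, and at any $2$-cut one side can be reflected into a different face. For instance, the boundary vertices of a literal gadget form a $2$-cut separating its clause from its variable; flipping the clause gadget through this cut places clause vertices into faces incident to central vertices of the variable cycle, and your local $2$-cut counting no longer determines which new edges can be added planarly. The paper addresses this explicitly by thickening the ``gray border'' of every gadget: between each pair of consecutive boundary vertices it inserts two extra vertices to form a $K_4$, so that the entire border sits in a $3$-block and the cyclic order of boundary vertices is forced in every planar embedding of $G_\varphi$. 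The only remaining embedding freedom is then flipping individual diamonds, which does not help an adversary. Your proposal is missing this (or an equivalent) rigidity device, and without it the ``ruling out long-range edges'' step does not go through.

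Separately, your threshold $\tau = 4\sum_i v_i + 7m$ double-counts. The seven edges in \Cref{lemma:2-3Clause} already include the edges incident to the clause-side auxiliary vertices of all three adjacent literal gadgets—this is exactly where the $7$-versus-$8$ dichotomy comes from—so adding a further $3$ per literal on top of $7$ per clause counts those edges twice. In the paper's accounting, a literal gadget replaces two diamonds of the variable gadget and is cost-neutral on the variable side; $\tau$ is simply half the total number of variable diamonds plus $7m$, with the literal's remaining edges absorbed into the clause count.
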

\begin{proof}
    We reduce from \planarsat, in particular given a \planarsat formula $\varphi$ with $n$ variables and $m$ clauses we construct a 2-connected planar graph $G$ and ask if it can be planarly augmented to 3-connectivity with at most $\tau$ edges. 

    We construct $G$ as follows: Take the planar embedding $P$ of $\varphi$. For any variable $x_i$ of $\varphi$, in $P$ $x_i$ and its adjacent edges form a star graph. Take an Eulerian circuit of this star and take its planar embedding (allowing curved arcs). We create a variable gadget for $x_i$ using an even number of diamonds embedded according to the Eulerian Circuit of $x_i$'s star. 
    Each clause of $P$ is replaced by a clause gadget, and where each clause is adjacent to a variable we replace two diamonds of the variable gadget with a literal gadget (see \Cref{fig:2-3LiteralFull}.
    We want to ensure that there is a unique embedding for $G$, except for flipping diamonds. 
    Note, every gadget presented has gray border.
    Following this boundary, we take each pair of sequential vertices and add two extra vertices and create a $k_4$ within this gray region. This joins every two sequential vertices in a 3-block, a minimally 3-connected graph of 4 vertices. 
    Therefore, the clockwise order of these ``boundary vertices'' cannot be modified as the edges of these 3-blocks will cross violating the planarity of $G$.
    We can now compute a value for $\tau$. This will be half the number of diamonds present in every variable gadget (by \Cref{lemma:2-3Variable}) plus 7 for each clause gadget (so $7m$ by \Cref{lemma:2-3Clause}). Note, the number of edges required to augment a literal gadget is equal to the number of edges that would have been required to augment the diamonds it replaced. Hence, literal gadgets are unnecessary for this accounting. Therefore by the validity of \Cref{lemma:2-3Variable,lemma:2-3Clause} there will only be a solution if every clause has at least one adjacent true literal gadget.
    If there is a set of $ \le \tau$ edges that augments $G$, then we can find a satisfying assignment for $\varphi$ by inspecting the assignment of each variable gadget in the augmented graph $G'$. By construction this will mean each clause in $\varphi$ will have at least one true literal, hence $\varphi$ is satisfied.
    
\end{proof}

\begin{restatable}{corollary}{23topo} \label{cor:23-topo}
    \Cref{lem:2-3NPC} still holds if $G$ is a topological (or geometric) plane graph and if $G'$ is required to be a compatible topological (resp., geometric) plane or 1-plane graph.
\end{restatable}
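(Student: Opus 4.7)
The plan is to verify that the reduction from \Cref{lem:2-3NPC} transfers essentially unchanged to the three new settings (topological plane, geometric plane, and 1-plane extensions of either). I would first observe that the construction of $G$ from \Cref{lem:2-3NPC} already admits a geometric plane realization: the 3-block ``gray border'' surrounding every gadget locks the combinatorial embedding up to flips of individual diamonds, and the freedom in placing the vertices is sufficient to straighten all edges into line segments. Hence $G$ is a topological and geometric plane graph from the outset.

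For the topological and geometric plane augmentation cases, the entire analysis in \Cref{lemma:2-3Variable,lemma:2-3Clause} is already local and uses only matchings and short edges that can be routed without new crossings inside the faces already carved out by $G$. The upper bound construction of $\tau$ edges from \Cref{lem:2-3NPC} is therefore a valid topological (and, by the straight-line realization, geometric) augmentation, and the lower bound arguments are purely combinatorial and therefore unaffected.

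The main new content is the 1-plane setting, where each augmenting edge may carry up to one crossing. Here the upper bound carries over for free, since the planar augmentation of \Cref{lem:2-3NPC} is a valid 1-plane augmentation with zero crossings. For the lower bound, I would verify that the bounds in \Cref{lemma:2-3Variable,lemma:2-3Clause} depend only on abstract 2-cut separators and on vertex-degree counting, not on the drawing. In particular, in a variable gadget with $2v$ diamonds, each of the $2v$ pairs of central vertices must receive at least one incident augmenting edge to destroy its local 2-cut, and each edge has only two endpoints; hence at least $v$ augmenting edges are required irrespective of whether their drawing is plane or 1-plane. Analogous purely combinatorial counting arguments yield the bounds $3$ per literal gadget and $7$ per clause gadget with at least one true literal.

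The step I expect to be the most delicate is ruling out ``shortcut'' edges in the 1-plane case: an augmenting edge with one crossing can reach vertices located in non-adjacent regions of $G$, and one might worry that such edges could simultaneously serve as augmenting edges for multiple distant gadgets. The way to rule this out is to observe that every lower bound is a counting argument at the level of abstract endpoints: no matter how an augmenting edge is drawn, it contributes exactly its two endpoints to whichever gadgets contain them, and the counting yields the same threshold. Consequently, the same value of $\tau$ witnesses NP-hardness in all four combinations of (topological vs.\ geometric) with (plane vs.\ 1-plane), completing the proof.
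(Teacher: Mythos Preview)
For the topological and geometric \emph{plane} settings your argument is essentially the paper's: the construction is drawn as depicted, the gray $3$-block border pins the embedding, and the gadget lemmas go through verbatim.

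For the \emph{1-plane} setting, however, the paper does something different from what you propose. It does \emph{not} keep $G$ as in \Cref{lem:2-3NPC} and argue that the lower bounds survive. Instead it \emph{modifies} $G$ by turning the black dashed segments drawn in each gadget (between consecutive diamonds, across the literal, etc.) into actual edges of $G$. With these extra edges present, every augmenting edge of the intended solution crosses exactly one of them, so the optimal augmentation is genuinely $1$-plane; and, crucially, the dashed edges fill the faces so that a $1$-plane augmenting edge cannot reach farther than a plane edge did in the original construction. The gadget lower bounds then transfer without change.

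Your endpoint-counting route does not close the gap you yourself flag as ``most delicate.'' The bound that matters for correctness of the reduction is the $8$-edge lower bound for a clause gadget whose three literals are all false, and this bound is \emph{not} a pure endpoint count. There are fourteen leaf components on the clause side (six clause-side auxiliary vertices, six clause-diamond central pairs, two value vertices), so the endpoint floor is only $7$. What forces the eighth edge in the plane case is that certain pairs of these components---for instance an auxiliary vertex and a value vertex, which lie on opposite sides of a clause-diamond pair---cannot be joined by a crossing-free segment. In the $1$-plane setting without the dashed edges, a single crossing may well allow such a pairing; one can then serve two auxiliary vertices, both value vertices, the value interior, and one literal interior with just two edges, and finish the remaining components with five more, achieving $7$ edges for an all-false clause and breaking the reduction. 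Your observation that ``each edge contributes only its two endpoints'' is true but does not distinguish $7$ from $8$ here. The paper's dashed edges are precisely the device that rules these shortcuts out.
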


Note, every argument that was made was independent of the magnitude or drawing of any edge (as long as they are planar). Hence our reduction easily extends to the topological and geometric settings, by just embedding each gadget (and therefore $G$) as depicted. The reduction is also easily adapted to the 1-planar setting. In each of our gadgets, we included dashed lines, like the ones between every diamond in the variable gadget. By replacing these with an edge in $G$, augmenting $G$ will necessarily result in a 1-planar graph.

\begin{restatable}{corollary}{23DegCor} \label{cor:2-3-deg}
    Given a min-degree-2 planar graph $G=(V,E)$ and an integer $\tau$, it is NP-hard to decide whether there is a set $F$ of edges with $|F|\le\tau$ such that $G'=(V,E\cup F)$ is a min-degree-3 planar graph.
\end{restatable}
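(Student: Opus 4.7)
The plan is to adapt the reduction of Lemma \ref{lem:2-3NPC} so that the NP-hardness is driven by degree constraints rather than by the existence of $2$-cuts. I would modify each diamond in the variable, literal, and clause gadgets by removing the diagonal edge between its two central vertices. In the resulting planar graph $G$, the former central vertices (together with the four auxiliary vertices of each literal gadget and the degree-$2$ apices of the clause gadgets' value triangles) have degree exactly $2$, while all other vertices retain degree at least $3$. The 3-block ``border'' construction used to pin down the embedding in Lemma \ref{lem:2-3NPC} is replaced by an analogous planar structure of min degree $2$ that still fixes the cyclic order of the gadgets.

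I would then replay the counting arguments of Lemmas \ref{lemma:2-3Variable} and \ref{lemma:2-3Clause} in the min-degree-$3$ setting. Each degree-$2$ vertex needs at least one new incident edge, so a minimum planar augmentation corresponds to a (near-)perfect matching on the degree-$2$ vertices. Planarity restricts the available matchings inside each variable gadget to the two ``truth-value'' matchings of Figure \ref{fig:2-3Variable}, and the clause gadget again admits its smaller optimum precisely when at least one incident literal gadget is set to ``true''. Summing the per-gadget bounds yields a threshold $\tau$ such that a planar min-degree-$3$ augmentation of $G$ with at most $\tau$ new edges exists if and only if the input \planarsat{} instance is satisfiable.

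The main obstacle I expect is ruling out cheap ``internal'' matchings that would resolve degree deficits without encoding any satisfying assignment---most notably, the edge between the two central vertices of a single modified diamond, which would locally boost both to degree $3$ with a single edge and effectively reinstate the just-deleted diagonal. To block such edges, I plan to embed each modified diamond inside a triangulated obstruction that separates its two central vertices into different faces of $G$, so that drawing the reinstated diagonal would create a crossing and violate planarity; analogous obstructions will be placed between any two degree-$2$ vertices that share a gadget but should not be matched together. Once all such obstructions are in place, the remainder of the proof parallels that of Lemma \ref{lem:2-3NPC} and yields NP-hardness of the planar min-degree-$3$ augmentation problem.
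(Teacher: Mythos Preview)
Your plan differs from the paper's in a way that makes the argument harder than it needs to be. The paper does not delete the diagonal of each diamond; it \emph{contracts} it. Contracting $c_1c_2$ merges the two central vertices into a single vertex $c$ of degree exactly~$2$ (adjacent only to the two boundary vertices). This immediately dissolves the obstacle you flag: there is no second central vertex to which $c$ could be cheaply re-matched, so the only planar way to raise $\deg(c)$ to~$3$ is to connect $c$ to the contracted central vertex of an adjacent diamond, and the two truth-value matchings of Lemma~\ref{lemma:2-3Variable} reappear with no auxiliary obstruction. The $K_4$ border of Lemma~\ref{lem:2-3NPC} can be kept verbatim---all of its vertices already have degree at least~$3$---so your proposed replacement of the border is also unnecessary.

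Your deletion-plus-obstruction route is not clearly wrong, but the obstruction step is underspecified and the simple choices create new problems. For instance, inserting the other diagonal $b_1b_2$ does separate $c_1$ and $c_2$ into distinct faces, but now each diamond carries \emph{two} degree-$2$ vertices that must each be matched externally, and you would still need to argue that the matchings on the two sides of the chain are coupled into a single truth value and that the accounting of Lemmas~\ref{lemma:2-3Variable}--\ref{lemma:2-3Clause} survives the doubling. Contraction leaves exactly one degree-$2$ vertex per diamond, so the original counting transfers unchanged; it is the cleaner fix and the one the paper actually uses.
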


We can use the same reduction with a slight tweak to show that augmenting a planar graph with min-degree-2 to a planar graph with min-degree-3 is NP-complete as well. 
Note the reduction as it stands will not work, as the variable gadgets are already min-degree-3.
However, if we simply contract the edge of every diamond in the reduction we resolve this issue. Variable gadgets will need to augmented just as they were in the connectivity case, one edge for every two contracted diamonds.
Note, we could not use these contracted diamonds for our original reduction, as a 2-cut would occur in a literal gadget assigned false. This can be seen in the top left corner of \Cref{fig:2-3ClauseEx}. 
Therefore we have \Cref{cor:2-3-deg}.

\subsection{\texorpdfstring{$3\to 4$}{3 to 4} Connectivity Augmentation is NP-Complete}

Our reduction works for all settings (abstract, topological and geometric) since 3-connected graphs have a unique combinatorial embedding and the reduction can be embedded geometrically.
We use three gadgets, a variable, a literal, and a clause gadget. 
We ensure that the entire construction is 4-connected except for some select degree-3 vertices.
The variable gadget is shown in~\Cref{fig:3-4Wire-500}.
The ``building block'' of our construction is essentially a wheel graph on four vertices $W_4$.
We modify this $W_4$, removing two edges so that one vertex is degree-3, one is degree-1, and two are degree-2 (see \Cref{fig:mod-k4-500}). 
We call the degree three vertex the \emph{central} vertex and the others the \emph{boundary} vertices.
We construct a variable gadget by ``gluing'' an even number of these modified $W_4$'s together so any two consecutive $W_4$ have two shared boundary vertices.
Hence every boundary vertex is identified with two separate $W_4$'s, while the central vertices are unique to each $W_4$, so every central vertex is precisely degree three and must therefore be augmented.

\begin{figure}[ht]
        \centering

        \begin{subfigure}{0.2\textwidth}
            \centering
            \includegraphics[scale=2.7]{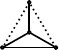}
             \subcaption{}
             \label{fig:mod-k4-500}
        \end{subfigure}
        \hfill
        \begin{subfigure}{0.7\textwidth}
            \centering
            \includegraphics[scale=2.7]{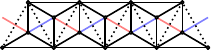}
            \subcaption{}
            \label{fig:3-4Wire-500}
        \end{subfigure}
        \hfill
        \caption{(a) Modified $W_4$, removed edges indicated with dashed line. (b) A variable gadget. Red segments indicate a false assignment, blue indicate a true assignment.}

        \label{fig:3-4variable-500}
    \end{figure}

\begin{restatable}{lemma}{threefourVariable}\label{lemma:3-4Variable}
    The variable gadget (Figure~\ref{fig:3-4Wire-500}) constructed with $2k$ modified $W_4$'s can be minimally augmented with $k$ edges, and there are two such augmentations. 
\end{restatable}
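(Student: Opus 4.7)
The plan is to mirror the argument from Lemma~\ref{lemma:2-3Variable} closely, exploiting the structural similarity between diamonds and modified $W_4$'s. First, I would establish the lower bound of $k$ edges purely from degree constraints: each of the $2k$ central vertices has degree exactly $3$ in the unaugmented gadget, and hence needs at least one incident augmenting edge to have a chance of lying in a $4$-connected graph. Since any single new edge can contribute to the degree of at most two central vertices, we need at least $\lceil 2k/2\rceil = k$ new edges, with equality only if the augmenting set forms a perfect matching among the central vertices.

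Next, I would exhibit the two candidate augmentations depicted in red and blue in Figure~\ref{fig:3-4Wire-500}. Each of them matches consecutive central vertices in the cyclic ordering induced by the gluing sequence of the $2k$ modified $W_4$'s, and the two matchings are offset by one position. I would verify that each matching is realizable as a planar (topological/geometric) augmentation by routing the $k$ new edges through the faces that sit between consecutive modified $W_4$'s and are bounded in part by the leaf vertex of each unit.

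The main work is then to verify $4$-connectivity of the resulting graph. For each of the two matchings, I would argue via Menger-type reasoning: pick any candidate cut set $S$ of size $3$, and show that its removal leaves the gadget connected. The key observation is that outside of the central vertices the gadget already contains $4$ vertex-disjoint paths between any two boundary/leaf vertices (thanks to the $K_4$-like local structure and the ``ladder'' of shared boundary vertices), while each central vertex now has a fourth neighbor in the opposite $W_4$ via its matching partner, so any $3$-cut through a central vertex is neutralized by the matching edge. Combining these with the established lower bound yields exactly $k$ edges and both matchings as valid solutions.

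Finally, I would establish uniqueness among $k$-edge augmentations. Any such augmentation is a perfect matching of the $2k$ central vertices, and I would show that matching two non-consecutive central vertices necessarily leaves a $3$-cut (consisting of boundary vertices separating a ``skipped'' subchain from the rest) or else violates planarity when the embedding is fixed; hence only the two interleaving cyclic matchings survive. \textbf{The main obstacle} will be the $4$-connectivity verification: unlike the $2\to 3$ case, where avoiding a single $2$-cut per diamond sufficed, here one has to rule out every triple of vertices as a separator, and the interaction between shared boundary vertices and matching edges must be handled carefully; I expect to package this via Menger's theorem applied inductively along the cyclic chain, exactly as in the inductive argument used in the proof of Theorem~\ref{thm:topological}.
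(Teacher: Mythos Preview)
Your proposal is correct and follows essentially the same route as the paper: a lower bound via the $2k$ central vertices each needing a new incident edge, the two interleaving cyclic matchings as the candidate solutions, and planarity to force matched pairs to be consecutive. The only notable difference is that you plan a full Menger-style verification of $4$-connectivity, whereas the paper argues more tersely that once two adjacent central vertices are joined, the boundary-vertex $3$-cuts of their $W_4$'s disappear; your extra rigor is not wasted, since the paper's proof is quite informal on this point.
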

\begin{proof}
    As each central vertex is still unique to its $W_4$, the boundary vertices of each $W_4$ form a 3-cut disconnecting their central vertex.
    Therefore, augmenting this gadget to be 4-connected requires adding an edge from each central vertex to a vertex in a separate $W_4$. This augmentation must be planar and so each central vertex can only be connected to the central vertices of its two adjacent $W_4$'s.
    Note, once a central vertex is connected to another central vertex, each of their adjacent boundary vertices no longer form a $3$-cut. 
    So we only need to connect every other pair of central vertices, and in fact this would be the minimal augmentation for this gadget. 
    Each central vertex must be connected to at least one other vertex, and they can only be connected to other central vertices. 
    Therefore connecting each central vertex to precisely one central vertex must be minimal, and clearly the number of edges required is precisely half the number of central vertices.

    Note there are two unique ways to do this. 
    Arbitrarily identify one $W_4$ as the ``initial'' and orient the variable cycle from this $W_4$.
    The two possible augmentations are therefore either each central vertex is attached to its clockwise neighbor, or each central vertex is adjacent to its counter clockwise neighbor. 
    If clockwise neighbors are adjacent we say the variable is set to ``true'' and we say it is ``false'' if counter clockwise neighbors are connected.
\end{proof}
A literal gadget (\Cref{fig:3-4Literal-500}) is composed of six vertices, and similar to the modified $W_4$'s used in the variable gadget. 
Again, as with the variable gadget, the three boundary vertices form a 3-cut that disconnects the three central vertices. 
As depicted in \Cref{fig:3-4literal-conversion-500} literal gadgets are ``inserted'' into a variable gadget, essentially replacing two of $W_4$'s of a variable.
Consider a literal gadget inserted into a variable $x_i$.
There are only two possible locally minimal edge augmentations of the literal gadget, as depicted in \Cref{fig:3-4literal-conversion-500}.
For a positive literal (\Cref{fig:3-4literal-conversion-500}(top)) if $x_i$ is true (resp., false) we connect the degree-3 vertices with the blue (resp., red) matching. 

\begin{figure}[H]
        \centering

        \begin{subfigure}{0.25\textwidth}
            \centering
            \includegraphics[width=\textwidth]{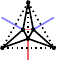}
             \subcaption{}
             \label{fig:3-4Literal-500}
        \end{subfigure}
        \hfill
        \begin{subfigure}{0.65\textwidth}
            \centering
            \includegraphics[width=\textwidth]{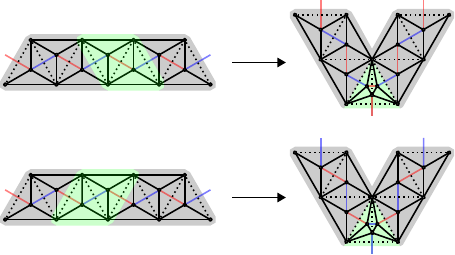}
            \subcaption{}
            \label{fig:3-4literal-conversion-500}
        \end{subfigure}
        \caption{(a) A positive literal gadget (b) How a literal gadget is ``inserted'' into a variable gadget. Two $W_4$'s of the variable are replaced by one literal. A clause gadget is adjacent to the exposed dotted line. On the top is a positive literal, the bottom shows a negated literal.}
        \label{fig:3-4literalGadgets-500}
    \end{figure}



\begin{figure}[H]
        \centering
        \begin{subfigure}{0.3\textwidth}
            \centering
            \includegraphics[width=\textwidth]{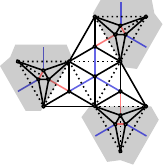}
            \subcaption{}
            \label{fig:3-4Clause-cropped-500}
        \end{subfigure}
        \hfill
        \begin{subfigure}{0.6\textwidth}
            \centering
            \includegraphics[width=.7\textwidth]{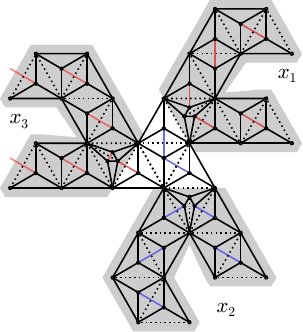}
            \subcaption{}
            \label{fig:3-4Clause-Full-500}
        \end{subfigure}
        \caption{(a) A clause gadget (b) Example where two variables are set to true and one is set to false. Variables are highlighted in grey.}
        \label{fig:3-4clause-500}
\end{figure}

The clause gadget (\Cref{fig:3-4clause-500})
requires 3 new edges, and is achievable iff a literal is true.

\begin{restatable}{lemma}{threefourClause}\label{lemma:3-4Clause}
    %
    The clause gadget shown in \Cref{fig:3-4clause-500} can be augmented with exactly three edges if and only if one adjacent literal gadget is true. Otherwise 4 edges are necessary. 
\end{restatable}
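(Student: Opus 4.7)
The approach follows the pattern established for the $2\to 3$ clause gadget (\Cref{lemma:2-3Clause}) but adapted to the $3\to 4$ setting, so the plan is to identify the internal $3$-cuts of the clause, and to argue that each literal position either contributes an augmentation edge ``for free'' (when the literal is true) or forces the clause to spend an edge on itself (when the literal is false). I would first catalogue the degree-$3$ vertices and the resulting $3$-cuts inside the clause gadget from \Cref{fig:3-4Clause-cropped-500}. Based on the analogous structure of the variable and literal gadgets, I expect exactly three ``literal-facing'' cuts (one at each of the three sides where a literal attaches) together with one additional internal cut that is independent of any literal.

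Next, I would analyze the interface with an adjacent literal gadget. From the literal-gadget analysis (\Cref{fig:3-4literal-conversion-500}), a literal set to true uses the matching whose edge along the clause-facing side is already present, leaving a clause-facing vertex whose degree needs to be increased only by one more edge; this single edge can be routed into the clause interior so that it simultaneously resolves the corresponding literal-facing cut of the clause \emph{and} the independent internal cut. When the literal is false, the opposite matching is used and the clause-facing vertex is already taken care of by the literal side, so that the clause's literal-facing cut must be resolved by a dedicated clause-internal edge that cannot be shared.

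The edge count then reduces to a matching-style counting argument. Independently of the assignment, the three literal-facing cuts plus the internal cut give a lower bound of $4$ edges if no shared edge is available, i.e.\ if all three literals are false. If at least one literal is true, the savings described above collapse one literal-facing cut and the internal cut into a single edge, giving a lower bound of $3$. Both bounds are tight by the explicit constructions suggested by \Cref{fig:3-4Clause-Full-500}: for a true literal draw one edge from its exposed clause-facing vertex into the clause interior so it covers both cuts; for every false literal add one dedicated edge covering its cut; and if no literal is true, add one extra edge for the standalone internal cut.

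The main obstacle will be the lower bound when the clause is unsatisfied: showing that no clever configuration of three edges can cover all four cuts simultaneously. This requires verifying, using the rigid (essentially unique) embedding forced by $3$-connectedness together with planarity, that (a) the internal cut is vertex-disjoint from the three literal-facing cuts so a single new edge cannot straddle more than one of them, and (b) no auxiliary vertex lying outside the clause and its three adjacent literals is reachable by a new edge without crossing the clause boundary. Once these separation properties are in place, the counting argument becomes immediate.
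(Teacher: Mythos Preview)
Your overall counting picture---three ``term'' $3$-cuts plus one ``value'' $3$-cut, with sharing possible exactly when some literal is true---matches the paper. However, the mechanism you describe for the interface and your proposed lower-bound argument both contain a genuine error.

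First, the interface is reversed. In the paper's gadget, a \emph{true} literal is already fully $4$-connected on its own; the savings come from the fact that the clause's term central vertex is then free to connect \emph{inward} to the value central vertex, covering both cuts with one edge. A \emph{false} literal, by contrast, still has an unmatched degree-$3$ central vertex on the clause side, and the only way to augment it is to connect the adjacent term central vertex \emph{outward} to that literal vertex. So the direction of the shared edge is opposite to what you wrote, and the constraint in the false case is imposed by the literal, not by the clause.

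Second---and this is the real gap---your proposed lower bound via ``the internal cut is vertex-disjoint from the three literal-facing cuts'' is false and, if it were true, would break your own upper bound. The value $W_4$ shares boundary vertices with each term $W_4$; indeed, a single edge from a term central vertex to the value central vertex resolves both cuts, which is precisely how the true case achieves three edges. The correct reason four edges are needed when all literals are false is not disjointness of cuts but a forcing argument: each false literal has a degree-$3$ vertex whose only planar augmentation partner is the adjacent term's central vertex, so all three term edges are forced outward to the literals, leaving no term available to pair with the value. The fourth edge is then unavoidable for the value. Your disjointness plan would not establish this; you need to argue the forcing instead.
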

\begin{proof}
    
    Let $c$ be a clause of $\varphi$ on three variables $x_i$, $x_j$, and $x_k$.
    For each literal of $c$ we ``glue'' a term $K_4$ to the corresponding variable gadget.
    We take two adjacent $W_4$'s of the variable gadget and replace them with a literal gadget.
    (\Cref{fig:3-4literal-conversion-500}).
    
    To augment the clause gadget, for each of its $W_4$'s the central vertex must be connected to a vertex of another $W_4$. 
    Note, the central vertex of the value $W_4$ can only connect to one of the central vertices of the terms. While each term $W_4$ can either have its central vertex connected to a vertex in an adjacent literal gadget or to the central vertex of the value $W_4$. Therefore, augmenting this gadget requires at least three edges, one edge for each term.
    We now show that this can be augmented exactly three edges only if at least one adjacent literal gadget is true.
    
    If a term is adjacent to a false literal then the only possible minimal augmentation is to connect the central vertex of the term to a central vertex of the literal gadget. This would augment both the term and the literal by adding only one edge. 
    Any other augmentation would require at least two edges, we would have to connect a central vertex of the literal to a variable $W_4$ and we would have to connect the central vertex of the term to the central vertex of the value of the clause.

    Note that a true literal gadget is already 4-connected. Therefore, if a term $W_4$ is adjacent to a true literal then we can connect its central vertex to the central vertex of the value of the clause, augmenting both to be 4-connected. 

    Therefore we can augment a clause gadget with only 3 edges if and only if it is adjacent to a true literal gadget. Otherwise we require 4 edges. 
    If every term is adjacent to a false literal, then 
    each of the false literal gadgets require one edge and the value of the clause requires an edge.
    As the augmentation must be planar, these edges are disjoint, and so we need to add four edges.
    So, $c$ can be augmented with three edges if and only one adjacent literal gadget is true.
\end{proof}

With these three gadgets we can reduce from \planarsat, giving the following lemma.

\begin{restatable}{lemma}{lem34NPC} \label{lemma:3-4NPC}
    Given a 3-connected planar graph $G=(V,E)$ and an integer $\tau$, it is NP-hard to decide whether there is a set $F$ of edges with $|F|\le \tau$ such that $G'=(V,E\cup F)$ is a 4-connected planar graph.
\end{restatable}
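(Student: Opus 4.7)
The plan is to reduce from \planarsat, following exactly the blueprint already established for the $2\to 3$ case (\Cref{lem:2-3NPC}) and using the three $3\to 4$ gadgets developed above: the variable gadget (\Cref{fig:3-4variable-500}), the literal gadget (\Cref{fig:3-4Literal-500}), and the clause gadget (\Cref{fig:3-4clause-500}). Given a planar 3-CNF formula $\varphi$ with $n$ variables and $m$ clauses, together with a fixed planar embedding of its variable-clause incidence graph, I would build $G$ as follows. For each variable $x_i$ with $k_i$ occurrences in $\varphi$, I traverse an Eulerian circuit of the star at $x_i$ and place a variable gadget consisting of $2k_i$ modified $W_4$ copies. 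For every occurrence of $x_i$ in a clause $C_j$, I replace a consecutive pair of $W_4$'s by a literal gadget — positive or negated orientation depending on the sign of the literal (\Cref{fig:3-4literal-conversion-500}). For each clause $C_j$ I drop in a clause gadget in the face that $C_j$ occupies in the embedding, and identify its three exposed dashed edges with the three exposed dashed edges of the corresponding literal gadgets. The whole construction is planar by design.

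Next I would set the budget $\tau = \sum_i k_i + 3m$, where the first term accounts for the $k_i$ matching edges needed inside each variable gadget (\Cref{lemma:3-4Variable}; note that inserting a literal gadget in place of two $W_4$'s does not change the count because the literal still needs exactly the same number of matching edges as the two replaced $W_4$'s, see \Cref{fig:3-4literal-conversion-500}), and the second term is the per-clause cost guaranteed by \Cref{lemma:3-4Clause} when at least one incident literal is true. To verify correctness, I would argue the two implications. For the forward direction: given a satisfying assignment, orient each variable gadget's matching in the ``true'' or ``false'' direction accordingly (\Cref{lemma:3-4Variable}), which in turn selects the locally minimum matching of every incident literal gadget in the corresponding state; then augment each clause gadget using the 3-edge scheme of \Cref{lemma:3-4Clause} made available by the literal that witnesses satisfaction. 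For the reverse direction: any augmentation of size at most $\tau$ must be tight in every variable gadget (otherwise some central vertex remains separated by a 3-cut), which by \Cref{lemma:3-4Variable} forces a global true/false orientation; tightness in every clause gadget then forces, by \Cref{lemma:3-4Clause}, at least one adjacent literal to be true, yielding a satisfying assignment.

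I would then verify that $G$ itself is 3-connected and planar. Planarity is immediate because each gadget is planar and their interfacing exposed edges are placed in the faces dictated by the planar embedding of $\varphi$. For 3-connectivity, the only low-degree vertices are the central vertices of the $W_4$'s and of the literal gadgets, which have degree exactly 3; I would check that no 2-cut exists by observing that removing any two vertices leaves every gadget internally connected (each $W_4$, literal, and clause gadget minus two vertices is connected) and that the inter-gadget connections along the variable cycles and the literal-clause interfaces preserve connectivity. This step is somewhat tedious but mechanical given the cyclic structure of the variable gadget and the clause gadget's internal links.

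The main obstacle, as in \Cref{lem:2-3NPC}, is ensuring that the ``locally optimal'' behaviour of each gadget in isolation is actually forced in any globally minimum augmentation — i.e.\ ruling out clever non-local edge swaps that beat the per-gadget budget. The argument is by an exchange/charging argument: every central vertex contributes a mandatory cost of at least $\tfrac{1}{2}$ (since each new edge can cover at most two distinct degree-3 central vertices and the planar embedding forbids long-range connections), every clause value vertex contributes at least one edge, and a global audit shows that these lower bounds sum to $\tau$. Matching this with the achievability above completes the reduction; since it is polynomial and the certificate is trivially checkable in polynomial time, NP-hardness follows, and the extension to the topological/geometric/1-plane settings proceeds exactly as in \Cref{cor:23-topo} because every gadget has a unique embedding (up to the global reflection permitted in 3-connected planar graphs) and the reduction can be drawn geometrically.
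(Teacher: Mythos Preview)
Your proposal is correct and follows essentially the same reduction as the paper: build variable gadgets along Eulerian circuits of the variable stars, insert literal gadgets in place of $W_4$ pairs, attach clause gadgets, set $\tau$ to be half the number of $W_4$'s plus $3m$, and argue both implications via \Cref{lemma:3-4Variable} and \Cref{lemma:3-4Clause}. You add more care than the paper does on two points the paper glosses over---explicitly verifying that $G$ is $3$-connected, and the charging argument that local gadget optima force the global optimum---and you correctly observe that the embedding-uniqueness issue from the $2\to 3$ case disappears here because $3$-connected planar graphs have a unique embedding; the only minor slip is fixing the gadget size at exactly $2k_i$ copies of $W_4$, which may need padding depending on the rotational spacing of the $k_i$ incidences in the planar embedding of $\varphi$, but this does not affect the argument.
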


\begin{proof}

    Recall that we consider the decision version of this problem. I.e. given a 3-connected plane graph $G$, can at most $\tau$ edges be added to form a 4-connected plane graph $G'$?

    Given a \planarsat\ formula $\varphi$ with $n$ variables and $m$ clauses. Let $P$ be a planar embedding of $\varphi$.
    We now construct an instance of 3-4 graph augmentation, starting with a 3-connected plane graph.
    
    Take every variable's embedding in $P$ each will form a star graph. For each variable we create a corresponding variable gadget by taking a Eulerian circuit of each of these stars and replacing it with a variable gadget. These gadgets are cycles of an even number of modified $W_4$'s. 
    Next replace every clause node with a clause gadget and for each of the three literals insert corresponding literal gadgets into the relevant variable gadgets, replacing two $W_4$'s. We can then compute what the correct value of $k$ is using \Cref{lemma:3-4Variable,lemma:3-4Clause}. That is, $k$ is half the number of $W_4$'s that occur in all variable gadgets plus $3$ for each clause gadget ($3m$). 
    
    Now given a satisfying assignment $S$ for $\varphi$ we can augment each variable gadget accordingly and then if $S$ is a satisfying assignment each clause gadget should have an adjacent true literal gadget. Therefore by \Cref{lemma:3-4Clause} each clause can be augmented with $3$ edges. 
    So $G$ can be augmented with at most $\tau$ edges.
    
    If there is a set of $\tau$ edges, where $\tau$ is half the number of variable $W_4$'s plus 3 times the number of clauses that planarly augments $G$. Then there is a corresponding satisfying assignment of $\varphi$. 
    By \Cref{lemma:3-4Variable} there are only two ways to minimally augment each variable gadget, corresponding to false or true assignments respectively. Therefore, each variable gadget will uniquely encode a corresponding variable assignment, this gives some possible solution $S'$ for $\varphi$. By \Cref{lemma:3-4Clause} we will only be able to augment each clause gadget with three edges if one of its adjacent literal gadgets is true, otherwise at least one will require 4 edges. So we will only have an augmenting set of $\tau$ edges if every clause of $\varphi$ has at least one true literal. 
    So if there is an augmenting set of $\tau$ edges we can get a satisfying assignment $S'$ for $\varphi$ by checking each variable gadget.

    We conclude that $\varphi$ has a satisfying assignment $S$ if and only if $G$ can be augmented with at most $\tau$ edges.
\end{proof}

\begin{restatable}{corollary}{34Topo}
    
    \label{cor:34-topo}
    \cref{lemma:3-4NPC} still holds if $G$ is a topological (or geometric) plane graph and if $G'$ is required to be a compatible topological (resp., geometric) plane or 1-plane graph.
\end{restatable}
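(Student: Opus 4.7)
The plan is to leverage the fact that \Cref{lemma:3-4NPC} constructs a 3-connected planar graph $G$. By Whitney's theorem, every 3-connected planar graph has an essentially unique combinatorial embedding (up to reflection of the sphere), so the topological embedding of $G$ is determined by its abstract structure. Consequently, any augmentation of $G$ to a 4-connected plane graph in the topological setting corresponds to an augmentation in the abstract setting, and vice versa. This immediately promotes \Cref{lemma:3-4NPC} to the topological plane setting.

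For the geometric plane setting, the first step is to verify that each gadget admits a straight-line realization matching its combinatorial embedding, which holds by inspection of \Cref{fig:3-4variable-500,fig:3-4literalGadgets-500,fig:3-4clause-500}. The gadgets are then composed along the Eulerian tour of each variable's star in the planar embedding of $\varphi$, and this composition can be carried out geometrically by placing gadgets in disjoint regions of the plane and connecting them with short straight-line edges; alternatively, one can invoke Tutte's barycentric embedding to obtain a planar straight-line drawing realizing the unique embedding of $G$. The candidate augmentations identified in \Cref{lemma:3-4Variable,lemma:3-4Clause}, namely the two perfect matchings on central vertices of each variable gadget and the constant-size augmentations within literal and clause gadgets, are all representable as straight-line segments inside appropriate faces, so the reduction transfers to the geometric plane setting without modification.

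The main obstacle is the 1-plane variant, where an adversary could try to save augmenting edges by drawing edges with one crossing. Following the strategy foreshadowed in \Cref{cor:23-topo}, the plan is to decorate $G$ with \emph{blocker edges}: additional edges routed through each face in which a potential shortcut crossing edge could live, so that any attempted shortcut would cross at least two edges and thus violate 1-planarity. Concretely, inside each bounded face of the gadget drawings I would insert enough blockers that the only feasible augmenting edges coincide with the planar augmentations identified in \Cref{lemma:3-4Variable,lemma:3-4Clause}. The hard part will be choosing the blockers so that the decorated $G$ remains 3-connected and (for the geometric case) still admits a straight-line realization, while leaving the optimum number of augmenting edges unchanged; a reasonable design is to route each blocker between two already-incident high-degree vertices so that no new minimum-degree constraints arise and the blockers themselves do not need to be augmented. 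With these modifications in place, the same threshold $\tau$ as in \Cref{lemma:3-4NPC} completes the reduction for all four variants named in the corollary.
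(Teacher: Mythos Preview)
Your proposal is correct and follows essentially the same approach as the paper. For the topological and geometric plane settings you invoke Whitney's uniqueness of embedding for 3-connected planar graphs and the straight-line realizability of the gadgets, exactly as the paper does; for the 1-plane variants your ``blocker edges'' are precisely what the paper has already built into the figures as the dashed lines between consecutive $W_4$'s (and analogously in the literal and clause gadgets), which triangulate the relevant faces so that the red/blue matchings become the only feasible 1-planar augmentations.

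The only difference is one of specificity: the paper simply points to the pre-drawn dashed edges and observes that adding them to $G$ forces any augmenting edge to cross exactly one edge, whereas you describe abstractly the properties such blockers must satisfy (preserving 3-connectivity, straight-line realizability, and leaving the optimum $\tau$ unchanged). Your concerns are all handled by those particular dashed edges, so once you identify them the argument is immediate. One small wording slip: you write ``two already-incident high-degree vertices,'' but if they are already adjacent you cannot add an edge between them; you presumably mean two existing high-degree (boundary) vertices lying on a common face.
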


Note, every argument that was made was independent of the magnitude or drawing of any edge (as long as they are planar). Hence our reduction easily extends to the topological and geometric settings, by just embedding each gadget (and therefore $G$) as depicted. The reduction is also easily adapted to the 1-planar setting. In each of our gadgets, we included dashed lines, like the ones between every $W_4$ in the variable gadget. By replacing these with an edge in $G$, augmenting $G$ will necessarily result in a 1-planar graph.

\begin{restatable}{corollary}{34Flip}
    Given a combinatorial triangulation $T$ and an integer $\tau$. Determining if there is a series of $\le \tau$ flips that transforms $T$ into a $4$-connected triangulation $T'$ is NP-complete. 
\end{restatable}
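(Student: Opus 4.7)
The plan is to adapt the $3\to 4$ connectivity augmentation reduction of \Cref{lemma:3-4NPC} to the flip distance problem. The key observation is that a combinatorial triangulation is 4-connected if and only if it has no separating triangle, and that flipping the edge $uv$ of a pair of triangles $\{uvw, uvx\}$ destroys that pair while creating the new edge $wx$. Thus a flip can eliminate a separating triangle in much the same way an augmenting edge resolves a 3-cut in the $3\to 4$ reduction, and the total number of separating triangles one needs to kill controls the flip distance.

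First I would take the 3-connected planar graph $G$ produced by the reduction proving \Cref{lemma:3-4NPC} and complete it to a combinatorial triangulation $T$ by triangulating each face. This triangulation must be chosen carefully so that (a) the only separating triangles of $T$ are the ones corresponding to the 3-cuts of $G$, namely the triangles around the central vertex of each modified $W_4$ in a variable gadget and around the value vertex of each clause gadget; and (b) the edges whose flip resolves such a separating triangle correspond bijectively to the augmenting edges allowed by \Cref{lemma:3-4Variable} and \Cref{lemma:3-4Clause}. To enforce (a) I would triangulate each non-triangular face using a small rigid 4-connected patch inserted along the face boundary so that its internal edges cannot lie on any separating triangle. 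To enforce (b), the triangulation around each central vertex is chosen so that flipping either incident ``bounding'' edge produces precisely the blue or red matching edge already identified in the variable and clause proofs.

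With this setup in place, I would establish a tight correspondence between flips and augmenting edges. A flip that destroys a separating triangle enclosing a central vertex $v$ must replace one of its edges with a chord from $v$ to a neighbor outside the triangle; by construction these chords are exactly the edges admitted by the $3\to 4$ analysis. Conversely, any flip that is not of this form cannot resolve a separating triangle in the gadget, because the rigid triangulating patches prevent other separating triangles from ever arising. Consequently, the minimum number of flips to make $T$ 4-connected equals the minimum number of edges needed to augment $G$ to 4-connectivity, so the equivalence with satisfiability of the underlying \planarsat\ instance transfers directly. Membership in NP is immediate, since a flip sequence of length $\tau$ and the 4-connectivity of the resulting triangulation can be verified in polynomial time.

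The main obstacle is controlling the extra flexibility that flips enjoy over edge additions: a flip simultaneously removes an edge and inserts another, and in principle a well-chosen flip could create a new separating triangle while resolving an old one, or a single flip could eliminate several separating triangles at once. Ruling this out requires a careful local analysis of each gadget after triangulation, showing that (i) no flip within a rigid patch creates a separating triangle, (ii) no flip across gadget boundaries can share its ``work'' between two 3-cuts belonging to different gadgets (because the planar layout keeps these cuts far apart), and (iii) within a single gadget, the minimum flip count matches the minimum augmentation count already established in \Cref{lemma:3-4Variable} and \Cref{lemma:3-4Clause}. Once these local invariants are verified the hardness result follows.
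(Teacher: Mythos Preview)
Your approach is essentially the same as the paper's---triangulate the $3\to 4$ construction so that each separating triangle corresponds to a gadget $3$-cut and each admissible flip corresponds to an admissible augmenting edge---but your execution is more laborious than necessary. The paper does not triangulate the faces by inserting auxiliary ``rigid $4$-connected patches''; instead, the gadgets in \Cref{fig:3-4variable-500,fig:3-4literalGadgets-500,fig:3-4clause-500} were drawn from the outset with dashed diagonals that, once added as edges, already make the whole construction a triangulation. Each augmenting edge of the $3\to 4$ reduction is then literally the result of flipping one of those dashed edges, so the correspondence you are after is built into the pictures rather than something that has to be engineered and verified after the fact.

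Because of this, the ``main obstacle'' you flag---that a flip might create a new separating triangle, or kill two at once, or that patches might introduce spurious separating triangles along their boundaries---largely evaporates in the paper's version: there are no patches, and the only flippable edges relevant to the count are the dashed ones, whose flips are precisely the red/blue matching edges already analyzed in \Cref{lemma:3-4Variable} and \Cref{lemma:3-4Clause}. Your plan would work, but you would be re-deriving by local case analysis what the paper gets for free from the gadget design.
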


This is a simple extension of our main reduction. Note, every black dotted line corresponds in our gadgets (\Cref{fig:3-4variable-500}, \Cref{fig:3-4literalGadgets-500}, and \Cref{fig:3-4clause-500}) were an edge of $G$. Then $G$ is a combinatorial triangulation, and each augmenting edge can be realized by a flip of one of these dotted edges. After adding these dotted lines as edges of $G$, our arguments from the $3 \to 4$ augmentation reduction easily translate.

\begin{figure}[h!]
    \centering
    \begin{subfigure}{0.3\textwidth}
            \centering
            \includegraphics[width=\textwidth]{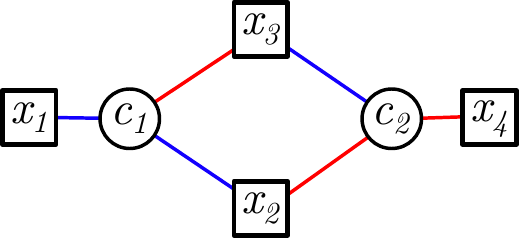}
            \subcaption{}
            \label{fig:reduction-formula}
        \end{subfigure}
        \hfill
    \begin{subfigure}{0.65\textwidth}
        \centering
        \includegraphics[width=1\textwidth]{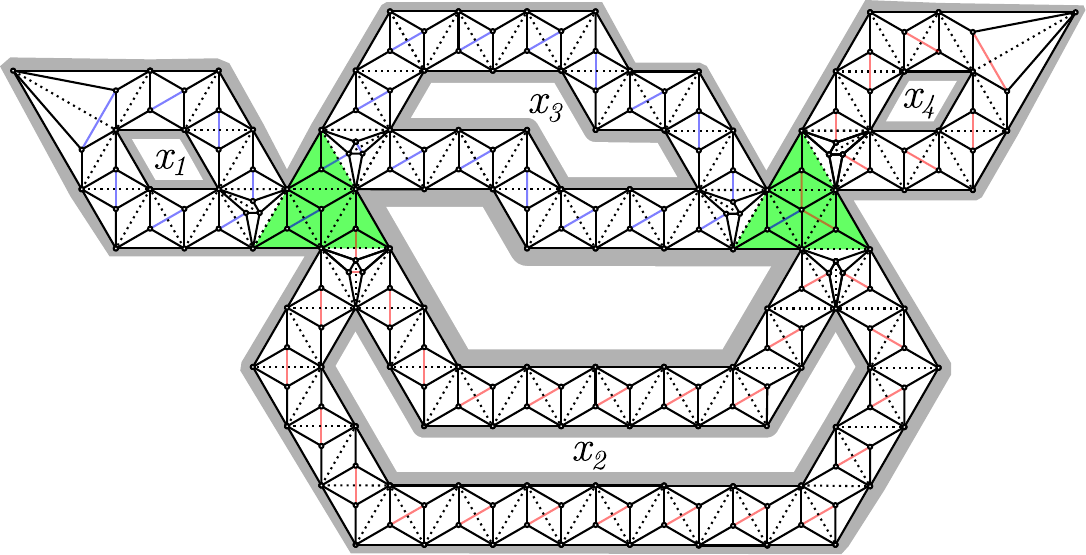}
        \subcaption{}
        \label{fig:ex-reduction-graph}
    \end{subfigure}

    \caption{(a) A \planarsat formula $\varphi = (x_1 \lor x_2 \lor \lnot x_3) \land (\lnot x_2 \lor x_3 \lor \lnot x_4)$ (b) The graph $G$ constructed from $\varphi$ during the reduction, $x_1$ and $x_3$ are assigned true, and $x_2$ and $x_4$ are set to false. Clause gadgets are highlighted in green for legibility. }
    \label{fig:ExampleReductionM}
\end{figure}

\subsection{\texorpdfstring{$4\to 5$}{4 to 5} Connectivity Augmentation is NP-Complete}

We finish our discussion on computational complexity by showing that minimally planar augmentation of a 4 connected plane graph to 5 connectivity is NP-complete. This discussion will be rather brief and nearly identical to our treatment of 3-4 augmentation. Instead of using $W_4$'s as building blocks we use $W_5$'s, (wheel graphs on 5 vertices) with two opposing edges removed, see \Cref{fig:4-5BasicUnit}. Each has four boundary vertices forming a $4$-cut disconnecting the central vertex. With this our variable and clause gadgets immediately follow from our 3-4 augmentation constructions. Our literal gadget is modified a little more, but its function remains the same as in the 3-4 augmentation proof. 

\begin{figure}[ht]
    \centering
    \begin{subfigure}{0.2\textwidth}
        \centering
        \includegraphics[width=.75\textwidth]{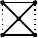}
        \subcaption{}
        \label{fig:4-5BasicUnit}
    \end{subfigure}
    \hfill
    \begin{subfigure}{0.65\textwidth}
        \centering
        \includegraphics[width=\textwidth]{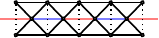}
        \subcaption{}
        \label{fig:4-5Variable}
    \end{subfigure}
    \caption{(a) Modified $W_5$ (b) Variable gadget for $4\to 5$ connectivity augmentation}
    \label{fig:4-5VarialbleGadgets}
\end{figure}

\Cref{fig:4-5VarialbleGadgets} presents our modified $W_5$ and our 4-5 graph augmentation variable gadget. These constructions are fairly obvious transcriptions from our 3-4 graph augmentation constructions. Accordingly, an equivalent transcription of \Cref{lemma:3-4Variable} requires a nearly identical argument, just replacing $W_4$ with $W_5$. 
Due to this symmetry and considerations for the length of this paper we present the following lemma without proof.

\begin{restatable}{lemma}{45Varaible} \label{lemma:4-5Variable}
    
    A variable gadget as presented in \Cref{fig:4-5Variable} constructed from $2k$ $W_5$'s can be augmented to 5-connectivity with $k$ edges.
\end{restatable}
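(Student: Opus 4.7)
The plan is to mirror the argument for \Cref{lemma:3-4Variable} almost verbatim, substituting the role of modified $W_4$'s by modified $W_5$'s. First I would observe that, by construction, the central vertex of each modified $W_5$ is unique to its block: although boundary vertices are shared between two consecutive blocks along the cycle, central vertices are not identified with anything else. Hence, in each block, the four boundary vertices form a $4$-cut separating the central vertex from the rest of the gadget, so any augmentation to $5$-connectivity must add at least one edge incident to every central vertex, and that edge must have its other endpoint outside the same block.

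Next I would use planarity of the augmented graph, together with the embedding shown in \Cref{fig:4-5Variable}, to show that the only faces into which a new edge from a central vertex can be routed without crossing are the two faces shared with the central vertices of its two neighboring blocks along the cycle. Consequently, every augmenting edge incident to a central vertex goes to a central vertex of an adjacent block. This gives the lower bound: since each such edge covers the demands of at most two central vertices and there are $2k$ central vertices, at least $k$ edges are needed.

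For the upper bound, I would exhibit a perfect matching on the cyclic sequence of central vertices that pairs each with one of its two cyclic neighbors; there are exactly two such matchings (the ``clockwise'' and the ``counterclockwise'' one), which we identify with the true and false assignments of the variable, respectively. It then remains to verify that adding the $k$ matching edges indeed produces a $5$-connected gadget, i.e., that no $4$-cut survives. The only candidate $4$-cuts are the boundary quadruples of individual $W_5$'s, and each such quadruple is destroyed because the matching connects the corresponding central vertex to a central vertex in a neighboring block via a path disjoint from the quadruple.

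The main obstacle I anticipate is the careful local case analysis for the last step: one must confirm that, after adding a perfect matching of central vertices, no new separator of size $4$ appears that mixes boundary vertices from several consecutive blocks. I would handle this by invoking Menger's theorem locally, checking that between any two vertices in the augmented gadget there exist five internally disjoint paths, which follows from the cyclic structure of the gadget and the fact that the matching edges link the cycle of central vertices into a second spanning cyclic structure transverse to the boundary cycle.
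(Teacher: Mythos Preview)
Your proposal is correct and follows exactly the approach the paper intends: the paper explicitly omits a proof of this lemma, stating that the argument for \Cref{lemma:3-4Variable} can be repurposed almost word-for-word with $W_4$ replaced by $W_5$, which is precisely what you do. If anything, your sketch is more careful than the paper's own treatment of the analogous $3\to 4$ lemma, since you flag the need to rule out $4$-cuts that mix boundary vertices from several blocks---a point the paper's \Cref{lemma:3-4Variable} proof handles only implicitly by asserting that the boundary cuts are the sole obstructions.
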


We now present our clause gadget for 4-5 graph augmentation. Just as we did with the variable gadget, we construct this gadget by taking our 3-4 clause gadget and replacing each $W_4$ with our $W_5$'s. This gadget is presented in \Cref{fig:4-5Clause}, and again just like we did with our 4-5 augmentation variable gadget we present a lemma analogous to \Cref{lemma:3-4Clause} without proof, as the arguments from the original can be repurposed with minimal effort.  

\begin{restatable}{lemma}{45clause} \label{lemma:4-5clause}
    The clause gadget presented in \Cref{fig:4-5Clause} can be augmented with precisely three edges if and only if an adjacent literal gadget is true. Otherwise four edges are necessary.
\end{restatable}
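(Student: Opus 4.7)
The plan is to transcribe the argument from \Cref{lemma:3-4Clause} almost verbatim, replacing each modified $W_4$ by its modified-$W_5$ counterpart. First I would observe that the clause gadget in \Cref{fig:4-5Clause} contains four modified $W_5$ units: one ``value'' $W_5$ and three ``term'' $W_5$'s, one per literal. By the same reasoning used for the variable gadget (\Cref{lemma:4-5Variable}), the four boundary vertices of each $W_5$ form a $4$-cut separating the central vertex from the rest of the graph, so in any valid augmentation each central vertex must receive at least one new edge to a vertex outside its own $W_5$. Together with planarity, this forces the central vertex of a term $W_5$ to be connected either to the central vertex of the value $W_5$ or to a central vertex of the adjacent literal gadget, in direct analogy with the $3\to4$ case.

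Next I would establish the lower bound. Since the value central vertex lies in a face bounded by the three term $W_5$'s, planarity allows it to be joined to at most one term central vertex; thus the remaining two term central vertices must be augmented through their adjacent literal gadgets. This immediately gives a lower bound of three new edges. If an adjacent literal gadget is \emph{true}, then (as in the variable analysis) it is already internally $5$-connected and its own central vertices require no further augmentation, so the corresponding term central vertex can be ``spent'' on the edge to the value central vertex; the remaining two terms use one edge each into their false literals, yielding exactly three edges total.

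Third, I would rule out three-edge augmentations when every adjacent literal is false. In that case each of the three false literal gadgets still has unaugmented central vertices that must be connected to something outside the literal; the only planarly available target is the adjacent term central vertex, so those three edges are forced and all three terms are already spent. The value central vertex then still requires its own augmenting edge, disjoint from the previous three by planarity, giving a total of four edges.

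The main obstacle I expect is purely book-keeping: verifying that after inserting these three (or four) edges the resulting graph really is $5$-connected, not merely free of the one obvious $4$-cut at each central vertex. I would handle this in the same spirit as the variable-gadget proof, by checking that any hypothetical $4$-cut would have to separate some central vertex from its own $W_5$'s boundary, which the new matching edge plus the $W_5$'s internal structure prevents. Once this is in place, the statement follows by exactly the same case split as in \Cref{lemma:3-4Clause}, so no new combinatorial idea is required beyond the $W_4\mapsto W_5$ transcription.
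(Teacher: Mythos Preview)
Your proposal is correct and matches the paper's approach exactly: the paper explicitly omits a standalone proof of this lemma, stating that ``the arguments from the original [\Cref{lemma:3-4Clause}] can be repurposed with minimal effort'' via the $W_4\mapsto W_5$ transcription you describe. Your case analysis (value $W_5$ reaches at most one term, true literals are already internally augmented, false literals force term-to-literal edges) is precisely the transcribed argument the paper has in mind.
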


\begin{figure}[ht]
    \centering
    \begin{subfigure}{.3\textwidth}
        \includegraphics[width=\textwidth]{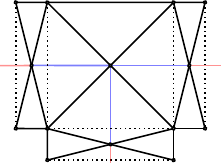}
        \subcaption{}
    \end{subfigure}
   \hfill 
    \begin{subfigure}{.6\textwidth}
        \includegraphics[width=0.85\textwidth]{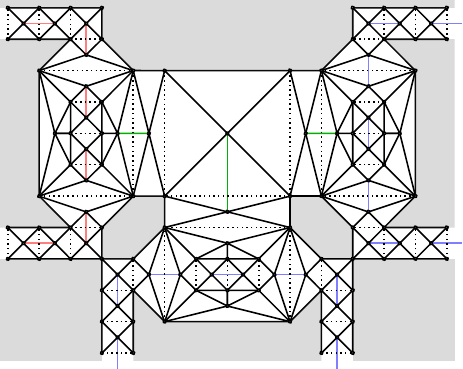}
        \subcaption{}
    \end{subfigure}
    \caption{A clause gadget for $4\to 5$ connectivity augmentation}
    \label{fig:4-5Clause}
\end{figure}

Although the variable and clause gadgets were easily translated, this is not entirely the case for our 3-4 literal gadget. Instead of replacing two $W_4$'s in a variable gadget, like in the 3-4 augmentation case. Our 4-5 literal will replace six $W_5$'s of a variable gadget. The general idea of the gadget, however, remains unchanged. A literal gadget is inserted into variable $x_i$ and an edge of the literal is identified with an edge of a clause gadget $c_j$. If the literal gadget represents a positive literal, then if $x_i$ is false a vertex of $c_j$ must be adjacent to a vertex of the literal gadget. If $x_i$ is true then the literal gadget will be fully augmented and $c_j$ will have the freedom to augment its central vertex. If the literal gadget represents a negated literal $(\lnot x_i)$ then the previous is flipped.

\begin{figure}[ht]
    \centering
    \includegraphics[width=0.9\linewidth]{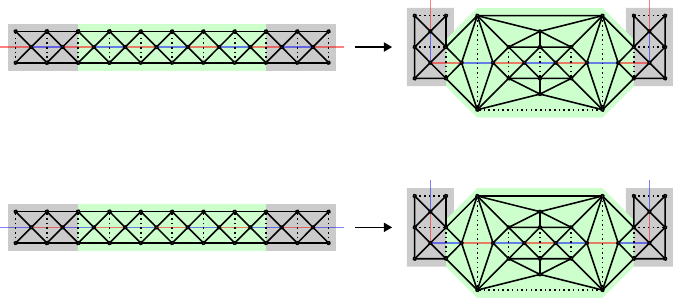}
    \caption{Literal Gadget for 4-5 Graph Augmentation (Top) A positive literal (Bottom) A negated literal }
    \label{fig:4-5Literal}
\end{figure}

We arrive at our main result, that 4-5 augmentation is NP-complete. 
In keeping with the rest of this section, we omit a proof as the proof for \Cref{lemma:3-4NPC} can be used almost word-for-word with some obvious modifications (3 connected should become 4-connected etc\dots). 

\begin{restatable}{lemma}{45NPC}
    \label{lemma:4-5NPC}

    Given a 4-connected planar graph $G=(V,E)$ and an integer $\tau$, it is NP-hard to decide whether there is a set $F$ of edges with $|F|\le \tau$ such that $G'=(V,E\cup F)$ is a 5-connected planar graph.
\end{restatable}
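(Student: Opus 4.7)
My plan is to mimic the structure of the proof of \Cref{lemma:3-4NPC} nearly verbatim, reducing from \planarsat\ and using the three gadgets already introduced in this subsection: the variable gadget built from $2k$ modified $W_5$'s (\Cref{fig:4-5Variable}), the clause gadget (\Cref{fig:4-5Clause}), and the literal gadget (\Cref{fig:4-5Literal}) that is inserted into a variable gadget by replacing six consecutive $W_5$'s. Given a \planarsat\ formula $\varphi$ with $n$ variables and $m$ clauses, take a planar embedding $P$ of the variable-clause incidence graph, replace each variable-star by an Eulerian circuit realized as a cycle of an even number of modified $W_5$'s, replace each clause vertex by a clause gadget, and for each literal-edge insert a literal gadget into the appropriate arc of the variable cycle (positive vs.\ negated literal determines the orientation of the insertion).

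The counting of $\tau$ mirrors \Cref{lemma:3-4NPC}: using \Cref{lemma:4-5Variable,lemma:4-5clause}, set $\tau$ to be half the total number of $W_5$'s over all variable gadgets plus $3m$. The forward direction (satisfying assignment $\Rightarrow$ augmentation of size $\tau$) follows by choosing, for each variable, the matching (blue or red) of central vertices dictated by the truth value, which both augments the variable gadget with $k$ edges and supplies, via the literal gadget, a ``true literal'' to every satisfied clause; then each clause is augmented with exactly $3$ edges by \Cref{lemma:4-5clause}. The reverse direction uses \Cref{lemma:4-5Variable} to read off a truth assignment from the forced matching inside each variable gadget, and \Cref{lemma:4-5clause} to conclude that the budget of $\tau$ edges is met only if every clause is adjacent to at least one true literal, i.e.\ the assignment satisfies $\varphi$. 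NP membership is clear since augmentation-plus-5-connectivity can be verified in polynomial time, so together with NP-hardness this gives NP-completeness; the topological/geometric and $1$-plane variants follow exactly as in \Cref{cor:34-topo} by reading off the embeddings from the figures and, for the $1$-plane version, replacing the dashed ``slack'' edges by genuine edges of $G$.

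The main obstacle, and the only place that requires genuine new work rather than transcription, is verifying that the constructed host graph $G$ is actually $4$-connected and that the only \emph{locally minimum} augmentations of the variable and literal gadgets are the two matchings depicted in \Cref{fig:4-5Variable,fig:4-5Literal}. For the variable gadget this is analogous to \Cref{lemma:3-4Variable}: in each modified $W_5$ the four boundary vertices form a $4$-cut isolating the central vertex, so every central vertex needs a new incident edge; planarity, together with the cyclic gluing along shared boundary vertices, forces any such edge to go to a central vertex of an adjacent $W_5$, and pairing up consecutive centrals in one of the two alternating perfect matchings uses exactly $k$ edges, which one also checks is a lower bound via a parity/matching argument. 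The literal gadget is more delicate because it now spans six $W_5$'s, so one must carefully enumerate the $4$-cuts separating the internal central vertices and verify that only the two ``true'' and ``false'' matchings achieve the local minimum while simultaneously freeing (in the true case) or blocking (in the false case) the edge shared with the clause gadget; once this case analysis is complete, the argument for \Cref{lemma:4-5clause} proceeds exactly as in the $W_4$ setting, yielding the claimed reduction.
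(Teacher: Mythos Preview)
Your proposal is correct and follows essentially the same approach as the paper: the paper explicitly omits a proof of this lemma, stating that the argument for \Cref{lemma:3-4NPC} can be reused ``almost word-for-word with some obvious modifications'' (replacing $W_4$'s by $W_5$'s and $3$-/$4$-connectivity by $4$-/$5$-connectivity), which is exactly the transcription you outline. Your additional remarks about verifying $4$-connectivity of the host graph and the local analysis of the six-$W_5$ literal gadget go slightly beyond what the paper spells out, but they are consistent with (and indeed fill in) the details the paper leaves implicit.
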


\begin{restatable}{corollary}{45Topo}
    
    \label{cor:45-topo}
    \cref{lemma:4-5NPC} still holds if $G$ is a topological (or geometric) plane graph and if $G'$ is required to be a compatible topological (resp., geometric) plane or 1-plane graph.
\end{restatable}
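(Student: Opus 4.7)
My plan is to mimic the strategy that the paper already used for the analogous $2\to 3$ and $3\to 4$ corollaries (\Cref{cor:23-topo} and \Cref{cor:34-topo}): argue that the reduction constructed in \Cref{lemma:4-5NPC} from \planarsat{} transfers with only minor modifications to the topological, geometric, and $1$-plane settings. The central observation is that every argument used in the proof of \Cref{lemma:4-5NPC}---verifying that variable and clause gadgets admit a minimum-cardinality augmentation exactly when the formula is satisfiable---depends only on the combinatorial structure of the gadgets together with their prescribed planar embedding, not on any metric data of the drawing.

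For the topological setting, I would first invoke Whitney's theorem: the input graph $G$ produced by the reduction is $4$-connected planar, hence it has a unique combinatorial embedding up to reflection and choice of outer face, so any topological realization of $G$ is essentially forced. Since the argument of \Cref{lemma:4-5NPC} only used the combinatorial embedding (identifying the $4$-cuts inside each modified $W_5$, the interaction of literal gadgets with variable and clause gadgets, etc.), the reduction goes through unchanged. For the geometric setting I would explicitly place the vertices so that every variable, literal, and clause gadget is realized by the straight-line embedding depicted in \Cref{fig:4-5VarialbleGadgets,fig:4-5Clause,fig:4-5Literal}; then a standard assembly (scaling each gadget and routing them along the planar layout of the underlying \planarsat{} incidence graph) yields a PSLG realization of $G$. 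Each prospective augmenting edge in the abstract argument was drawn inside a bounded face where it is combinatorially realizable, and a small perturbation of the coordinates guarantees it can also be realized as a straight segment.

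For the $1$-plane variants, the adaptation follows the recipe used for the $3\to 4$ case: I would promote the dashed auxiliary edges appearing in \Cref{fig:4-5VarialbleGadgets,fig:4-5Clause,fig:4-5Literal} to real edges of the input graph $G$. Each dashed edge sits inside a bounded face such that exactly the augmenting edges produced by the intended matchings cross it once, so the augmented graph $G'$ is $1$-plane rather than plane. I would then verify that adding the dashed edges preserves $4$-connectivity of $G$, leaves the minimum-augmentation threshold $\tau$ unchanged, and does not create new cheap augmentations via edges routed through the newly permitted crossings.

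The main obstacle I expect is this final verification: one must confirm that the promoted dashed edges do not let a clause gadget be augmented with fewer than three edges when no adjacent literal is true, nor let a variable gadget be augmented with fewer than $k$ edges. The cleanest way to rule this out is to argue that any such hypothetical shortcut would force an augmenting edge to cross at least two promoted edges, violating $1$-planarity, so the lower bounds underlying \Cref{lemma:4-5Variable} and \Cref{lemma:4-5clause} survive. Once that inspection is complete, the NP-hardness of \Cref{lemma:4-5NPC} transfers verbatim to all four settings, yielding \Cref{cor:45-topo}.
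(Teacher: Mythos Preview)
Your proposal is correct and follows essentially the same approach as the paper, which handles this corollary by a one-paragraph remark: embed each gadget as depicted for the topological and geometric settings, and promote the dashed lines in \Cref{fig:4-5VarialbleGadgets,fig:4-5Clause,fig:4-5Literal} to actual edges of $G$ to force any augmenting edge into at most one crossing. Your treatment is in fact more careful than the paper's---you explicitly invoke Whitney's theorem and flag the need to verify that the promoted edges do not enable cheaper augmentations---but the underlying strategy is identical.
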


Just as with \Cref{cor:34-topo}, settings for the topological geometric setting follow easily by just embedding each gadget as depicted. Further, gadgets presented in this section have some dashed lines. Replacing these with edges in $G$ will force any solution to be 1-planar.

\section{EPTAS for Flip Distance of Triangulations to 4-Connectivity}
\label{sec:ptas}

In this section, we provide an EPTAS for making plane triangulations $4$-connected with a sequence of edge flips.
\begin{restatable}{theorem}{PTAS}\label{thm:PTAS}
    For every $\epsilon>0$, there is an $n\cdot 2^{O(1/\epsilon)}$-time algorithm that, for a given triangulation $G$ on $n\geq 6$ vertices, returns a sequence of edge flips that turns $G$ into a 4-connected triangulation, and the length of the sequence is at most $1+\epsilon$ times larger than necessary.  
\end{restatable}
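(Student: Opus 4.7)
My plan is to combine Baker-style shifting with bounded-depth dynamic programming on the laminar tree of separating triangles. In a plane triangulation, any two separating triangles have enclosed regions that are either nested or disjoint (Jordan curves of two 3-cycles cannot cross without sharing edges), so they form a laminar family and induce a rooted tree $\mathcal{T}$ with $O(n)$ nodes, computable in linear time. Making $G$ 4-connected amounts to destroying every node of $\mathcal{T}$ via a flip sequence, and flipping edge $uv$ destroys exactly the separating triangles that contain $uv$. Since all such triangles lie in at most two chains of $\mathcal{T}$, one on each side of $uv$, the problem has a tree-local structure amenable to Baker-style shifting.

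\textbf{Algorithm.} Set $r = \lceil c/\epsilon \rceil$ for a suitable constant $c$. For each offset $i \in \{0,\ldots,r-1\}$, mark the separating triangles at depth $\equiv i \pmod{r}$ in $\mathcal{T}$ and assign each a single ``default safe flip'' of one of its three edges, chosen so that no new separating triangle is created (such an edge always exists in a triangulation with enough vertices). This breaks $\mathcal{T}$ into a forest of components of depth at most $r-1$. On each component I solve the residual minimum-flip problem optimally by bottom-up DP, where the state at each node records which of its three edges are already committed to being flipped by choices in the subtree below, plus minor bookkeeping for cascading effects within the component. Because each component has depth $\leq r-1$, each node's state has size $2^{O(r)}$, and because sibling subtrees interact only through their parent's state, they combine sequentially in linear time. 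The running time per offset is $n \cdot 2^{O(r)} = n \cdot 2^{O(1/\epsilon)}$, and I return the minimum-cost solution over all $r$ offsets.

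\textbf{Main obstacle and approximation.} The principal difficulty is the approximation argument, because a single flip of $uv$ can destroy $\Omega(n)$ separating triangles when $u$ and $v$ have many non-face common neighbors, so the naive bound $|\mathcal{S}| = O(\opt)$ fails and standard Baker averaging does not immediately give $(1+\epsilon)$-approximation. I plan to resolve this with a two-case analysis: when $|\mathcal{S}| = O(1/\epsilon)$, I enumerate all relevant flip sets directly in $2^{O(1/\epsilon)}$ time; otherwise, I exploit the chain structure of triangles destroyed by a single flip, together with the upper bound $\opt \leq \lfloor (2n-7)/3 \rfloor$ from Cardinal et al.~\cite{CardinalHKTW18}, to charge each optimal flip's contribution to the default-flip overhead across at most a controlled number of offsets, so that by a careful amortized analysis the best offset pays $\opt + O(\opt/r) \leq (1+\epsilon)\opt$. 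A secondary technical point is that the DP must avoid \emph{creating} new separating triangles: I will restrict attention to safe flips (those whose new edge has no problematic common neighbor) and argue, again via the simultaneous-flip construction of~\cite{CardinalHKTW18}, that safe flips suffice up to a $(1+O(\epsilon))$ multiplicative factor absorbed by the shifting.
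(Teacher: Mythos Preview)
Your proposal has a genuine gap in the approximation analysis, and the paper's route avoids precisely the obstacle you identify.

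\textbf{The gap.} Your shifting marks all separating triangles at depth $\equiv i \pmod r$ in $\mathcal{T}$ and pays one default flip per marked triangle. Averaged over offsets, this overhead is $|\mathcal{S}|/r$, where $|\mathcal{S}|$ is the total number of separating triangles. But as you yourself note, $|\mathcal{S}|$ can be $\Theta(n)$ while $\opt$ is $O(1)$ (e.g., when a single edge $uv$ lies on $\Theta(n)$ separating triangles), so $|\mathcal{S}|/r$ is not $O(\opt/r)$ and the standard shifting bound fails. Your proposed fix---a two-case split plus an unspecified ``careful amortized analysis'' charging optimal flips across offsets---is not an argument; in particular, the chains of triangles destroyed by one optimal flip can span all depths $0,\ldots,r-1$, so that flip contributes a marked triangle to \emph{every} offset, and there is no obvious way to make the charging give $O(\opt/r)$ rather than $O(\opt)$. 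The DP description is also unclear: the state ``which of its three edges are committed'' is $O(1)$ bits per node, not $2^{O(r)}$, and you do not explain what information must actually propagate along a depth-$r$ chain.

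\textbf{What the paper does instead.} The paper sidesteps the $|\mathcal{S}|$ versus $\opt$ mismatch by first reducing the flip problem \emph{exactly} to an edge hitting-set problem: using the lemma of Mori, Nakamoto and Ota, it shows that the minimum length of a flip sequence equals the minimum size of a set of edges hitting every separating triangle, and moreover any hitting set $E'$ can be converted in linear time into a valid flip sequence of length at most $|E'|$ (one cannot simply flip the edges of $E'$, but one can always flip \emph{some} edge of each surviving separating triangle without creating new ones). Now the objects being selected are edges, not triangles, and Baker's technique applies in its textbook form: do a BFS layering of the vertices, color edges by their layer modulo $k=\lceil 1/\epsilon\rceil$, decompose into overlapping $(k{+}1)$-outerplanar pieces of treewidth $O(k)$, solve the hitting-set problem optimally on each piece by a treewidth DP, and take the union. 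The only double-counted edges are those of a single color class, which by pigeonhole contains at most $|E^*|/k$ edges of the optimum $E^*$, giving the $(1+\epsilon)$ factor directly. The ``safe flip'' issue you worry about is absorbed entirely into the reduction lemma and never interacts with the approximation analysis.

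In short, the missing idea is the equivalence $\opt_{\text{flip}} = \opt_{\text{edge-hitting-set}}$; once you have it, shift on BFS layers of the graph rather than on depths of the separating-triangle tree.
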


We first reduce the problem to a certain hitting set problem as suggested by Bose \textit{et al.}~\cite{bose2014making}.
The reduction relies on the following.
\begin{lemma}[Mori \textit{et al.} \cite{MoriNO03}]\label{lemma:no_new_separating}
    Let $G$ be a plane triangulation on $n\geq 6$ vertices, $T$ a separating triangle of $G$ and $e$ an edge of $T$. Then flipping $e$ destroys the separating triangle $T$ and does not create any new separating triangles in $G$, provided $e$ is incident to multiple separating triangles or none of the three edges of $T$ are incident to multiple separating triangles.
\end{lemma}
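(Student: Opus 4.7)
The plan is to identify precisely when the flip can create a new separating triangle, and then show that each of the two sufficient conditions rules that out. Write $T=uvw$ and $e=uv$, and let $uvx$ and $uvy$ be the two faces of $G$ containing $e$, with $x$ inside the bounded region of $T$ and $y$ outside. Since $T$ separates $x$ from $y$ in the plane, $xy\notin E(G)$ and $e$ is flippable; the flip removes $uv$ (destroying the $3$-cycle $T$) and inserts $xy$, with new surrounding faces $uxy$ and $vxy$.

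Any new triangle in $G'$ must contain $xy$ and hence have the form $xyz$ for some common neighbor $z$ of $x,y$ in $G$. Because $T$ separates $x$ from $y$, planarity forces every such $z$ to lie on $T$, so $z\in\{u,v,w\}$; the choices $z=u,v$ yield the two new faces. Hence the unique candidate new non-face triangle is $xyw$, and it arises exactly when both $xw$ and $yw$ are edges of $G$. Whenever it arises it is automatically separating, since the two $xy$-faces in $G'$ are already $uxy$ and $vxy$. The lemma therefore reduces to showing that, under either hypothesis, $xw$ and $yw$ cannot both be edges.

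Under the first hypothesis, let $T'=uvw'\neq T$ be a second separating triangle through $e$, and case on which side of $uv$ the vertex $w'$ lies. If $w'$ is inside $T$, a short planar-nesting argument---using that $uvx$ is the unique face of $G$ adjacent to $uv$ on the inside of $T$---places $x$ strictly inside $T'$ and $w$ strictly outside $T'$, so a putative edge $xw$ would have to cross an edge of $T'$, which is impossible. By the symmetric argument, if $w'$ lies outside $T$ then $yw$ cannot exist. Either way, $xyw$ fails to form.

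Under the second hypothesis, suppose for contradiction that both $xw,yw\in E(G)$. Then the four triangles $uwx,vwx,uwy,vwy$ all exist, each containing an edge of $T$ distinct from $uv$. If any of them were separating, the corresponding edge of $T$ would carry two separating triangles (itself and $T$), violating the hypothesis. Hence all four are faces; together with $uvx$ and $uvy$, these six triangles tile both sides of $T$, forcing $V(G)=\{u,v,w,x,y\}$ and $n=5$, contradicting $n\geq 6$. The obstruction here is precisely the triangular bipyramid, which is exactly why the bound $n\geq 6$ is needed. The main technical step I expect is the planar-nesting argument in the first case; the rest follows cleanly from the characterization of new triangles.
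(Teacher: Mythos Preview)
The paper does not prove this lemma itself; it is quoted from Mori, Nakamoto, and Ota and invoked as a black box. Your argument is correct. The characterization that the only candidate new separating triangle after the flip is $xyw$, arising precisely when both $xw$ and $yw$ are edges of $G$, is right, and each hypothesis disposes of it as you outline: the nesting step under the first hypothesis is routine (when $w'$ lies inside $T$, the interior of $T'$ lies inside that of $T$, so the inner $uv$-face forces $x$ strictly inside $T'$ while $w$, lying on $\partial T$, is outside $T'$; the case $w'$ outside $T$ is symmetric), and under the second hypothesis you correctly force all four triangles $uwx,vwx,uwy,vwy$ to be faces and recover the triangular bipyramid on five vertices, which is exactly why $n\geq 6$ is needed.
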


This allows one to easily reduce the problem of finding an optimal sequence of flips to a certain hitting set problem on the edges of the triangulation.

\begin{restatable}{lemma}{PTASreduction}\label{lemma:ptas-reduction}
    Let $G$ be a plane triangulation on $n\geq 6$ vertices. Let $E'$ be a set of edges of $G$ such that every separating triangle in $G$ is incident to at least one edge in $E'$, and let $\tau$ be the minimum size of any such set.

    Making $G$ $4$-connected requires at least $\tau$ edge flips. Moreover, given $E'$, one can compute a sequence of edge flips of length at most $|E'|$ making $G$ $4$-connected in $O(n)$ time.
\end{restatable}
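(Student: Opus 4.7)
For the lower bound, observe that an edge flip only removes the flipped edge (and inserts another), so a separating triangle $T$ of $G$ remains a triangle---and hence a separating triangle, since the vertex set and the partition of vertices into the inside and outside of $T$ are preserved by flips of edges not in $T$---until one of its three edges is flipped. Thus the set of $G$-edges that appear in any flip sequence reaching a 4-connected triangulation must hit every separating triangle of $G$, giving the lower bound of $\tau$.

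For the upper bound I plan to build a flip sequence iteratively, maintaining the invariant that the remaining portion of $E'$ still hits all current separating triangles, and charging each flip to a distinct unused edge of $E'$. At each iteration, three cases arise. \emph{Case (i):} some $e \in E'$ lies in multiple separating triangles; flip $e$, which is safe by Lemma~\ref{lemma:no_new_separating}, and remove $e$ from $E'$. \emph{Case (ii):} no $E'$-edge lies in multiple separating triangles but some $f \notin E'$ does; flip $f$ (still safe by the same lemma), then charge to and remove any $e_1 \in E' \cap T$ for some separating triangle $T$ containing $f$---the case hypothesis forces $e_1$ to lie in the unique separating triangle $T$, and $T$ is destroyed by the flip, so the invariant is preserved. \emph{Case (iii):} no edge at all is in multiple separating triangles, so for any remaining separating triangle $T$ and any $e \in E' \cap T$ (which exists by the hitting-set property), flipping $e$ is safe by the second clause of Lemma~\ref{lemma:no_new_separating}, and we remove $e$ from $E'$.

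Each step strictly reduces the number of separating triangles while consuming exactly one unused edge of $E'$, so the process terminates after at most $|E'|$ flips with a 4-connected triangulation. For the $O(n)$ running time, I would precompute all $O(n)$ separating triangles in linear time (they are the 3-cycles of $G$ that are not faces) and, for each edge, maintain the list of separating triangles containing it; with this bookkeeping, each iteration runs in constant amortized time, since one flip destroys a bounded number of separating triangles on average over the whole run.

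The main subtlety is Case (ii), where we spend a flip on an edge $f \notin E'$ but still need to charge the flip to an unused $E'$-edge. The case hypothesis that no $E'$-edge lies in multiple separating triangles is precisely what is needed to argue that the charged $e_1$ hits only the destroyed triangle $T$ in the current state, so removing $e_1$ from $E'$ does not break the hitting-set invariant. Once this hurdle is cleared, the rest is bookkeeping.
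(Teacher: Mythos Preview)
Your proof is correct and follows essentially the same approach as the paper: both rely on Lemma~\ref{lemma:no_new_separating} to guarantee each flip is safe, both perform a case analysis on whether the candidate edge (or some edge of its unique separating triangle) lies in multiple separating triangles, and both charge each flip to a distinct element of $E'$. The only cosmetic difference is that the paper iterates over the elements of $E'$ in a fixed order and decides locally what to flip, whereas you select the next flip via a global case priority (i)--(iii); the invariant you maintain and the linear-time bookkeeping are the same.
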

 
\begin{proof}
    If $F$ is a sequence of edge flips making $G$ $4$-connected, then every separating triangle in $G$ must be incident to at least one edge in $E'$ (otherwise the missing separating triangle remains in $G$ after all flips have been performed, and the graph is not $4$-connected). Thus, $F$ has length at least $\tau$.

    Now consider the edges of $E'$ in some arbitrary order $e_1,e_2,\ldots,e_{|E'|}$. 
    Define a sequence of edge flips as follows. Let $1\leq i \leq |E'|$, and assume the first $i-1$ edges of $E'$ have already been processed. Then for the $e_i$, do the following:
    \begin{itemize}
        \item if $e_i$ is not present in $G$ (because it has already been flipped previously) or does not bound any separating triangles in $G$, don't flip any edge;
        \item otherwise, if $e_i$ is incident to multiple separating triangles in $G$, flip $e_i$;
        \item otherwise, if none of the $3$ edges of the unique separating triangle $T$ incident to $e$ are incident to any other separating triangle, flip $e_i$;
        \item otherwise, flip an edge of $T$ incident to at least $2$ separating triangles.
    \end{itemize}

    By the definition of this sequence and of $E$, it is clear that for any separating triangle in $G$, at least one of its $3$ edges gets flipped. Moreover, every edge that is flipped obeys the conditions of Lemma \ref{lemma:no_new_separating} (at the time it is flipped), and thus, no new separating triangle is introduced and the resulting graph is $4$-connected.

    Let us describe how to carry out the procedure in linear time. For each edge, we will maintain a set of pointers to the separating triangles it bounds in a doubly-linked list. Each separating triangle also has corresponding back-pointers, so that from a separating triangle we can access the three pointers pointing to it in constant time.
    We can initialize this structure by computing the set of separating triangles incident to each edge in $O(n)$ total time (this can be done for example by computing a $4$-block decomposition of the triangulation in linear time \cite{Kant97}).

    For a given edge, we can then test in constant time if it bounds $0$, $1$ or at least two separating triangles. In the case it bounds exactly one separating triangle $T$, we can test in constant time whether this is also the case for the two other edges bounding $T$.
    Each time we flip an edge $e$, we go through the set of separating triangles it used to bound by traversing its associated doubly-linked list and following the pointers it stores. For each such separating triangle, we delete it from the doubly-linked lists of the three edges which (used to) bound it by following the back-pointers. This costs constant time per deleted separating triangle. 
    
    Because no new separating triangle is created at any point, every separating triangle is deleted once and the triangulation starts out with $O(n)$ separating triangles, the total runtime is $O(n)$.
\end{proof}


It is tempting to try to carry out the reduction by simply flipping the edges of $E'$ in some arbitrary order (or perhaps skipping an edge if it is no longer incident to any separating triangle by the time it is considered). However, this strategy may fail as illustrated in \Cref{fig:bad-flip}. Thus, a more careful strategy is needed. Our algorithm flips only edges that bound separating triangles and do not create any separating triangle, thus strictly decreasing the number of  separating triangles with each flip.

\begin{figure}[H]
    \centering
    \includegraphics[scale=.8]{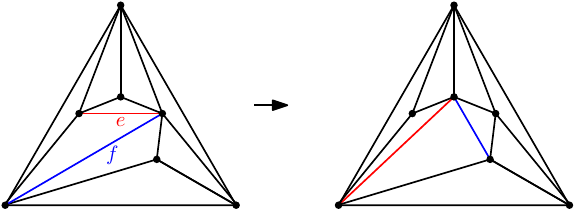}
    \caption{While $\{e,f\}$ is a minimum set of edges hitting every separating triangle, flipping $e$ then $f$ fails to make the triangulation $4$-connected, as a new separating triangle is introduced.}
    \label{fig:bad-flip}
\end{figure}

\begin{lemma}
    \label{lem:bounded-treewidth-flip}
     Given a plane graph $G$ of treewidth $\le k$ with $n$ vertices, and a set $C$ of 3-cycles of $G$, there is a $n\cdot 2^{O(k)}$-time algorithm that computes a minimum cardinality edge set $E'$ such that every 3-cycle in $C$ is incident to at least one edge in $E'$.
\end{lemma}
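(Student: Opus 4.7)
The plan is to solve this edge-hitting-set problem by dynamic programming on a tree decomposition of $G$.

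First, compute a tree decomposition of $G$ of width $w=O(k)$ in $2^{O(k)}\cdot n$ time (via Bodlaender's algorithm or a more recent variant), and convert it into a nice tree decomposition with $O(n)$ nodes of the standard introduce-vertex, introduce-edge, forget-vertex, and join types. For each $3$-cycle $T=\{u,v,w\}\in C$, the running-intersection property guarantees the existence of a bag containing all of $u,v,w$; assign $T$ to such a bag (say the one closest to the root). At that bag, all three edges of $T$ are simultaneously ``active'' in the sense that both endpoints of each lie in the bag.

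Set up the DP. At each tree node $t$ with bag $B$, maintain a table indexed by a state that encodes, for every edge $e\in E(G)$ with both endpoints in $B$, whether $e$ has been tentatively selected into $E'$. The table value is the minimum number of selected edges in the subtree rooted at $t$ consistent with the state and with the constraint that every $3$-cycle assigned to a descendant bag is already hit. The transitions are standard: introduce-vertex extends the state with no new edges; introduce-edge branches on the inclusion of the new edge (adding $1$ to the cost if selected); forget-vertex $v$ first enforces every $3$-cycle constraint assigned to the current bag that uses $v$ (by rejecting states in which no edge of the cycle is selected), and then projects out $v$ and all its incident edges by taking a minimum over the projected coordinates; join combines consistent states from the two children by summing their costs. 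Read off the optimum at the root, and reconstruct a witness $E'$ via back-pointers.

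The main obstacle is compressing the DP state so each bag has at most $2^{O(k)}$ states instead of the $2^{O(k^2)}$ one obtains naively by indexing all within-bag edge labelings. The key observation is that in a nice decomposition edges are introduced one at a time and each $3$-cycle is enforced at exactly one node; this lets one reorganize the DP so that at any moment only the status of the $O(k)$ edges incident to the vertex currently being forgotten (or just-introduced) needs to be kept together with a per-vertex summary for the remaining bag vertices, giving a state count of $2^{O(k)}$ per node. Since there are $O(n)$ nodes in the nice decomposition and each transition takes $2^{O(k)}$ time, the total running time is $n\cdot 2^{O(k)}$, as claimed.
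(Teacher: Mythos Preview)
Your overall DP framework is the same as the paper's, but there is a genuine gap in the step you flag as the ``main obstacle.'' You correctly note that naively indexing a bag's state by all within-bag edge selections gives $2^{O(k^2)}$ states, and you then assert that one can compress to $2^{O(k)}$ by tracking only the $O(k)$ edges incident to the vertex currently being forgotten together with an unspecified ``per-vertex summary.'' This does not work: when you forget $v$ and need to verify a $3$-cycle $\{u,v,w\}$ assigned to that bag, you must know the selection status of the edge $uw$, which is \emph{not} incident to $v$. No per-vertex summary of constant width can recover, for every pair $u,w$ still in the bag, whether $uw$ has been selected; that is exactly the $\Theta(k^2)$ bits you were trying to eliminate. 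The same obstruction arises if you instead try to enforce each $3$-cycle at the introduce-edge node for its last edge: you still need the status of the other two edges, which in general are not incident to a common vertex.

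The paper sidesteps this entirely by using the hypothesis you never invoked: $G$ is \emph{planar}. Hence the subgraph induced by any bag on $O(k)$ vertices has only $O(k)$ edges, so the straightforward state ``which subset $S$ of the induced edges is selected'' already has size $2^{O(k)}$, and no compression trick is needed. Your write-up should simply insert this one-line observation (planarity $\Rightarrow$ $O(k)$ edges per bag) in place of the compression paragraph; the rest of your DP then goes through with $n\cdot 2^{O(k)}$ subproblems and $2^{O(k)}$ work each.
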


\begin{proof}
    Recall that a tree decomposition of $G$ is a mapping of $G$ into a tree $T_G$ whose vertices we shall call \emph{bags}.
    A bag $B$ is associated with a subset of vertices $V_B$.
    Every vertex appears in at least one bag and induces a single subtree in $T_G$, and if two vertices are adjacent in $G$ then there is a bag in $T_G$ containing both vertices.
    The width of $T_G$ is the size of its largest bag.
    We root $T_G$ arbitrarily.
    We can build a tree decomposition of $G$ of width $O(k)$ in $n\cdot 2^{O(k)}$ time with the following properties
    \cite{bodlaender1996efficient, kloks1994treewidth}:

\begin{itemize}
    \item the number of bags is $O(n)$;
    \item the maximum degree of a bag is 3;
    \item for a bag with two children, all three bags have identical subsets of vertices; and
    \item for a bag with a single child the two subsets differ by one.
\end{itemize}

Since $G$ is planar, each bag defines an induced subgraph with $O(k)$ edges.
Note that, by definition, if $G$ has a clique, these vertices must appear in a bag of $T_G$.
Thus, every 3-cycle in $C$ must appear in at least one bag of $T_G$.
We compute $E'$ using dynamic programming as follows.
We say that a subset $S$ of edges \emph{satisfies} a subtree of $T_G$ if every 3-cycle of $C$ that appears in a bag of the subtree contains at least one edge in $S$.
For each bag $B$ and for each subset $S$ of the $O(k)$ edges in $V_B$, we define a subproblem $A_{B,S}$ of computing the minimum cardinality set $S'$ of edges satisfying the subtree rooted at $B$ subject to choosing the edges in $S$ in the induced subgraph of $V_B$. 
Thus there are $n \cdot 2^{O(k)}$ total subproblems. By our degree-3 assumption, the children of parents of high-degree simply receive the information from their parent without making any choices.
The child of a low degree bag has to make choices about the incident edges to the potential new vertex $v$. 
The degree of $v$ in the induced subgraph is $O(k)$ which implies that we have to make $2^{O(k)}$ binary choices. Some of these choices are fixed: if the addition of $v$ closes a 3-cycle in $C$, the cycle has two incident edges to $v$ and the choices are valid only if at least one edge of the separating triangle is chosen. 
Therefore to compute each subproblem there is $k\cdot 2^{O(k)}=2^{O(k)}$ nonrecursive work. 
Since the tree decomposition has $O(n)$ nodes, this approach leads to a $n\cdot 2^{O(k)}$ runtime.
\end{proof}

We can now give the proof of \Cref{thm:PTAS}.

\begin{proof}[Proof of \Cref{thm:PTAS}]
The idea is to apply Baker's layer shifting technique~\cite{baker1994approximation} to the hitting set formulation of the problem. We actually solve the more general problem of hitting an arbitrary subset $C$ of $3$-cycles of $G$, given as an input.

Assume without loss of generality that $\epsilon \leq 1/2$ and set $k = \ceil{1/\varepsilon}$. Choose an arbitrary vertex $v_0$ of $G$ and label all vertices of $G$ according to their (shortest path) distance to $v_0$. We call the set of vertices at a specific distance from $v_0$ a layer of $G$. For $0\leq i < k$, assign the color $i$ to all edges incident to two vertices labeled $(i \text{ mod } k)$ (edges joining two vertices in consecutive layers do not get assigned a color).

Let $E^*$ be a smallest subset of edges such that every $3$-cycle in $C$ is incident to at least one edge in $E^*$. Notice that there is at least one of the $k$ colors, say, $c$, for which at most $|E^*|/k$ edges from $E^*$ get assigned the color $c$.

Now imagine decomposing $G$ into the subgraphs $G_0,\ldots G_p$ each induced by $k+1$ consecutive layers, starting at a layer whose label is  $(c \text{ mod } k)$ (except eventually for the first subgraph which starts at layer $0$, and the last subgraph which might consist of fewer than $k+1$ consecutive levels). Note that consecutive subgraphs overlap at layers $(c \text{ mod } k)$. Every $3$-cycle in $C$ appears in one of the subgraphs, as it necessarily has its $3$ vertices in at most $2$ consecutive layers, and every pair of consecutive layers appears in one of the subgraphs. Taking the union of optimal solutions for the problem on each subgraph (with $C$ restricted to the $3$-cycles which appear in that subgraph) thus gives a valid solution for $G$. Let $E_1,\ldots,E_p$ be such optimal solutions, and for $1\leq j \leq p$ let $E^*[G_j]$ denote the subset of edges in $E^*$ which appear in $G_j$. We have that 
\begin{alignat*}{3}
    \left|\bigcup_{1\leq j \leq p} E_j\right| &\leq \sum_{1\leq j \leq p} |E_j|
    &&\leq \sum_{1\leq j \leq p} |E^*[G_j]|\\
    &\leq |E^*| + |E^*|/k
    &&\leq (1+\epsilon)|E^*|,
\end{alignat*}
where the second inequality stems from the fact that $E^*[G_j]$ is a valid solution for $G_j$, the third from the fact that only the edges of color $c$ get counted twice (and there are at most $|E^*|/k$ such edges) and the last from the choice of $k = \ceil{1/\epsilon}$.

The graphs $G_0,\ldots,G_p$ are all $(k+1)$-outerplanar and thus have treewidth $O(k)$ \cite{BODLAENDER19981}. Using \cref{lem:bounded-treewidth-flip},
we can thus solve the problem optimally for each $G_j$ in $|V(G_j)|2^{O(1/\epsilon)}$ time. 
The total runtime is then $\sum_{1\le j\le p}|V(G_j)|2^{O(1/\epsilon)} = n\cdot 2^{O(1/\epsilon)}$.
We can carry out the described procedure for each value of $1\leq c \leq k$ (note that we don't know the right value of $c$ to choose \textit{a priori}) and take the best solution. 
This adds a factor of $1/\epsilon$ that gets absorbed by the exponential.
The previous discussion, together with Lemma \ref{lemma:ptas-reduction}, then gives the sought result.
\end{proof}

While Baker's method of decomposing a planar graph into overlapping layers of small treewidth and combining the solutions is classical, it is interesting to note that a direct adaptation fails for the problem of approximating a smallest set of edges which make a triangulation $4$-connected when flipped simultaneously (or the equivalent problem of augmenting a triangulation to $1$-planar $4$-connected via edge insertions). This is because, while this problem is still efficiently solvable for small treewidth, taking the union of solutions might lead to selecting edges which are not simultaneously flippable. Perhaps this could be overcome by combining the solutions more carefully, but we leave the problem of finding a PTAS for the simultaneous case open here.

\subparagraph{Previous results.} 
Accornero, Ancona and Varini~\cite{AccorneroAV00} present an algorithm and claim it computes the minimum number of edges hitting all separating triangles in a given triangulation.
Note that this result together with our reduction in \cref{sec:hardness} (\cref{cor:flip-hardness}) would imply P=NP.
Although their algorithm computes a set of edges that hits all separating cycles, we claim this set is not minimum.
We attempt here to give a short intuition for that.
In short, their algorithm computes (the equivalent of) a 4-block tree of the input triangulation $T$, a decomposition of $T$ into maximal 4-connected components (4-blocks) hierarchically organized as the nesting structure of separating triangles that bound the maximal components.
At each level, the edges are weighted with the cost of the optimal solution for the children nodes that contain said edge in their outer face.
Note that every edge bounds at most two separating triangles in the next level of the tree and, therefore, there are at most two children associated with the same edge.
At the root of the tree (and recursively in every node) they then compute the minimum weight subset of edges that hit every separating triangle in the maximal 4-block associated with the node.
In~\cite[Theorem 3]{AccorneroAV00} there is an implicit assumption that the optimal solution contains a single edge of each separating triangle, since the weights are computed based on the cost of recursive solutions that flip a specific edge.
That is not true, for example, in our reduction where the optimal solution may flip two edges of a separating triangle bounding the literal gadget (blue edges in \cref{fig:3-4literalGadgets-500}).

\section{\texorpdfstring{$2\to 3$}{2 to 3} Augmentation for Geometric Triangulations  of \texorpdfstring{$n$}{n} Points in Convex Position}
\label{sec:dp}

Given a triangulation $G=(V,E)$ of $n$ points in convex position and a positive integer $\ell$, we can compute the \emph{minimum} number of edges that augment $G$ to a 3- or 4-connected $\ell$-planar graph (or report that there is no such augmentation) by dynamic programming. 

The key observation is a simple combinatorial characterization of 3- or 4-connected augmentation. We introduce some terminology. An edge $e\in E$ is a \emph{diagonal} if it is not an edge of the convex hull of $V$; and $e\in E$ is an \emph{ear} if there is exactly one point in $V$ in an open halfplane bounded by the line spanned by $e$.

\begin{proposition}\label{pro:comb}
    Let $G=(V,E)$ be a straight-line triangulation on $n$ points in convex position, and let $G'=(V,E\cup F)$ be a geometric graph. Then
    \begin{itemize}
        \item $G'$ is 3-connected if and only if every diagonal in $E$ crosses an edge in $F$;
        \item $G'$ is 4-connected if and only if every diagonal in $E$ crosses at least two edges in $F$; and every diagonal in $E$ is either an ear or crosses at least two disjoint edges in $F$. 
    \end{itemize}
\end{proposition}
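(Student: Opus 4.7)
The plan is to reduce both characterizations to a careful analysis of 2-cuts and 3-cuts in the triangulation $G$. I first establish a key structural lemma: in a straight-line triangulation on points in convex position, $\{u,v\}\subseteq V$ is a 2-cut of $G$ if and only if $uv$ is a diagonal. The forward direction is immediate, since a diagonal splits the hull into two arcs whose only $G$-connections run through its endpoints. For the converse, if $uv$ is not an edge, the straight segment $uv$ lies in the polygon and must cross some diagonal $xy\in E$; such a diagonal has its endpoints on opposite hull arcs of $\{u,v\}$ and hence provides a $G$-path between the two arcs avoiding $u$ and $v$. Note also that $G$ is 2-connected.

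To prove the 3-connectivity characterization, observe that $G'$ is 3-connected iff every 2-cut $\{u,v\}$ of $G$ is eliminated in $G'$. By the lemma, such a pair corresponds to a diagonal $uv$ that partitions $V\setminus\{u,v\}$ into sides $A,B$. Because the points are in convex position, a straight segment with one endpoint in $A$ and one in $B$ must cross $uv$, and conversely every straight edge crossing $uv$ has one endpoint in each side. Hence the two sides get reconnected in $G'-\{u,v\}$ precisely when some $F$-edge crosses $uv$.

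For 4-connectivity, for the forward direction, fix a diagonal $uv$. If $|F_{uv}|\leq 1$ then either $\{u,v\}$ is still a 2-cut or $\{u,v,x\}$ (for $x$ an endpoint of the unique crossing edge) is a 3-cut; hence $|F_{uv}|\geq 2$. If additionally $uv$ is non-ear, both sides contain at least two vertices, and if $F_{uv}$ were a star centered at some vertex $w$, then $\{u,v,w\}$ would sever every cross-diagonal $F$-edge, giving a 3-cut. Since $F_{uv}$ is a bipartite graph between the two sides of $uv$, K\"onig's theorem yields that $F_{uv}$ is not a star iff it contains two vertex-disjoint edges, establishing condition (b). For the backward direction, I first invoke the 3-connectivity from the previous paragraph; then for any candidate 3-cut $S=\{u,v,w\}$ with $uv$ a diagonal (which is the only way such a cut can arise, as I will argue from the convex-position structure), condition (b) in the non-ear case supplies an edge of $F_{uv}$ with both endpoints outside $\{u,v,w\}$, while in the ear case condition (a) gives the ear vertex $w_0$ at least two $F$-neighbors in the big side, at least one distinct from $w$; this restores the cross-diagonal connection. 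The internal connectivity of each side after removing $w$ is then handled by the hull arc together with the sub-triangulation structure.

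The main obstacle lies in the backward direction of the 4-connectivity claim: one must argue both that a cross-diagonal bridge survives the removal of $w$ (taken care of by disjointness in (b) for non-ears and by multiplicity in (a) for ears) and that each side-polygon remains internally connected once $w$ is deleted. The latter requires tracking, via the hull arc on each side and a recursive appeal to the diagonals of the sub-triangulations, how the conditions on the diagonals of the smaller sub-polygons ensure that no vertex on a side can be isolated by the single removal of $w$; a clean induction on $n$, with the ear case as the base and the non-ear case splitting the instance along $uv$, is the cleanest way to carry this out.
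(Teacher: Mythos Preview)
Your treatment of the 3-connectivity characterization, and of the forward implication for 4-connectivity, parallels the paper's argument (your appeal to K\"onig's theorem is a mild repackaging of the paper's direct case split, but the content is the same).

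The genuine gap is in the backward implication for 4-connectivity. You propose to handle ``internal connectivity of each side after removing $w$'' by an induction on $n$ that splits the instance along the diagonal $uv$. That induction cannot work: when you restrict to one side of $uv$, the $F$-edges that cross $uv$ disappear from the sub-instance, so the hypothesis ``every diagonal is crossed by at least two (disjoint) $F$-edges'' need not hold in the smaller polygon. Concretely, take $V=\{v_1,\dots,v_7\}$ in convex position with the fan triangulation from $v_1$ (diagonals $v_1v_3,v_1v_4,v_1v_5,v_1v_6$) and $F=\{v_2v_6,\ v_2v_7,\ v_3v_7\}$. Every diagonal is crossed by at least two edges of $F$, and the two non-ear diagonals $v_1v_4$ and $v_1v_5$ are each crossed by the vertex-disjoint pair $v_2v_6,\,v_3v_7$; so both stated conditions hold. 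Yet the only neighbours of $v_5$ in $G'=(V,E\cup F)$ are $v_1,v_4,v_6$, so $\{v_1,v_4,v_6\}$ is a 3-cut and $G'$ is not 4-connected. Your plan, applied with $uv=v_1v_4$ and $w=v_6$, would need the side $\{v_5,v_6,v_7\}$ minus $w$ to remain connected in $G'-\{u,v,w\}$; it does not.

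This is not merely a defect of your particular induction: the example shows that the 4-connectivity characterization, as stated, is false. The paper's own proof carries out a case analysis according to how many of $ab,ac,bc$ are diagonals of $E$, treating zero, one, and three, but omitting the case where exactly two are; the triple $\{v_1,v_4,v_6\}$ above is precisely such a case ($v_1v_4$ and $v_1v_6$ are diagonals while $v_4v_6\notin E$). So the obstacle you identified as ``the main obstacle'' is real and cannot be resolved by sharpening the induction---an additional hypothesis on $F$ is needed for the sufficiency direction.
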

\begin{proof}
    If $G'$ is 3-connected, then the two endpoints of a diagonal $e\in E$ cannot be a 2-cut, and so some edge $f\in F$ connects the two components of $G-ab$. Since the vertices are in convex position, then $e$ and $f$ cross. 
    
    Conversely, suppose that every diagonal in $E$ crosses an edge in $F$, and $G'$ has a 2-cut $\{a,b\}$.
    Then $\{a,b\}$ is already a 2-cut in $G$, so $ab$ is a diagonal, and $G-ab$ has exactly two connected components. However, an edge $f\in F$ that crosses $ab$ connects the two components, contradicting the assumption that $\{a,b\}$ is a 2-cut in $G'$.

    If $G'$ is 4-connected, then the two endpoints of a diagonal $ab\in E$ and an endpoint of a crossing edge $cd\in F$ cannot form a 3-cut. This leaves us with two possibilities: either $c$ or $d$ is the only vertex in one of the components of $G-ab$ (in this case $ab$ is an ear), or there an edge in $f_1\in F$ that crosses $ab$ and not incident to $c$, and an edge $f_2\in F$ that crosses $ab$ and not incident to $d$. Now if $f_1=f_2$, then $cd$ and $f_1=f_2$ are disjoint, otherwise $f_1$ and $f_2$ are disjoint. 
    In both cases, two disjoint edges in $F$ cross $ab$.
    
    Conversely, suppose that $G'$ satisfies the conditions above but has a 3-cut $\{a,b,c\}$. If none of $ab$, $ac$ and $bc$ is diagonal in $E$, then $G-\{a,b,c\}$ is connected, so $\{a,b,c\}$ would not be a 3-cut. 
    First, assume that $ab$, $ac$, and $bc$ are all diagonal in $G$. Then $G-\{a,b,c\}$ has exactly three components. However each edge of $\Delta(a,b,c)$ crosses two edges in $F$, and in particular at least one edge that is not incident to $\{a,b,c\}$. 
    Thus, each of the three components of $G-\{a,b,c\}$ is connected to another component in $G'$, and so $G'-\{a,b,c\}$ is connected.
    
    Next, assume that only one edge of $\Delta(a,b,c)$, say $ab$, is a diagonal of $E$. Then $G-\{a,b,c\}$ has exactly two components, separated by the line spanned by $ab$. In particular, if $ab$ is an ear, then $c$ cannot be the only vertex on one side of the line spanned by $ab$. In this case, $ab$ crosses at least two edges $f_1,f_2\in F$. At most one of them is incident to $c$: This is clear if $ab$ is not an ear, and $f_1,f_2$ are disjoint. If $ab$ is an ear, then $f_1,f_2$ are incident to a single vertex on one side of $ab$, but this vertex is not $c$, and so one of $f_1$ and $f_2$ is not incident to $c$. Since $F$ contains an edge between the two components of $G-\{a,b,c\}$, then $G'$ is connected.
\end{proof}

\begin{theorem}\label{thm:dp}
    For any $k\in \{3,4\}$ and any constant $\ell\in \mathbb{N}$, there is an  algorithm that, given a PSLG triangulation $G=(V,E)$ on $n$ points in convex position, can find a minimum set $F$ of new edges in $O(n)$ time such that $G'=(V,E\cup F)$ is a $k$-connected $\ell$-planar geometric graph, or reports that there is no such graph $G'$. 
\end{theorem}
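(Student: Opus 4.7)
The plan is to build the dual tree $T$ of $G$ (one node per triangle, edges joining triangles sharing a diagonal) and run a bottom-up dynamic program on it. Because the vertices are in convex position, $T$ is a tree of maximum degree $3$ with $n-2$ nodes, and every chord $f\in F$ crosses exactly the diagonals dual to the edges on the unique path in $T$ between the triangles containing its two endpoints. \Cref{pro:comb} translates the $k$-connectivity requirement ($k\in\{3,4\}$) into a purely combinatorial covering condition on the edges of $T$: each diagonal-edge of $T$ must be covered by at least one (resp.\ at least two, with a disjointness clause for non-ear diagonals) of the chord-paths of $F$. The $\ell$-planarity constraint caps, per edge in the resulting drawing, the total number of crossings, so each diagonal is covered by at most $\ell$ chords, and each chord in $F$ has path-length at most $\ell$ in $T$ and is crossed by at most $\ell{-}1$ other chords.

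Root $T$ at an arbitrary leaf. For each node $t$ with incident edge $d_t$ to its parent, the DP state records, up to symmetry, the multiset of chords in $F$ that are ``open'' across $d_t$ (i.e.\ their path in $T$ enters the subtree rooted at $t$ and then exits through $d_t$), together with, for each such chord: (i) its endpoint-vertex lying inside the subtree, (ii) the number of diagonals it has crossed so far inside the subtree, and (iii) its pairwise crossing pattern with the other currently open chords. At most $\ell$ chords are open at $d_t$; each has traversed at most $\ell$ diagonals; and since an open chord must still reach its second endpoint within a remaining budget of $\ell$ crossings, both endpoints lie in triangles within dual-distance $\ell$ of $d_t$. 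So only $O(\ell^2)$ vertices can host an open-chord endpoint near the interface, and the whole state descriptor has $O(1)$ possibilities (as $\ell$ is a constant).

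The transitions merge the states of $t$'s one or two children: chords open at a child's interface either continue through $t$ (adding one crossing to their count and to the crossing-budget of the diagonal between $t$ and the child), terminate at a vertex of $t$, or cross each other inside $t$. We also guess any new chord of $F$ whose inside endpoint is a vertex of $t$ and either open it (if its partner lies outside) or immediately close it (if its partner lies in the same subtree and the path between them is entirely below $t$). At every merge we check the relevant clause of \Cref{pro:comb} on the diagonal that is being finalized, and verify that no edge exceeds $\ell$ crossings once all its crossings are known. The DP stores the minimum number of chords used inside the subtree compatible with the state; standard backpointers recover $F$. Output infeasibility if no valid state exists at the root.

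The main obstacle is bookkeeping rather than algorithmic depth: we must correctly count chord-chord crossings (not only chord-diagonal crossings) against each chord's $\ell$-budget, correctly distinguish the 4-connectivity disjointness clause by inspecting which pairs of open chords share an endpoint, and ensure that chords whose two inside endpoints lie in the same subtree are detected exactly once (when the lower ancestor-triangle in $T$ is processed). All of these interactions are local: each involves only the $O(\ell)$ triangles within dual-distance $\ell$ of the interface, so the state descriptor remains $O(1)$, every transition takes $O(1)$ time, and the total running time is $O(n)$.
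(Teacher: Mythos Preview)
Your approach is essentially the paper's: a bottom-up dynamic program over the dual tree (the paper frames it as the poset of diagonals rooted at a fixed hull edge $e_0$), with an $O_\ell(1)$-size state at each interface recording the at most $\ell$ open chords together with enough per-chord data to check the clauses of \Cref{pro:comb} and the $\ell$-planarity budget, and transitions that brute-force all routings through the current triangle. The only real difference is bookkeeping: the paper stores for each open chord just a crossing-count coordinate $c_i$ and, for $k=4$, a laminar family $P$ recording which groups of open chords share a lower endpoint or form the full crossing-set of some lower diagonal, whereas you carry each open chord's inside endpoint explicitly---either representation is constant-size for fixed $\ell$ and yields the claimed $O(n)$ running time.
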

\begin{proof}
Let $e_0$ be an arbitrary edge of the convex hull of $V$, and let $E_0$ be the subset of $E$ comprising $e_0$ and all diagonals in $E$. We define a partial order $\preceq $ on $E_0$: We say that $e\preceq f$ if both $f$ and $e_0$ are in the same closed halfplane of bounded by the line spanned by $e$. (In particular, $e\preceq e_0$ holds for all $e\in E_0$. The poset $(E_0,\preceq)$ defines a tree, rooted at $e_0$, which is binary tree rooted at $e_0$. 
For a dynamic programming algorithm, we define the subproblems that correspond to the edge $e\in E_0$.

\begin{figure}[htbp]
\hfill
     \begin{subfigure}[b]{0.25\textwidth}
          \centering
        \includegraphics{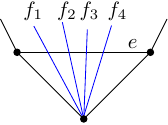}
    \subcaption{}
    \label{fig:dp1}
     \end{subfigure}
  \centering
     \begin{subfigure}[b]{0.36\textwidth}
          \centering
  \includegraphics{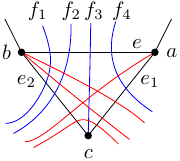}
    \subcaption{}
    \label{fig:dp2}
     \end{subfigure}
     \hfill
     \begin{subfigure}[b]{0.36\textwidth}
          \centering
   \includegraphics{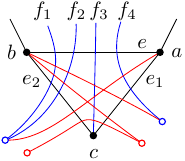}
      \subcaption{}
    \label{fig:dp3}
     \end{subfigure}
       \caption{Illustration for a dynamic programming algorithm: 
       An ear diagonal $e$ (a); a non-ear diagonal $e$ for $k=3$ (b), and for $k=4$ (c). 
       \label{fig:dp}} 
       \end{figure}

\subparagraph{3-connectivity.}
For an edge $e\in E_0$, integer $q\in \{0,\ldots ,\ell\}$ and a vector $\vec{c}\in \{1,\ldots , \ell\}^q$, let $\opt_3(e,q,\vec{c})$ denote the minimum size of a set of new edges $F$ such that $G'=(V,E\cup F)$ has the following properties: 
\begin{itemize}
    \item Edge $e$ crosses exactly $q$ edges in $F$: $f_1,\ldots ,f_q$; 
    \item for $i=1,\ldots , q$, edge $f_i\in F$ crosses at most $c_i$ edges $h\in E_0$, $h\preceq e$; and 
    \item every edge $h\in E_0$, $h\preceq e$, crosses an edge in $F$ (cf.~\Cref{pro:comb} for $k=3$).
\end{itemize}
If no such set $F$ exists, then we set $\opt_3(e,q,vec{c})=\infty$. 

Our DP algorithm considers all edges $e\in E_0$ in increasing order in the poset $(E_0,\preceq)$. 
If $e$ is a minimal element of the poset $(E_0,\preceq)$, then $e$ is an ear, and $\opt_3(e,q,\vec{c})=q$ for all $\vec{c}$; see \Cref{fig:dp1}.
If $e$ is not minimal, then we may assume that $e=ab$ in  a triangle $\Delta(a,b,c)$, where $e_1=ac$ or $e_2=bc$ is also a diagonal in $E_0$. To compute $\opt_3(e,q,\vec{c})$, we compare all possibilities in the triangle $\Delta(a,b,c)$ by brute force: Each of the $q$ edges that cross $e$ may cross $e_1$ or $e_2$, or end at vertex $c$. 
Furthermore, any edge that crosses $e_1$ (resp., $e_2$) may end at $b$ (resp., $a$) or cross $e$ or $e_2$ (resp., $e_1$). For each combination, we determine whether they are feasible (i.e., satisfy all constraints), and let 
$\opt_3(e,p,\vec{c})$ be the minimum size of a feasible solution; see \Cref{fig:dp2}.
Importantly, the order in which $e$ crosses $f_1,\ldots , f_q$ is not specified---we count only unavoidable  crossings: The edges in $\{f_1,\ldots f_q\}$ that are incident to $c$ must cross all edges that cross both $e_1$ and $e_2$, or cross $e_1$ and end at $b$, or cross $e_2$ and end at $a$. Similar conditions hold for edges that cross $e_1$ or $e_2$ by symmetry.

\subparagraph{4-connectivity.} In this case, the combinatorial characterization in \Cref{pro:comb} is more involved, and we need to maintain information for bundles of new edges that cross an old edge. For an edge $e\in E_0$, integer $q\in \{0,\ldots ,\ell\}$, vector $\vec{c}\in \{1,\ldots , \ell\}^q$, and a laminar set system $P$ over $\{1,\ldots, q\}$, let $\opt_4(e,q,\vec{c},P)$ denote the minimum size of a set of new edges $F$ such that $G'=(V,E\cup F)$ has the following properties: 
\begin{itemize}
    \item Edge $e$ crosses exactly $q$ edges in $F$: $f_1,\ldots ,f_p$; 
    \item for $i=1,\ldots , p$, edge $f_i\in F$ crosses at most $c_i$ edges $h\in E_0$, $h\preceq e$; 
    \item each set $S\in P$ corresponds to a 
    set $F_{e,S}=\{f_i:i\in S\}$ such that the edges in $F_{e,S}$ have a common endpoint below $e$ or $F_{e,S}$ is the set of all edges in $F$ that cross an edge $h\preceq e$; 
    \item  every edge $h\in E_0$, $h\preceq e$, satisfies the conditions in \Cref{pro:comb} for $k=4$
    assuming the edges in each set $F_{e,S}$, $S\in P$, do not have a common endpoint \emph{above} $e$.
    \end{itemize}
If no such set $F$ exists, then we set $\opt_4(e,q,\vec{c},P)=\infty$.

Our DP algorithm considers all edges $e\in E_0$ in increasing order in the poset $(E_0,\preceq)$. 
If $e$ is a minimal element of the poset $(E_0,\preceq)$, then $e$ is an ear. In this case, all edges that cross $e$ must have a common endpoint below $e$: We set $\opt_3(e,q,\vec{c},P)=q$ for all $\vec{c}$ if $P=\{\{1,\ldots q\}\}$, and $\opt_3(e,q,\vec{c},P)=\infty$ for any other partition $P$; see \Cref{fig:dp1}.
If $e$ is not minimal, then $e=ab$ in a triangle $\Delta(a,b,c)$, where $e_1=ac$ or $e_2=bc$ is also a diagonal in $E_0$. We compute  $\opt_4(e,q,\vec{c},P)$ by a brute force comparison of all possible subproblems for edges $e_1$ and $e_2$. By maintaining the laminar system $P$, we can determine whether every non-ear edge is crossed by at least two disjoint edges. 
\end{proof}
\section{Augmentation for Planar Straight-Line Graphs}
\label{sec:pslg}

As noted above, the connectivity augmentation problem has been thoroughly studied over PSLGs, where both the input and output graphs are restricted to PSLGs (hence $k\leq 5$).
For example, every PSLG $G$ can be augmented to a 2-connected PSLG, and this is the best possible if the vertices are in convex position or an edge of $G$ is a chord of the convex hull. 
In this section, we relax the planarity constraint for the output graph $G'$. In general, we would like to  determine all pairs of parameters $(k,\ell)$ such that every PSLG can be augmented to a $k$-connected $\ell$-planar geometric graph. 

If we wish to augment a PSLG to a 3-connected geometric graph, we need to go beyond planarity. Our first result addresses the case $k=3$ for points in convex position.

\begin{restatable}{theorem}{pslgtheorem}
\label{thm:1}
Every PSLG on $n$ points in convex position can be augmented to a 3-connected 5-planar geometric graph. This bound is the best possible: There exists a triangulation on $n$ points in convex position for which any augmentation to a 3-connected geometric graph yields an edge with at least 5 crossings. 
\end{restatable}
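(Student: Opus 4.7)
For the upper bound, I plan to reduce to the case where $G$ is a triangulation: given any PSLG $G$ on $n$ convex-position points, extend $G$ to a triangulation $T$ by adding non-crossing chords, and include those chords in $F$. It then suffices to augment $T$ to a 3-connected 5-planar geometric graph, since $G \cup F$ will contain $T$ and inherit both properties. For a triangulation $T$, I introduce the \emph{canonical augmentation} $F_c$: for each diagonal $ab \in T$ with incident triangles $\{a,b,c\}$ and $\{a,b,d\}$, add the chord $cd$. Since $c, a, d, b$ are in convex position, the quadrilateral $cadb$ is convex and $cd$ lies entirely inside it, so $cd$ crosses exactly $ab$ among the edges of $T$. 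By \Cref{pro:comb}, $T \cup F_c$ is 3-connected, and each original diagonal has local crossing number $1$.

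The key structural lemma I would prove is: two canonical edges $cd$ and $c'd'$ (for diagonals $ab$ and $a'b'$) cross if and only if their $2$-triangle regions $R_{ab} = \Delta_1\cup\Delta_2$ and $R_{a'b'}$ share a triangle. The forward direction is immediate since each canonical edge lies in its region and triangulation triangles are pairwise interior-disjoint. The backward direction follows from a short convex-position argument: if the regions share a triangle $\Delta$, then both canonical chords enter $\Delta$ as segments from a vertex of $\Delta$ to the opposite edge, and any two such segments inside a triangle must cross. A triangle shared by $R_{ab}$ and $R_{a'b'}$ has both $ab$ and $a'b'$ on its boundary, so $a'b'$ must be one of the four non-$ab$ boundary edges of the quadrilateral $cadb$, namely $ac$, $bc$, $ad$, or $bd$. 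Hence each canonical edge crosses at most $4$ other canonical edges, giving local crossing number at most $4 + 1 = 5$, so $T \cup F_c$ is $5$-planar.

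For the lower bound, I plan to exhibit a specific triangulation $T^*$ on $n$ convex-position points for which every $3$-connecting augmentation has some edge with at least $5$ crossings. My candidate construction features a central diagonal $ab$ with four adjacent diagonals $ac$, $bc$, $ad$, $bd$, each surrounded by a minimal $2$-triangle region so that its canonical chord is essentially the only economical way to cross it. The case analysis splits on the edge $e$ of $F$ that crosses $ab$: either $e$ is the canonical chord $cd$, which by the upper-bound lemma absorbs crossings from the four forced adjacent canonicals plus $ab$ itself for a total of $5$; or $e$ is a longer chord $pq$, which by the nested structure must cross at least $4$ further diagonals of $T^*$. The main obstacle will be designing $T^*$ to truly force the four adjacent canonicals regardless of the chosen augmentation, and performing a robust case analysis to rule out mixed strategies that try to sidestep the bottleneck; I expect this to require nested quadrilateral layers around $ab$ together with a local exchange argument showing that any deviation from the canonical choice only reroutes, but does not eliminate, the $5$ crossings.
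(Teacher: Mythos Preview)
Your upper bound argument is correct and essentially identical to the paper's: both take the canonical augmentation (add $cd$ for each diagonal $ab$ with adjacent triangles $\Delta(abc)$, $\Delta(abd)$) and count that a new edge $cd$ crosses one old edge and at most two new edges in each of the two triangles, for a total of five. Your structural lemma (two canonical edges cross iff their two-triangle regions share a triangle) is a clean reformulation of the same count.

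Your lower bound sketch, however, has a genuine gap. The dichotomy you set up --- the edge $e$ crossing $ab$ is either the canonical chord $cd$ or a ``longer chord'' that must cross at least four further diagonals --- is incomplete. An edge of length~3 (traversing three triangles) crosses only \emph{two} diagonals of $T^*$, so to reach five crossings you would need three \emph{new} edges to cross it; an edge of length~4 crosses three diagonals, and you would need two new crossings. Neither case is covered by your dichotomy, and these are exactly the hard cases. Moreover, in the ``$e=cd$'' branch you assume the four adjacent canonicals are forced, but an adversary can cross $ac$ (say) with an edge that stays outside $\Delta(abc)$ on the far side of $ac$ and never meets $cd$; nothing in a minimal two-triangle neighborhood prevents this.

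The paper's lower bound takes a different route: it uses a \emph{balanced} triangulation (so every triangle in a radius-10 neighborhood of the root has dual degree three) and proves a cascade of five claims that systematically rule out augmenting edges of length~5, then~4, then~3 inside this neighborhood. The key mechanism is not a single bottleneck diagonal but a propagation argument: a length-3 edge $e=v_1v_4$ through $\Delta(v_0v_iv_{i+1})$, $i=1,2,3$, forces (via Claim~3) new edges $e_1,e_2$ crossing $v_0v_1$ and $v_0v_4$ that also cross $e$; then the remaining diagonals $v_2v_3$ and $v_3v_4$ force further new edges $e_3,e_4$ that, by the earlier length restrictions, must terminate inside the pentagon and end up crossing each other and one of $e_1,e_2$, producing a fifth crossing somewhere. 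Your local-exchange idea does not capture this cascading structure; to make your approach work you would essentially have to rediscover these claims.
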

\begin{proof}
\textbf{Upper bound.} 
Let $G=(V,E)$ be a PSLG on $n$ points in convex position. Assume w.l.o.g.\ that $G$ is an edge-maximal PSLG (i.e., a triangulation of a convex $n$-gon). Then $G$ is 2-connected; and a vertex pair $\{a,b\}$ is a 2-cut if and only if $ab\in E$ and $ab$ is a diagonal of the outer cycle. Let us call such an edge a \emph{diagonal}. Augment $G$ as follows: For every diagonal $ab\in E$, consider the two adjacent triangles, say $\Delta(abc)$ and $\Delta(abd)$, and insert the edge $cd$. Let $G'$ denote  the resulting geometric graph. First, we show that $G'$ is 3-connected: Indeed, if $\{a,b\}$ is a 2-cut in $G'$, then it is already a 2-cut in $G$. However, $G-\{ab\}$ has exactly two components, which are connected by an edge in $G$ (by construction). 
Second, we show that $G'$ is 5-planar. Every edge in $E$ crosses exactly one (new) edge by construction. Consider a new edge, say $cd$, with the notation above. Then $cd$ crosses one old edge (edge $ab$). It can cross at most two (new) edges in $\Delta(abc)$, and at most two (new) edges to $\Delta(abd)$. Overall, every edge of $G'$ crosses at most 5 other edges.

\smallskip\noindent\textbf{Lower bound.}
Let $G=(V,E)$ be a balanced triangulation on $n$ points in convex position, where $n$ is sufficiently large. That is, the dual graph is a balanced binary tree, where the root $\Delta_0$ has degree 3. For $k\in \mathbb{N}$, denote by $B_0(k)$ the union of triangles at distance at most $k$ from $\Delta_0$ in the dual graph. If $n\geq 3\cdot 2^{11}$, then all triangles in $B_0(10)$ have degree 3. 

Suppose, for the sake of contradiction, that $G$ can be augmented to a 3-connected 4-planar geometric graph $G'=(V,E\cup F)$. Since $G'$ is 3-connected, then every interior edge of $G$ is crossed by at least one edge in $F$.  Let us define the \emph{length} of an edge $e\in F$ as the number of faces of $G$ that $e$ intersects. Since $G$ is a triangulation of a convex polygon, then every edge in $F$ has length at least 2. By assumption, every new edge crosses at most 4 old edges, and so the length of every edge in $F$ is at most 5. 


We proceed with five claims to conclude that a new edge in $G'$ has at least 5 crossings.

\begin{figure}[htbp]
\hfill
     \begin{subfigure}[b]{0.32\textwidth}
          \centering
         \includegraphics[width=45mm]{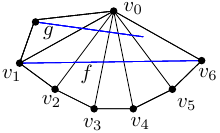}
    \subcaption{}
    \label{fig:claim1}
     \end{subfigure}
  \centering
     \begin{subfigure}[b]{0.32\textwidth}
          \centering
   \includegraphics[width=45mm]{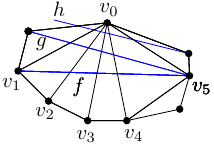}
    \subcaption{}
    \label{fig:claim2}
     \end{subfigure}
     \hfill
     \begin{subfigure}[b]{0.32\textwidth}
          \centering
   \includegraphics[width=45mm]{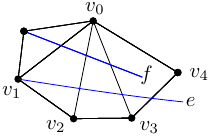}
      \subcaption{}
    \label{fig:claim3}
     \end{subfigure}
       \caption{Illustrations for \Cref{cl:1} (a), 
       \Cref{cl:2} (b), and \cref{cl:3} (c). 
       \label{fig:claims}} 
       \end{figure}

\begin{claim}\label{cl:1}
$B_0(10)$ does not contain any edge $f\in F$ of length 5 such that the 5 triangles that intersect $f$ have a common vertex. 
\end{claim}
\begin{proof}[Proof of Claim~\ref{cl:1}]
Suppose that $f\in F$ crosses four triangles $\Delta(v_0,v_i,v_{i+1})$ for $i=1,\ldots , 5$; see \Cref{fig:claim1}. Since all these triangles have dual degree three, then $v_0 v_1$ is an interior edge, which are each crossed by some edge $g\in F$. Edge $f$ crosses 4 old edges, so it cannot cross any new edge. Edge $g$ crosses $v_0v_1$, so it is not incident to $v_0$. It cannot cross $f$, then it must cross $v_0v_i$ for $i=1,\ldots , 5$.  It crosses at least 5 old edges: a contradiction. 
\end{proof}

\begin{claim}\label{cl:2}
$B_0(10)$ does not contain any edge $f\in F$ of length 4 or 5 such that the first 4 triangles traversed by $f$ have a common vertex. 
\end{claim}
\begin{proof}[Proof of Claim~\ref{cl:2}]
Suppose that $f\in F$ crosses four triangles $\Delta(v_0,v_i,v_{i_+1})$ for $i=1,\ldots , 4$; see \Cref{fig:claim2}. By Claim~1, we know that $f$ does not cross $v_0v_5$. Since these triangles have dual degree three, then $v_0 v_1$ and $v_0v_5$ are interior edges, each of which is crossed by some edges $g,h\in F$. First, we show that $g\neq h.$ Indeed, if $g=h$, then this edge either crosses $v_0v_i$ for $i=1,\ldots , 5$, or it crosses $f$ and four boundary edges of the region $\bigcup_{i=1}^4 \Delta(v_0v_iv_{i+1})$: both cases lead to contradiction, so $f\neq g$. 
Now edge $f$ crosses at least 3 old edges, so it can cross at most one new edge. Edges $g$ and $h$ are not incident to $v_0$; at most one of them can cross $f$.
Assume w.l.o.g.\ that $g$ does not cross $f$. Then $g$ must cross $v_0v_i$ for $i=1,\ldots , 4$. 
If $g$ crosses $v_0v_5$ or $h$, as well, then it crosses at least 5 edges: a contradiction. 
Therefore, $g$ crosses neither $v_0v_5$ nor $h$. Then $g$ must be incident to $v_5$. In this case, however, $h$ crosses $v_0v_i$ for $i=1,\ldots, 5$: a contradiction.
\end{proof}

\begin{claim}\label{cl:3}
Let $e\in F$ be an edge of length at least 3 such that the first three triangles traversed by $e$ are $\Delta(v_0,v_i,v_{i+1})\subset B_0(10)$ for $i=1,2, 3$; and $e$ is incident to $v_1$. Then there is an edge $f\in F$ that crosses both $v_0v_1$ and $e$. 
\end{claim}
\begin{proof}[Proof of Claim~\ref{cl:3}]
Since $e$ is in $B_0(10)$, then $v_0v_1$ is an interior edge, crossed by some edge $f$; see \Cref{fig:claim3}. If $f$ crosses $e$, then our proof is complete. 
Otherwise, $f$ crosses $v_0 v_i$ for $i=1,2,3$. Which contradicts Claim~\ref{cl:2}. 
\end{proof}

\begin{claim}\label{cl:4}
$B_0(10)$ does not contain any edge of length 4 or 5.
\end{claim}
\begin{proof}[Proof of Claim~\ref{cl:4}] 
Let $e\in F$ be a new edge contained in $B_0(10)$. Note that any three consecutive triangles traversed by $e$ have a common endpoint. If $e$ has length 5, then it crosses 4 old edges, and it also crosses a new edge by \Cref{cl:3}.
If $e$ has length 4, then it crosses 3 old edges. By \Cref{cl:3}, it cannot traverse 4 triangles with a common vertex. Therefore, by \Cref{cl:2}, the first and last triangles each yield new edges $g,h\in F$ that cross $e$. (Note that $g\neq h$, otherwise the edge $g=h$ would cross 5 old edges.)
\end{proof}

\begin{figure}[htbp]
\hfill
     \begin{subfigure}[b]{0.32\textwidth}
          \centering
         \includegraphics[width=45mm]{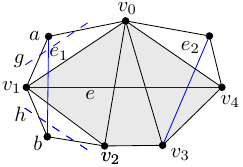}
    \subcaption{}
    \label{fig:claim5a}
     \end{subfigure}
  \centering
     \begin{subfigure}[b]{0.32\textwidth}
          \centering
   \includegraphics[width=45mm]{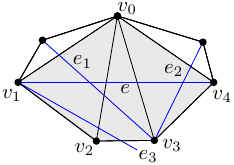}
    \subcaption{}
    \label{fig:claim5b}
     \end{subfigure}
     \hfill
     \begin{subfigure}[b]{0.32\textwidth}
          \centering
   \includegraphics[width=45mm]{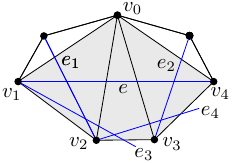}
      \subcaption{}
    \label{fig:claim3c}
     \end{subfigure}
       \caption{Illustrations for \Cref{cl:5}. 
       \label{fig:claim5}} 
       \end{figure}

\begin{claim}\label{cl:5}
$B_0(10)$ does not contain any edge of length 3 or more.
\end{claim}
\begin{proof}[Proof of Claim~\ref{cl:5}] 
Let $e\in F$ be an edge of length 3 that traverses triangles $\Delta(v_0v_iv_{i+1})$ for $i=1,2,3$. The endpoints of $e$ are $v_1$ and $v_4$; and $e$ crosses two old edges; see \Cref{fig:claim5}. \Cref{cl:3} yields an edge $e_1$ (resp., $e_2$) that crosses $v_0v_1$ (resp., $v_0v_4)$ and $e$. By \Cref{cl:1}, $e_1\neq e_2$, so $e$ crosses at least two new edges. Consequently, no other edge can cross $e$. 

Consider the pentagon $P=\bigcup_{i=1}^3 \Delta(v_0v_iv_{i+1})$. It contains the edge $e$ of length 3. Edge $e_1$ enters the interior of $P$ by crossing edge $v_0v_1$. We claim that $e_1$ cannot exit $P$ by crossing $v_1v_2$. Suppose, to the contrary, that $e_1$ crosses $v_1v_2$; see \Cref{fig:claim5a}. Let $\Delta(a,v_0,v_1)$ and $\Delta(b,v_1,v_2)$ be the triangles adjacent to $P$.
Since the length of $e_1$ is 3 by \Cref{cl:4}, then it is incident to $a$ and $b$. By \Cref{cl:3}, $e_1$ crosses new edges $g$ and $h$ that crosses $v_1a$ and $v_1b$, respectively. Note that $g\neq h$ (or else the edge $g=h$ would have length 5, contrary to \Cref{cl:4}). Overall $e_1$ crosses 5 edges---a contradiction. If $e_1$ exits on any other edge of $P$, then it would have length at least 4, contradicting \Cref{cl:4}. We conclude that $e_1$ crosses the boundary of $P$ only once, so it ends in $P$. The same argument holds for $e_2$. 
 
Since $v_2v_3$ is a diagonal, there is an edge $e_3\in F$ that crosses $v_2 v_3$; see \Cref{fig:claim5b}. As argued in the previous paragraph, $e_3\notin \{e_1,e_2\}$. Edge $e_3$ cannot cross $e$ (as $e$ already has 4 crossings), and its length is at most 3 by \Cref{cl:4}. So $e_3$ cannot exit the pentagon $P$, and it ends at $v_1$ or $v_4$. Assume w.l.o.g.\ that $e_3$ ends at $v_1$. 

Since $v_3v_4$ is also a diagonal, there is an edge $e_4\in F$ that crosses $v_3v_4$. By the previous two paragraphs, $e_4\not\in \{e_1,e_2,e_3 \}$. The length of $e_4$ is at most 3 by \Cref{cl:4}, so it also ends in $P$: at vertex $v_0$ or $v_2$. Since $e_4$ cannot cross $e$, either, then $e_4$ ends at vertex $v_2$. Consequently, $e_3$ and $e_4$ cross. 

Recall that $e_1$ crosses $v_0v_1$ and edge $e$, and ends at $v_2$ or $v_3$. If $e_1$ ends at $v_2$, then it crosses $e_3$. In this case, $e_3$ has five crossings: It crosses two old edges, and three new edges: $e_1$, $e_4$ and a 3rd edge that crosses $v_1v_2$ by \Cref{cl:3}.
If $e_1$ ends at $v_3$, then it crosses $e_4$. 
In this case $e_4$ has five crossings: It crosses two old edges, and three new edges: $e_1$, $e_2$, and $e_3$. Both cases lead to a contradiction. 
\end{proof}

This completes the proof of Theorem~\ref{thm:1}.
\end{proof}

We can generalize Theorem~\ref{thm:1} and obtain an asymptotically tight trade-off between connectivity and local crossing number (which matches the bound in \Cref{pp:general}). 

\begin{theorem}\label{thm:cxpslg}
 For every $k\in \mathbb{N}$, every PSLG on $n$ points in convex position can be augmented to a $k$-connected $O(k^2)$-planar geometric graph; and this bound is the best possible.
\end{theorem}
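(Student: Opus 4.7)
The lower bound follows immediately from \Cref{pp:circulant}: the empty PSLG on $n\ge k+1$ convex-position points is a valid input, any $k$-connected augmentation has minimum degree at least $k$, and \Cref{pp:circulant} (applied with parameter $\lceil k/2\rceil$) then forces the local crossing number to be $\Omega(k^2)$.

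For the upper bound, I would lift the cluster construction of \Cref{thm:topological} to the geometric setting, exploiting the convex position of $V$. First, I would augment $G$ to a triangulation $T$ of the convex $n$-gon (this only helps connectivity). Second, using a centroid decomposition of the dual tree $D_T$, I would partition $D_T$ into connected clusters $C_1,\ldots,C_m$ of size between $k$ and $6k$; each cluster $C_i$ then has vertex set $V_i$ of size $O(k)$ by Euler's formula. Third, inside each $V_i$ I would insert the edges of the $\lceil k/2\rceil$-circulant on $V_i$, ordered along the convex hull; by \Cref{pp:circulant} this alone makes $G'[V_i]$ a $k$-connected $O(k^2)$-planar geometric graph. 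Fourth, for each pair of adjacent clusters $(V_i,V_j)$ I would add a vertex-disjoint matching of size $k-|V_i\cap V_j|$ between $V_i\setminus V_j$ and $V_j\setminus V_i$. The $k$-connectivity of the augmented graph then follows from the same inductive Menger-style argument used in \Cref{thm:topological}.

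The main obstacle will be bounding the local crossing number in the straight-line setting. In \Cref{thm:topological} edges were Jordan arcs routed along shortest paths inside each cluster region, which automatically controls crossings; here every edge is a straight chord and can sweep across several clusters. The plan to overcome this is to exploit two geometric features: (i) each $V_i$ occupies only a constant number of arcs of the convex hull, so an intra-cluster circulant edge is crossed only by chords whose endpoints lie in those arcs; and (ii) any straight-line chord meets the convex hulls of only $O(k)$ clusters and contributes $O(k)$ crossings per cluster, totalling $O(k^2)$. A particularly delicate case is a triangulation vertex of very high degree (for example the apex of a fan triangulation), which may belong to every cluster along a long path of $D_T$; a naive circulant chord that ``jumps over'' such an apex would cross $\Omega(n)$ of its incident fan edges. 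I plan to handle this by refining the clustering so that high-degree apex vertices lie on cluster boundaries, and by restricting the intra-cluster circulant to edges that do not straddle such apex vertices, using the apex itself as a connectivity hub when verifying the Menger bounds. Showing that this restricted circulant still yields a $k$-connected subgraph and that the overall local crossing number stays $O(k^2)$ is expected to be the most technical step.
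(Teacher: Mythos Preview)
Your overall plan---partition the dual tree into $\Theta(k)$-size clusters, make each cluster $k$-connected, add size-$(k-2)$ matchings between adjacent clusters, and run the Menger induction of \Cref{thm:topological}---is exactly what the paper does (the paper uses the complete graph on each cluster instead of a circulant, and cluster sizes in $[k,3k]$, but these differences are cosmetic).

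Where you go astray is in manufacturing a difficulty that convex position already removes. The geometric fact you are missing is this: if $S$ is any connected subtree of the dual of a triangulation of a convex polygon $P$, then the union $R_S$ of the triangles in $S$ is itself a \emph{convex} polygon---indeed, $R_S$ is the intersection of $P$ with the closed half-planes bounded by the diagonals on $\partial R_S$. Hence every cluster $V_i$ is a convex polygon on $\Theta(k)$ vertices, and every straight-line segment with both endpoints in $V_i$ lies entirely inside that polygon. An intra-cluster edge can therefore cross only (i) the $O(k)$ triangulation diagonals interior to its own cluster, (ii) the $O(k^2)$ other intra-cluster edges, and (iii) the $O(k)\cdot O(k)$ matching edges incident to $V_i$; matching edges between two other clusters $V_j,V_\ell$ stay inside the convex region $R_{S_j\cup S_\ell}$ and never enter $\mathrm{int}(R_{S_i})$. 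This gives the $O(k^2)$ bound with no further work.

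In particular, your fan example is harmless: the cluster containing triangles $(0,1,2),\ldots,(0,k,k{+}1)$ is the convex polygon on $\{0,1,\ldots,k{+}1\}$, and a circulant chord such as $v_1v_{k+1}$ stays inside it and meets only the $O(k)$ fan edges belonging to that cluster, not $\Omega(n)$ of them. The apex $0$ may well lie in many clusters, but a chord never leaves its own convex cluster, so this costs nothing. Your proposed refinements---pushing apices to cluster boundaries, restricting the circulant, using the apex as a hub---are therefore unnecessary; drop them and the argument is already complete.
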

\begin{proof} 
We may assume w.l.o.g.\ that the input is a triangulation $G$ on $n$ points in convex position. The dual graph is a binary tree with $n$ nodes. We can partition the dual tree into subtrees (\emph{clusters}) of size in the range $[k,3k]$. Each cluster is a union of triangles, i.e., a convex polygon with $\Theta(k)$ vertices. The adjacency graph of the clusters (\emph{cluster graph}) is a tree of maximum degree $O(k)$. 
  
We can augment $G$ as follows. (1) Augment each cluster to a complete graph; (2) for each pair of adjacent clusters $V_1$ and $V_2$ (where $|V_1\cap V_2|=2$) add a matching of size $k-2$ between $V_1\setminus V_2$ and $V_2\setminus V_1$. Let $G'$ be the augmented graph. 
Similarly to the proof of \Cref{thm:topological}, the geometric graph graph $G'$ is $k$-connected (by induction on the clusters), and $O(k^2)$-planar (arguing about edges induced by a cluster, and a pair of adjacent clusters). 
\end{proof}

For $n$ points in general position, we cannot always augment a PSLG to 4-connectivity with  bounded local crossing number.

\begin{theorem}\label{thm:fan}
For every $n\in \mathbb{N}$, there is a straight-line triangulation $G$ on $n$ points such that any augmentation to a 4-connected geometric graph has an edge with $\Omega(n)$ crossings. 
\end{theorem}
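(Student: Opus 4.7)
The plan is to take $G$ to be the straight-line \emph{fan} triangulation: an apex $v$ together with $v_1,\dots,v_{n-1}$ placed on a strictly convex arc in general position, with all fan edges $vv_i$ and arc edges $v_iv_{i+1}$. Let $G'=G+F$ be any $4$-connected geometric augmentation.

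The first observation is that $G$ contains $\Theta(n)$ separating triangles of the form $\{v,v_i,v_{i+1}\}$ for $2\le i\le n-3$, each of which must be bridged in $G'$ by some chord $v_av_b\in F$ with $a\le i-1$ and $b\ge i+2$. Geometrically, any such chord is a straight segment crossing exactly the $b-a-1$ fan edges $vv_k$ with $a<k<b$, and two chords $v_av_b, v_cv_d$ with $a<c<b<d$ must also cross each other in the drawing. Encoding each chord as its index-interval on the arc turns the $4$-connectivity requirement into a gap-covering problem on $\{2,\dots,n-3\}$, while chord-chord crossings correspond exactly to interval interleavings.

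The main step is a double-counting argument on fan-chord and chord-chord crossings. Assume for contradiction that every edge of $G'$ has $o(n)$ crossings; then every chord of $F$ has length $o(n)$, so covering all $\Omega(n)$ gaps requires many short chords. The plan is then to show that the geometric constraints -- combined with the \emph{isolating} $3$-cuts $\{v,v_{i-1},v_{i+1}\}$, which force each interior $v_i$ to have further $F$-neighbors beyond just a bridging chord for one side -- prevent these short chords from being spread out so evenly that each receives only $O(1)$ interleavings. Pinpointing an extremal chord that must interleave $\Omega(n)$ others, and charging these crossings to a single edge, would yield the contradiction.

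The principal obstacle is ruling out very regular short-chord coverings, such as $\{v_jv_{j+3}\}$, which naively cover every separating-triangle gap with only $O(1)$ crossings per edge. Avoiding this requires tuning the construction: the arc and apex must be placed so that the isolating-cut constraints (each $v_i$ needing a neighbor outside $\{v,v_{i-1},v_{i+1}\}$, reachable only by chords that already compete with the bridging chords) cannot be simultaneously satisfied by a uniform short-chord family. I expect that once the correct point configuration is chosen, a pigeonhole on the interleaving graph of the chord intervals delivers a single chord crossed by $\Omega(n)$ others; setting up that geometric configuration and pushing the pigeonhole through is the technical heart of the argument.
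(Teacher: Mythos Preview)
Your proposal has a genuine gap: the fan triangulation you describe does \emph{not} work as a witness for the theorem. Indeed, if the apex $v$ and the arc vertices $v_1,\dots,v_{n-1}$ are in convex position, then by \Cref{thm:cxpslg} the PSLG can be augmented to a $4$-connected $O(1)$-planar geometric graph. Concretely, adding the short chords $v_jv_{j+2}$ and $v_jv_{j+3}$ along the arc (with a handful of extra chords near the two ends to fix up degrees) gives minimum degree $\ge 4$, destroys every separating triple $\{v,v_{i-1},v_{i+1}\}$ and $\{v,v_i,v_{i+1}\}$, and yields only $O(1)$ crossings on every edge. You correctly identify this regular short-chord family as ``the principal obstacle,'' but you do not actually rule it out; the proposal ends with ``I expect that once the correct point configuration is chosen\dots,'' which is precisely where the proof would have to begin. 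No amount of pigeonholing on interval interleavings will help here, because the covering-with-short-intervals problem genuinely \emph{does} admit a bounded-overlap solution.

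The paper's construction is of a completely different nature and avoids the counting argument altogether. It plants a single vertex $c$ of degree exactly $3$ (the center of a geometric $K_4$) and surrounds it with three fans of $\Theta(n)$ nearly-parallel edges, one fan outside each side of the outer triangle, arranged so that every straight segment from $c$ to any vertex other than its three current neighbors must pierce one of the fans and hence cross $\Omega(n)$ of its edges. Since $4$-connectivity forces $\deg(c)\ge 4$, at least one new edge incident to $c$ is required, and that single edge already has $\Omega(n)$ crossings. The whole proof is a two-line degree argument once the geometry is set up; the difficulty you were trying to push through combinatorics is handled instead by building a vertex that is geometrically trapped.
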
  
\begin{figure}[htbp]
  \centering
    \includegraphics[width=0.5\textwidth]{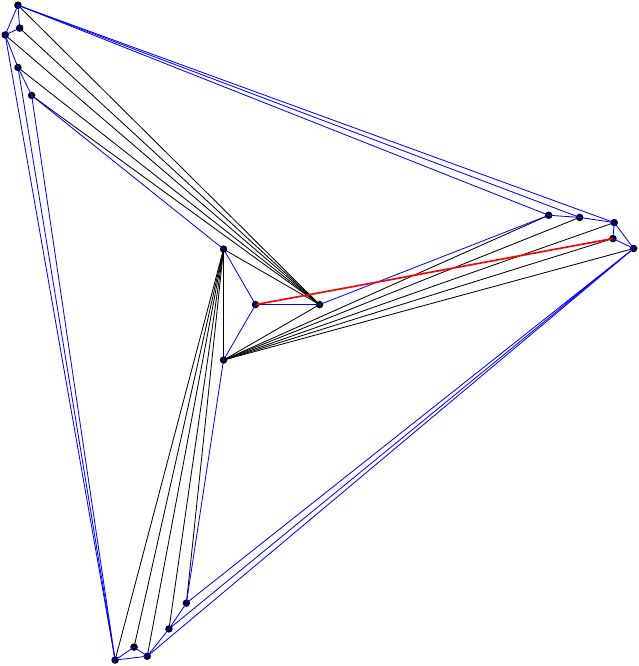}
    \caption{Construction in the proof of \Cref{thm:fan}}
    \label{fig:fan}
\end{figure}
\begin{proof} 
We construct a straight-line triangulation as follows; see \Cref{fig:fan}. Start with a PSLG $K_4$, with three points at the vertices of a regular triangle, and one at the center. Attach $(n-4)/3$ long edges to each outer vertex to form a ``fan'' almost parallel to the three sides of the regular triangle. Let $G$ be an arbitrary triangulation of this graph. In any 4-connected augmentation $G'$ of $G$, the central vertex of the initial $K_4$ must be connected to a new vertex. This edge will cross $\Omega(n)$ edges of one of the fans, hence $\Omega(n)$ edges of $G$.
\end{proof}

\bibliography{references}

\begin{thebibliography}{10}

\bibitem{AbellanasOHTU08}
Manuel Abellanas, Alfredo Garc{\'{\i}}a, Ferran Hurtado, Javier Tejel, and
  Jorge Urrutia.
\newblock Augmenting the connectivity of geometric graphs.
\newblock {\em Comput. Geom.}, 40(3):220--230, 2008.
\newblock \href {https://doi.org/10.1016/J.COMGEO.2007.09.001}
  {\path{doi:10.1016/J.COMGEO.2007.09.001}}.

\bibitem{AccorneroAV00}
Anna Accornero, Massimo Ancona, and Sonia Varini.
\newblock All separating triangles in a plane graph can be optimally "broken"
  in polynomial time.
\newblock {\em Int. J. Found. Comput. Sci.}, 11(3):405--421, 2000.
\newblock \href {https://doi.org/10.1142/S012905410000020X}
  {\path{doi:10.1142/S012905410000020X}}.

\bibitem{ACNS82}
Miklós Ajtai, Vašek Chvátal, Monroe~M. Newborn, and Endre Szemerédi.
\newblock Crossing-free subgraphs.
\newblock In {\em Theory and Practice of Combinatorics}, volume~60 of {\em
  North-Holland Math. Stud.}, pages 9--12. North-Holland, 1982.

\bibitem{AkitayaINSTW19}
Hugo~A. Akitaya, Rajasekhar Inkulu, Torrie~L. Nichols, Diane~L. Souvaine,
  Csaba~D. T{\'{o}}th, and Charles~R. Winston.
\newblock Minimum weight connectivity augmentation for planar straight-line
  graphs.
\newblock {\em Theor. Comput. Sci.}, 789:50--63, 2019.
\newblock \href {https://doi.org/10.1016/J.TCS.2018.05.031}
  {\path{doi:10.1016/J.TCS.2018.05.031}}.

\bibitem{Al-JubehIRSTV11}
Marwan Al{-}Jubeh, Mashhood Ishaque, Krist{\'{o}}f R{\'{e}}dei, Diane~L.
  Souvaine, Csaba~D. T{\'{o}}th, and Pavel Valtr.
\newblock Augmenting the edge connectivity of planar straight line graphs to
  three.
\newblock {\em Algorithmica}, 61(4):971--999, 2011.
\newblock \href {https://doi.org/10.1007/S00453-011-9551-0}
  {\path{doi:10.1007/S00453-011-9551-0}}.

\bibitem{AloupisBCDFM15}
Greg Aloupis, Luis Barba, Paz Carmi, Vida Dujmovi\'c, Fabrizio Frati, and Pat
  Morin.
\newblock Compatible connectivity augmentation of planar disconnected graphs.
\newblock {\em Discret. Comput. Geom.}, 54(2):459--480, 2015.
\newblock \href {https://doi.org/10.1007/S00454-015-9716-8}
  {\path{doi:10.1007/S00454-015-9716-8}}.

\bibitem{baker1994approximation}
Brenda~S. Baker.
\newblock Approximation algorithms for {NP}-complete problems on planar graphs.
\newblock {\em J.~ACM}, 41(1):153--180, 1994.
\newblock \href {https://doi.org/10.1145/174644.174650}
  {\path{doi:10.1145/174644.174650}}.

\bibitem{BG69}
David~W. Barnette and Brank Gr\"unbaum.
\newblock On {S}teinitz's theorem concerning convex 3-polytopes and on some
  properties of planar graphs.
\newblock In G.~Chartrand and S.~F. Kapoor, editors, {\em The Many Facets of
  Graph Theory}, pages 27--40. Springer, 1969.
\newblock \href {https://doi.org/10.1007/BFb0060102}
  {\path{doi:10.1007/BFb0060102}}.

\bibitem{Biedl21}
Therese Biedl.
\newblock A note on 1-planar graphs with minimum degree 7.
\newblock {\em Discret. Appl. Math.}, 289:230--232, 2021.
\newblock \href {https://doi.org/10.1016/J.DAM.2020.10.004}
  {\path{doi:10.1016/J.DAM.2020.10.004}}.

\bibitem{BODLAENDER19981}
Hans~L. Bodlaender.
\newblock A partial k-arboretum of graphs with bounded treewidth.
\newblock {\em Theoretical Computer Science}, 209(1):1--45, 1998.
\newblock URL:
  \url{https://www.sciencedirect.com/science/article/pii/S0304397597002284},
  \href {https://doi.org/10.1016/S0304-3975(97)00228-4}
  {\path{doi:10.1016/S0304-3975(97)00228-4}}.

\bibitem{bodlaender1996efficient}
Hans~L. Bodlaender and Ton Kloks.
\newblock Efficient and constructive algorithms for the pathwidth and treewidth
  of graphs.
\newblock {\em Journal of Algorithms}, 21(2):358--402, 1996.
\newblock \href {https://doi.org/10.1006/JAGM.1996.0049}
  {\path{doi:10.1006/JAGM.1996.0049}}.

\bibitem{BoseCGMW07}
Prosenjit Bose, Jurek Czyzowicz, Zhicheng Gao, Pat Morin, and David~R. Wood.
\newblock Simultaneous diagonal flips in plane triangulations.
\newblock {\em J. Graph Theory}, 54(4):307--330, 2007.
\newblock \href {https://doi.org/10.1002/JGT.20214}
  {\path{doi:10.1002/JGT.20214}}.

\bibitem{bose2014making}
Prosenjit Bose, Dana Jansens, Andr{\'e} Van~Renssen, Maria Saumell, and Sander
  Verdonschot.
\newblock Making triangulations 4-connected using flips.
\newblock {\em Comput. Geom.}, 47(2):187--197, 2014.
\newblock \href {https://doi.org/10.1016/J.COMGEO.2012.10.012}
  {\path{doi:10.1016/J.COMGEO.2012.10.012}}.

\bibitem{BoseV11}
Prosenjit Bose and Sander Verdonschot.
\newblock A history of flips in combinatorial triangulations.
\newblock In {\em Abstracts of the {XIV} Spanish Meeting on Computational
  Geometry ({EGC})}, volume 7579 of {\em LNCS}, pages 29--44. Springer, 2011.
\newblock \href {https://doi.org/10.1007/978-3-642-34191-5_3}
  {\path{doi:10.1007/978-3-642-34191-5_3}}.

\bibitem{Bungener024a}
Aaron B{\"{u}}ngener and Michael Kaufmann.
\newblock Improving the crossing lemma by characterizing dense 2-planar and
  3-planar graphs.
\newblock In {\em Proc.\ 32nd Symposium on Graph Drawing and Network
  Visualization ({GD})}, volume 320 of {\em LIPIcs}, pages 29:1--29:22. Schloss
  Dagstuhl, 2024.
\newblock \href {https://doi.org/10.4230/LIPICS.GD.2024.29}
  {\path{doi:10.4230/LIPICS.GD.2024.29}}.

\bibitem{CardinalHKTW18}
Jean Cardinal, Michael Hoffmann, Vincent Kusters, Csaba~D. T{\'{o}}th, and
  Manuel Wettstein.
\newblock Arc diagrams, flip distances, and {H}amiltonian triangulations.
\newblock {\em Comput. Geom.}, 68:206--225, 2018.
\newblock \href {https://doi.org/10.1016/J.COMGEO.2017.06.001}
  {\path{doi:10.1016/J.COMGEO.2017.06.001}}.

\bibitem{CenLP22}
Ruoxu Cen, Jason Li, and Debmalya Panigrahi.
\newblock Edge connectivity augmentation in near-linear time.
\newblock In {\em Proc.\ 54th {ACM} Symposium on Theory of Computing ({STOC})},
  pages 137--150, 2022.
\newblock \href {https://doi.org/10.1145/3519935.3520038}
  {\path{doi:10.1145/3519935.3520038}}.

\bibitem{DhanalakshmiSM16}
S.~Dhanalakshmi, N.~Sadagopan, and V.~Manogna.
\newblock Tri-connectivity augmentation in trees.
\newblock {\em Electron. Notes Discret. Math.}, 53:57--72, 2016.
\newblock \href {https://doi.org/10.1016/J.ENDM.2016.05.006}
  {\path{doi:10.1016/J.ENDM.2016.05.006}}.

\bibitem{EswaranT76}
Kapali~P. Eswaran and Robert~Endre Tarjan.
\newblock Augmentation problems.
\newblock {\em {SIAM} J. Comput.}, 5(4):653--665, 1976.
\newblock \href {https://doi.org/10.1137/0205044} {\path{doi:10.1137/0205044}}.

\bibitem{FialkoM98}
Sergej Fialko and Petra Mutzel.
\newblock A new approximation algorithm for the planar augmentation problem.
\newblock In {\em Proc.\ 9th {ACM-SIAM} Symposium on Discrete Algorithms
  ({SODA})}, pages 260--269, 1998.
\newblock URL: \url{http://dl.acm.org/citation.cfm?id=314613.314714}.

\bibitem{FJ15}
Andr\'as Frank and Tibor Jord\'an.
\newblock Graph connectivity augmentation.
\newblock In {\em Handbook of Graph Theory, Combinatorial Optimization, and
  Algorithms}, chapter~14, pages 313--346. CRC Press, 2015.

\bibitem{FredericksonJ81}
Greg~N. Frederickson and Joseph~F. J{\'{a}}J{\'{a}}.
\newblock Approximation algorithms for several graph augmentation problems.
\newblock {\em {SIAM} J. Comput.}, 10(2):270--283, 1981.
\newblock \href {https://doi.org/10.1137/0210019} {\path{doi:10.1137/0210019}}.

\bibitem{FujisawaSS18}
Jun Fujisawa, Keita Segawa, and Yusuke Suzuki.
\newblock The matching extendability of optimal 1-planar graphs.
\newblock {\em Graphs Comb.}, 34(5):1089--1099, 2018.
\newblock \href {https://doi.org/10.1007/S00373-018-1932-6}
  {\path{doi:10.1007/S00373-018-1932-6}}.

\bibitem{GaltierHNPU03}
J{\'{e}}r{\^{o}}me Galtier, Ferran Hurtado, Marc Noy, Stephane Perennes, and
  Jorge Urrutia.
\newblock Simultaneous edge flipping in triangulations.
\newblock {\em Int. J. Comput. Geom. Appl.}, 13(2):113--133, 2003.
\newblock \href {https://doi.org/10.1142/S0218195903001098}
  {\path{doi:10.1142/S0218195903001098}}.

\bibitem{GarciaHHTV09}
Alfredo Garc{\'{\i}}a, Ferran Hurtado, Clemens Huemer, Javier Tejel, and Pavel
  Valtr.
\newblock On triconnected and cubic plane graphs on given point sets.
\newblock {\em Comput. Geom.}, 42(9):913--922, 2009.
\newblock \href {https://doi.org/10.1016/J.COMGEO.2009.03.005}
  {\path{doi:10.1016/J.COMGEO.2009.03.005}}.

\bibitem{OlaverriHNT10}
Alfredo Garc{\'{\i}}a, Ferran Hurtado, Marc Noy, and Javier Tejel.
\newblock Augmenting the connectivity of outerplanar graphs.
\newblock {\em Algorithmica}, 56(2):160--179, 2010.
\newblock \href {https://doi.org/10.1007/S00453-008-9167-1}
  {\path{doi:10.1007/S00453-008-9167-1}}.

\bibitem{GutwengerMZ09b}
Carsten Gutwenger, Petra Mutzel, and Bernd Zey.
\newblock On the hardness and approximability of planar biconnectivity
  augmentation.
\newblock In {\em Proc.\ 15th Conference on Computing and Combinatorics
  ({COCOON})}, volume 5609 of {\em LNCS}, pages 249--257. Springer, 2009.
\newblock \href {https://doi.org/10.1007/978-3-642-02882-3_25}
  {\path{doi:10.1007/978-3-642-02882-3_25}}.

\bibitem{GutwengerMZ09a}
Carsten Gutwenger, Petra Mutzel, and Bernd Zey.
\newblock Planar biconnectivity augmentation with fixed embedding.
\newblock In {\em Proc.\ 20th Internatioal Workshop on Combinatorial
  Algorithms, ({IWOCA})}, volume 5874 of {\em LNCS}, pages 289--300. Springer,
  2009.
\newblock \href {https://doi.org/10.1007/978-3-642-10217-2_29}
  {\path{doi:10.1007/978-3-642-10217-2_29}}.

\bibitem{Hsu00}
Tsan{-}sheng Hsu.
\newblock On four-connecting a triconnected graph.
\newblock {\em J. Algorithms}, 35(2):202--234, 2000.
\newblock \href {https://doi.org/10.1006/JAGM.2000.1077}
  {\path{doi:10.1006/JAGM.2000.1077}}.

\bibitem{Hudak12}
D\'avid Hud\'ak, Tom\'a\v{s} Madaras, and Yusuke Suzuki.
\newblock On properties of maximal 1-planar graphs.
\newblock {\em Discussiones Mathematicae Graph Theory}, 32:737--747, 2012.
\newblock \href {https://doi.org/10.7151/dmgt.1639}
  {\path{doi:10.7151/dmgt.1639}}.

\bibitem{HurtadoT13}
Ferran Hurtado and Csaba~D. T\'oth.
\newblock Plane geometric graph augmentation: a generic perspective.
\newblock In J\'anos Pach, editor, {\em Thirty Essays on Geometric Graph
  Theory}, pages 327--354. Springer, 2013.
\newblock \href {https://doi.org/10.1007/978-1-4614-0110-0_17}
  {\path{doi:10.1007/978-1-4614-0110-0_17}}.

\bibitem{JacksonJ05a}
Bill Jackson and Tibor Jord{\'{a}}n.
\newblock Independence free graphs and vertex connectivity augmentation.
\newblock {\em J. Comb. Theory {B}}, 94(1):31--77, 2005.
\newblock \href {https://doi.org/10.1016/J.JCTB.2004.01.004}
  {\path{doi:10.1016/J.JCTB.2004.01.004}}.

\bibitem{JohansenRT25}
Kasper~Skov Johansen, Eva Rotenberg, and Carsten Thomassen.
\newblock Edge-connectivity augmentation of simple graphs.
\newblock {\em {SIAM} J. Discret. Math.}, 39(1):163--169, 2025.
\newblock \href {https://doi.org/10.1137/23M1574245}
  {\path{doi:10.1137/23M1574245}}.

\bibitem{Jordan95}
Tibor Jord{\'{a}}n.
\newblock On the optimal vertex-connectivity augmentation.
\newblock {\em J. Comb. Theory, Ser. {B}}, 63(1):8--20, 1995.
\newblock \href {https://doi.org/10.1006/JCTB.1995.1002}
  {\path{doi:10.1006/JCTB.1995.1002}}.

\bibitem{Kant96}
Goos Kant.
\newblock Augmenting outerplanar graphs.
\newblock {\em J. Algorithms}, 21(1):1--25, 1996.
\newblock \href {https://doi.org/10.1006/JAGM.1996.0034}
  {\path{doi:10.1006/JAGM.1996.0034}}.

\bibitem{Kant97}
Goos Kant.
\newblock A more compact visibility representation.
\newblock {\em International Journal of Computational Geometry \&
  Applications}, 07(03):197--210, 1997.
\newblock \href {https://doi.org/10.1142/S0218195997000132}
  {\path{doi:10.1142/S0218195997000132}}.

\bibitem{KantB91}
Goos Kant and Hans~L. Bodlaender.
\newblock Planar graph augmentation problems.
\newblock In {\em Proc. 2nd Workshop on Algorithms and Data Structures
  ({WADS})}, volume 519 of {\em LNCS}, pages 286--298. Springer, 1991.
\newblock \href {https://doi.org/10.1007/BFB0028270}
  {\path{doi:10.1007/BFB0028270}}.

\bibitem{KawarabayashiO15}
Ken{-}ichi Kawarabayashi and Kenta Ozeki.
\newblock 4-connected projective-planar graphs are {H}amiltonian-connected.
\newblock {\em J. Comb. Theory {B}}, 112:36--69, 2015.
\newblock \href {https://doi.org/10.1016/J.JCTB.2014.11.006}
  {\path{doi:10.1016/J.JCTB.2014.11.006}}.

\bibitem{kloks1994treewidth}
Ton Kloks.
\newblock {\em Treewidth: Computations and Approximations}, volume 842 of {\em
  LNCS}.
\newblock Springer, 1994.
\newblock \href {https://doi.org/10.1007/BFB0045375}
  {\path{doi:10.1007/BFB0045375}}.

\bibitem{L83}
Frank~Thomson Leighton.
\newblock {\em Complexity Issues in VLSI: Optimal Layouts for the
  Shuffle-Exchange Graph and Other Networks}.
\newblock MIT Press, 1983.
\newblock URL:
  \url{https://mitpress.mit.edu/9780262621786/complexity-issues-in-vlsi/}.

\bibitem{Lichtenstein1982PlanarFA}
David Lichtenstein.
\newblock Planar formulae and their uses.
\newblock {\em SIAM J. Comput.}, 11(2):329--343, 1982.
\newblock \href {https://doi.org/10.1137/0211025} {\path{doi:10.1137/0211025}}.

\bibitem{MoriNO03}
Ryuichi Mori, Atsuhiro Nakamoto, and Katsuhiro Ota.
\newblock Diagonal flips in hamiltonian triangulations on the sphere.
\newblock {\em Graphs Comb.}, 19(3):413--418, 2003.
\newblock URL: \url{https://doi.org/10.1007/s00373-002-0508-6}, \href
  {https://doi.org/10.1007/S00373-002-0508-6}
  {\path{doi:10.1007/S00373-002-0508-6}}.

\bibitem{NagamochiE98}
Hiroshi Nagamochi and Peter Eades.
\newblock Edge-splitting and edge-connectivity augmentation in planar graphs.
\newblock In {\em Proc. 6th Integer Programming and Combinatorial Optimization
  ({IPCO})}, volume 1412 of {\em LNCS}, pages 96--111. Springer, 1998.
\newblock \href {https://doi.org/10.1007/3-540-69346-7\_8}
  {\path{doi:10.1007/3-540-69346-7\_8}}.

\bibitem{OzekiZ18}
Kenta Ozeki and Carol~T. Zamfirescu.
\newblock Every 4-connected graph with crossing number 2 is {H}amiltonian.
\newblock {\em {SIAM} J. Discret. Math.}, 32(4):2783--2794, 2018.
\newblock \href {https://doi.org/10.1137/17M1138443}
  {\path{doi:10.1137/17M1138443}}.

\bibitem{PachT97}
J\'anos Pach and G\'eza T\'oth.
\newblock Graphs drawn with few crossings per edge.
\newblock {\em Combinatorica}, 17:427--439, 1997.
\newblock \href {https://doi.org/10.1007/BF01215922}
  {\path{doi:10.1007/BF01215922}}.

\bibitem{Rappaport89}
David Rappaport.
\newblock Computing simple circuits from a set of line segments is
  {NP}-complete.
\newblock {\em {SIAM} J. Comput.}, 18(6):1128--1139, 1989.
\newblock \href {https://doi.org/10.1137/0218075} {\path{doi:10.1137/0218075}}.

\bibitem{RutterW12}
Ignaz Rutter and Alexander Wolff.
\newblock Augmenting the connectivity of planar and geometric graphs.
\newblock {\em J. Graph Algorithms Appl.}, 16(2):599--628, 2012.
\newblock \href {https://doi.org/10.7155/JGAA.00275}
  {\path{doi:10.7155/JGAA.00275}}.

\bibitem{Suzuki10}
Yusuke Suzuki.
\newblock Re-embeddings of maximum 1-planar graphs.
\newblock {\em {SIAM} J. Discret. Math.}, 24(4):1527--1540, 2010.
\newblock \href {https://doi.org/10.1137/090746835}
  {\path{doi:10.1137/090746835}}.

\bibitem{ThomasY94}
Robin Thomas and Xingxing Yu.
\newblock 4-connected projective-planar graphs are {H}amiltonian.
\newblock {\em J. Comb. Theory {B}}, 62(1):114--132, 1994.
\newblock \href {https://doi.org/10.1006/JCTB.1994.1058}
  {\path{doi:10.1006/JCTB.1994.1058}}.

\bibitem{Toth12}
Csaba~D. T\'oth.
\newblock Connectivity augmentation in planar straight line graphs.
\newblock {\em European J. Combinatorics}, 33:408--425, 2012.
\newblock \href {https://doi.org/10.1016/j.ejc.2011.09.002}
  {\path{doi:10.1016/j.ejc.2011.09.002}}.

\bibitem{TraubZ22}
Vera Traub and Rico Zenklusen.
\newblock Local search for weighted tree augmentation and {S}teiner tree.
\newblock In {\em Proc. 33rd {ACM-SIAM} Symposium on Discrete Algorithms
  ({SODA})}, pages 3253--3272, 2022.
\newblock \href {https://doi.org/10.1137/1.9781611977073.128}
  {\path{doi:10.1137/1.9781611977073.128}}.

\bibitem{Tutte56}
William~T. Tutte.
\newblock A theorem on planar graphs.
\newblock {\em Trans. Amer. Math. Soc.}, 82:99--116, 1956.
\newblock \href {https://doi.org/10.1090/s0002-9947-1956-0081471-8}
  {\path{doi:10.1090/s0002-9947-1956-0081471-8}}.

\bibitem{Vegh11}
L{\'{a}}szl{\'{o}}~A. V{\'{e}}gh.
\newblock Augmenting undirected node-connectivity by one.
\newblock {\em {SIAM} J. Discret. Math.}, 25(2):695--718, 2011.
\newblock \href {https://doi.org/10.1137/100787507}
  {\path{doi:10.1137/100787507}}.

\bibitem{WatanabeN87}
Toshimasa Watanabe and Akira Nakamura.
\newblock Edge-connectivity augmentation problems.
\newblock {\em J. Comput. Syst. Sci.}, 35(1):96--144, 1987.
\newblock \href {https://doi.org/10.1016/0022-0000(87)90038-9}
  {\path{doi:10.1016/0022-0000(87)90038-9}}.

\bibitem{WatanabeN93}
Toshimasa Watanabe and Akira Nakamura.
\newblock A minimum 3-connectivity augmentation of a graph.
\newblock {\em J. Comput. Syst. Sci.}, 46(1):91--127, 1993.
\newblock \href {https://doi.org/10.1016/0022-0000(93)90050-7}
  {\path{doi:10.1016/0022-0000(93)90050-7}}.

\end{thebibliography}

\end{document}